\pgfplotsset{compat=1.17,colormap/blackwhite}
\theoremstyle{plain}
\newtheorem{thm}{Theorem}[section]
\newtheorem{lem}[thm]{Lemma}
\newtheorem{prop}[thm]{Proposition}
\theoremstyle{definition}
\newtheorem{defn}[thm]{Definition}
\theoremstyle{remark}
\begin{document}


\title{A dualization approach to the Ground State Subspace Classification of Abelian Higher Gauge Symmetry Models.}

 \author{J. Lorca Espiro} 
 \email{javier.lorca@ufrontera.cl}
 \altaffiliation[]{Departamento de Ciencias F\'{i}sicas, Universidad de la Frontera, Avda. Francisco Salazar 01145, Casilla 54-D Temuco, Chile.}

\date{\today}

\begin{abstract}
In the literature, abelian higher gauge symmetry models are shown to be valid in all finite dimensions and exhibit the characteristic behavior of SPT phases models. While the ground state degeneracy and the entanglement entropy were thoroughly studied, the classification of the ground state space still remained obscure. Based on differentio-geometric approach and, anticipating the notation of the current paper,  if $\left( C_{\bullet} , \partial^C_{\bullet} \right)$ is the chain complex associated to the geometrical content of these models,  while $\left( G_{\bullet} , \partial^G_{\bullet} \right)$ is its symmetries counterpart, we show that the ground state space is classified by a $H^0 (C,G) \times H_0 (C,G)$ group, where $H^0(C,G)$ is the $0$-th cohomology and $H_0 (C,G)$ is the corresponding $0$-th homology group with coefficients in the chain complex. 
\end{abstract}

\pacs{Valid PACS appear here}
\keywords{Higher Gauge Theories; Topological Order; Ground State Degeneracy ; Quantum Error Correction Codes ;  }
\maketitle
\newpage

\section{\label{sec:intro}Introduction}

Broadly speaking, the standard approach to the topological phases classification problem has been to study Topological Quantum Field Theories (TQFTs) modelled after some specific physically motivated symmetry. For instance, the renowned Dijkgraaf-Witten classification is based on the existence of a global symmetry modelled by a gauge group \cite{Witten,Atiyah88,Dijkgraaf89topologicalgauge}. This approach has lead to the notion of invertible topological phases \cite{Freed:2016rqq,Freed:2019jzd}, from which the symmetry protected topological (SPT) phases \cite{Wen-Gu09,Chen12,Oshikawa10,Oshikawa12,Fidkowsky11,LevinGu,Senthil_2015,Gaiotto_2019} subclass has been thoroughly studied. The latter can be characterized as equivalence classes of gapped systems with a unique $G$-symmetric ground state when embedded on closed manifolds. The classification itself is given by bordism groups (basically cohomology)\cite{Chen-Wen13,Kapustin14,Barkeshli:2021ypb}.

Moreover, the idea of symmetry has also been generalized as to include the so called \textit{higher symmetries} into the classification scheme. Interesting results have been achieved by categorical methods (See \cite{Kong:2022cpy} and the references therein), particularly in models based on $2$-groups and $3$-groups \cite{Kapustin13,Radenkovi__2019,Ritter_2016,Gastel_2019,Zucchini_2017,Carow_Watamura_2016}, where SPT phases with higher symmetries have been developed and studied. Somewhat related to the latter, we build upon the results shown in \cite{dealmeida2017topological,Ibieta_Jimenez_2020}, where a large class of models with higher (abelian) symmetries with SPT phases were presented. The models are a direct generalization of the Toric Code (TC) or, more specifically, the quantum double models (QDM) \cite{Beigi,Hu,Bullivant16,Bullivant17} in its abelian version, with the particularity of being well defined in all (finite countable) dimensions. As them, they describe gapped topological phases of matter with a degenerate ground state subspace. The most relevant results found so far can be summarized as: 
\begin{enumerate}[i]
\item The ground state degeneracy (GSD) is given by the cardinality of the cohomology classes having coefficients in the chain complex of abelian groups;
and
\item The bipartition entanglement entropy satisfy the area law and the sub-leading terms are explicitly dependent on the topology of the entangling manifold.
\end{enumerate}

One of the most attractive features of these models is the fact that it deeply intersects with other fields of study. Aside from the obvious interest of these as toy models from the condensed matter perspective, there is a well established amount of literature that discuss the connection of these models with fault-tolerant quantum computation codes. The link is specially patent in the case of $1$-gauge models embedded in $2D$ spaces, which in our terminology means \textit{having degrees of freedom associated to the links of a triangulated surface}, see, for instance, \cite{Yoshida16,Yoshida15,Yoshida17,Kitaev2,Nayak08,Freedman03,Hastings,CalderbankCSS}. It remains to show, however, if these higher dimensional models have novel effects interesting for quantum computation. A situation that will be elucidated once the study of the excited states is carried out, so far postponed for future works.

It is known that error correction codes can be studied as quantum CSS stabilizer codes \cite{Raussendorf_2006,Briegel_2009,Devitt}. The latter are characterized by the fact that the ground state subspace is used to encode quantum information. Therefore, characterizing and classifying the ground state subspace is paramount to its constructions. In our case, it is immediate that error correction codes are included in these abelian higher gauge symmetry models, and, moreover, the ground state space has been characterized and classified in this paper, opening the door for studies in this direction in the future. As a hint we are headed in the right direction, these higher version models can also be studied in terms of Homology, as it is the case with \cite{Bombin07,Bombin13,Bravyi14}, making these the  higher dimensional versions of the \textit{homological quantum error correction codes} \cite{Bravyi98,Vrana15,Anderson13hom}.

Considering the structure of the models presented (see Sec. \ref{sec:preliminaries}), if $\left( C_{\bullet} , \partial^C_{\bullet} \right)$ is the chain complex associated to the geometrical content, while $\left( G_{\bullet} , \partial^G_{\bullet} \right)$ is the chain complex associated to the symmetry(ies) content, the main result of this paper is establishing that the ground state space is classified by a $H^0 (C,G) \times H_0 (C,G)$ group, where $H^0(C,G)$ is the $0$-th cohomology group of Eq. \ref{pcohomology} and $H_0 (C,G)$ is the corresponding $0$-th homology group of Eq. \ref{phomology}. We hope to have come close in trying to reach a balance between a \textit{rigorously formal} and a \textit{physically heuristic} style in order to not to alienate any of the possible readers interested in these results.

The paper has been organized as follows: Sec. \ref{sec:preliminaries} reviews the mathematical structure of the gauge configuration and gauge representation, as well as an explicit construction of their associated Hilbert spaces ($p$-Hilbert spaces) where the isomorphic nature of them is emphasized; Sec. \ref{sec:QDM} reviews the formalism presented in the aforementioned references, with an eye on the duality between the configuration and representation spaces that will pay off in the following sections; Sec. \ref{sec:topdesc} embarks on a characterization of the ground state subspace via homological and cohomological description, which by the end of the section are shown to be equivalent. An optional discussion on the explicit calculation of the ground state degeneracy for the interested reader; Sec. \ref{sec:classification} encompasses the main result of this paper, which is the classification of the ground state subspace via the classifying space technique; we finish the paper with a short section (Sec. \ref{sec:FinalRemarks}) summing up the results and discussing its possible consequences.

\section{\label{sec:preliminaries} Technical Preliminaries}

Let us consider two chain complexes $\left( C_{\bullet}, \partial^C_{\bullet} \right)$ and $\left( G_{\bullet}, \partial^G_{\bullet} \right)$:
\begin{align*}
\begin{array}{c}
\cdots \xrightarrow{} C_{n} \xrightarrow{\partial_n^C} C_{n-1} \xrightarrow{\partial_{n-1}^C} C_{n-2} \xrightarrow{} \cdots \\
\text{with} \quad \partial^C_{n-1} \circ \partial^C_n =0
\end{array}
\quad \quad \text{and} \quad \quad \begin{array}{c}
\cdots \xrightarrow{} G_{n} \xrightarrow{\partial_n^G} G_{n-1} \xrightarrow{\partial_{n-1}^G} G_{n-2} \xrightarrow{} \cdots \\
\text{with} \quad \partial^G_{n-1} \circ \partial^G_n = 0
\end{array}\quad .
\end{align*}
The notation is rooted in the fact that the first chain is usually related to the \textit{geometrical content}, while the second to the \textit{gauge content} of the models. However, more general models can also be studied within this structure. In a concrete fashion, we usually take the upper chain to be $C = C\left( X \right)$ the triangulation of some compact manifold $X$ with $\partial^C_{\bullet}$ being the usual boundary operator. In FIG. \ref{simplices} we can see the building blocks for the analysis of a surface, regarded as being isomorphic to a subset of $\mathbb{R}^2$. As a visual example, a sketched figure of a triangulated $T^2$ torus is shown in Fig. \ref{torus}.
\begin{figure}[h!]
\centering
\begin{subfigure}{.5\textwidth}
\centering
\begin{tikzpicture}
\fill (-2,0) circle (1.5pt) {};
\node (0) at (-2,-0.5) {$0$-simplex};
\fill (0,0) circle (1.5pt) {};
\fill (0,1) circle (1.5pt) {};
\draw (0,0) -- (0,1) {};
\draw[->] (0,0) -- (0,0.5) {}; 
\node (1) at (0,-0.5) {$1$-simplex};
\fill (1,0) circle (1.5pt) {};
\fill (2.5,0) circle (1.5pt) {};
\fill (1.5,1.5) circle (1.5pt) {};
\path[shade,draw] (1,0) -- (2.5,0) -- (1.5,1.5) -- cycle;
\draw[thick, ->] (1.95,0.5) arc (0:250:0.3);
\node (2) at (2,-0.5) {$2$-simplex};
\end{tikzpicture}
\caption{\label{simplices}Basic oriented simplices for the analysis of (2$D$) surfaces.}
\end{subfigure}
\begin{subfigure}{.45\textwidth}
\centering
\begin{tikzpicture}
    \begin{axis}[
       view={30}{60},axis lines=none,
       ]
       \addplot3[mesh,gray,
       samples=10,
       domain=0:2*pi,y domain=0:2*pi,
       z buffer=sort]
       ({(2+cos(deg(x)))*cos(deg(y))}, 
        {(2+cos(deg(x)))*sin(deg(y))}, 
        {sin(deg(x))});
\pgfplotsinvokeforeach{0,...,8}{        
 \addplot3[samples=10,gray,domain=0:360]
        ({(2+cos(x))*cos(x+#1*40)},
         {(2+cos(x))*sin(x+#1*40)},
         {sin(x)}); }       
   \end{axis}
\end{tikzpicture}
    \caption{\label{torus} A triangulated $T^2$ torus.}
\end{subfigure}
\caption{}
\end{figure}

More abstractly, we consider each $C_{\bullet}$ in the left chain complex to be the free abelian group generated by the finite set $K_{\bullet}$ such that $\left( C_{\bullet} , \partial^C_{\bullet} \right)$ is assumed to have a \textit{simplicial set} structure \cite{gelfand2013methods}. In order to avoid technicalities related to infinite dimensional Hilbert spaces later on, we restrict the right chain to have well defined group homomorphisms $\partial^G_{\bullet}$ between finitely generated abelian groups $G_\bullet$ (see the diagram above) satisfying the chain condition ($\partial^G_{n-1} \circ \partial^G_n = 0$), but otherwise no further structure is required at this level.

\subsection{\label{subsec:gaugeconfandrep}Gauge configurations and Gauge representations}

We call the sequence of morphisms $\omega_n : C_n\rightarrow G_{n-p}$ a $p$-map, where $p \in \mathbb{Z}$. In other words, there is a local assignment $ K_n \ni x \mapsto \omega_n \left( x \right) \in G_n$ which is performed by the simplicial pushforward $ x_* \omega := \omega \circ x = \omega_n (x)$ for all $x \in K_n$ . At the same time, the responsible for the localized compact support in the triangulated space $C\left( X \right)$ is the simplicial pullback  $x^* \left( y \right) = y_* x^* = \left\langle x | y \right\rangle_C := \delta \left(x,y\right)$ valid for all $x,y \in K$, where $\delta\left(\cdot,\cdot\right)$ is the Kronecker delta and the notation $\left\langle \cdot | \cdot \right\rangle_C$ signifies that this is indeed an inner product in the $\left( C_{\bullet} , \partial^C_{\bullet} \right)$ chain. Hence, we can decompose any $p$-map $\omega$ as the formal sum:  
\begin{align}\label{pmapdecomp}
\omega = \sum_{n , x \in K_n} \left(x_* \omega \right) \otimes x^* \quad ,
\end{align}
so that, for any chain $a = \sum_{n, y \in K_n} a_y y $ with $a_y \in \mathbb{Z}$ , we obtain the group ring element $\omega \left( a \right) = \sum_{n, y \in K_n} a_y \left(y_* \omega \right)$ , as expected. Writing $\text{Hom}(C_n,G_{n-p})$ for the set of all maps $\omega_n : C_n \rightarrow G_{n-p}$, we define: 
\begin{align}\label{homp}
\text{hom}(C,G)^p := \bigoplus_n \text{Hom}(C_n,G_{n-p}) \quad ,
\end{align}
which stans for the set of all of $p$-maps by construction. The latter becomes an abelian group under the binary operation:
\begin{align}\label{morphalg}
\left( \alpha + \beta \right)_n := \alpha_n + \beta_n \quad \text{with trivial identity morphism} \quad 0_n: C_n \rightarrow G_{n-p} \quad \forall  \; n \;\;\;\; ,
\end{align}
which turns this structure into a vector space (assuming trivial scalar multiplication) and provides a basis for $\text{hom}(C,G)^p$. It is convenient to think of the elements of this set as being represented by the diagram below: 
\begin{figure}[hb]
\begin{subfigure}{.49\textwidth}
  \centering
 \includegraphics[width=1\linewidth]{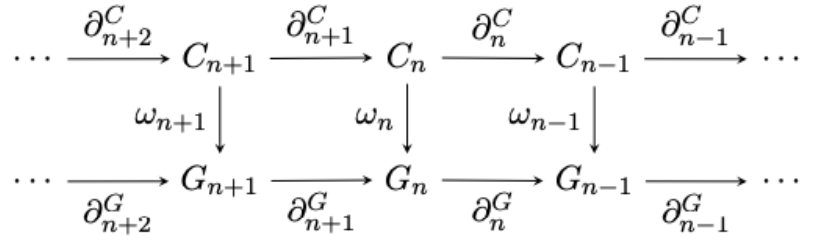}  
\caption{\label{fig:ChainMaps1} $\omega \in \text{hom} (C,G)^0$}
\end{subfigure}
\begin{subfigure}{.49\textwidth}
  \centering
  \includegraphics[width=1\linewidth]{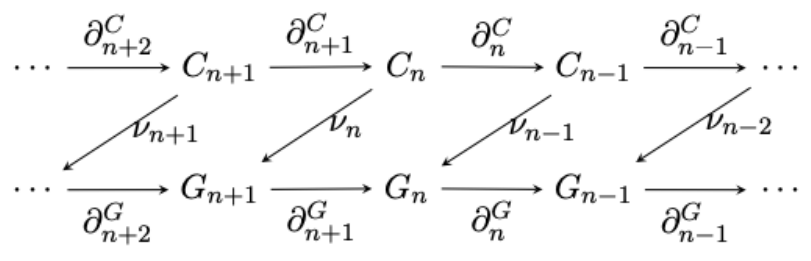}  
    \caption{\label{fig:ChainMaps2} $\nu \in \text{hom}(C,G)^{-1}$}
\end{subfigure}
\caption{\label{ChainMaps}Associate diagram for $p$-configurations}
\end{figure}

We follow the convention that for $p>0$ the lower chain $\left( G_{\bullet}, \partial^G_{\bullet} \right)$ is shifted $p$ steps to the right relative to the upper chain $\left( C_{\bullet}, \partial^C_{\bullet} \right)$ and similarly to the left for $p<0$.

In a similar rational, we call the sequence of morphisms $\hat{\rho}_n : C_n\rightarrow \hat{G}_{n-p}$ a $p$-map representation, where $\hat{G}_{n-p}$ is the set of irreducible representations of the elements of $G_{n-p}$ and $p \in \mathbb{Z}$. We decompose these as direct sums of compactly supported elements over $x \in K_n$ for all $n$:
\begin{align}\label{dualpmapdecomp}
\hat{\rho} = \sum_{n , x \in K_n} \left( x_* \hat{\rho} \right) \otimes x_* \quad \text{with} \quad  x_*\hat{\rho} := \hat{\rho} \circ x = \hat{\rho}_{n} \left( x \right) \in \hat{G}_{n-p} \quad \forall \;\; x \in K_n \quad ,
\end{align}
and we denote the set of all $p$-map representations as:
\begin{align}\label{cohomp}
\text{hom}(C,G)_p := \bigoplus_n \text{Hom}(C_n, \hat{G}_{n-p}) \quad .    
\end{align}

\begin{prop}[dual spaces isomorphism] \label{homiso} The spaces $\text{hom}(C,G)^p$ and $\text{hom}(C,G)_p$ satisfy the isomorphic relation $\text{hom} (C,G)^p \cong \text{hom} (C,G)_p$, $\forall p$.
\end{prop}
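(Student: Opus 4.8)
The plan is to establish Eq. \ref{homiso} one chain degree at a time and then assemble the direct sums. First I would note that, by the definitions in Eq. \ref{homp} and Eq. \ref{cohomp}, both sides are direct sums over $n$ of the summands $\text{Hom}(C_n, G_{n-p})$ and $\text{Hom}(C_n, \hat{G}_{n-p})$ respectively. Since a direct sum of isomorphisms is again an isomorphism, it suffices to produce, for each fixed $n$, an isomorphism $\text{Hom}(C_n, G_{n-p}) \cong \text{Hom}(C_n, \hat{G}_{n-p})$ of abelian groups (equivalently, of the $\mathbb{Z}$-modules carrying the operation of Eq. \ref{morphalg}) and then take the total map to be their direct sum over $n$.

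Next I would exploit that each $C_n$ is the free abelian group on the finite set $K_n$. A homomorphism out of a free abelian group is determined by, and may be prescribed arbitrarily through, its values on the generators; this is precisely the content of the decomposition $\omega = \sum_{x \in K_n}(x_* \omega) \otimes x^*$ in Eq. \ref{pmapdecomp}. Concretely this yields an identification $\text{Hom}(C_n, A) \cong \bigoplus_{x \in K_n} A$ for any abelian group $A$, under which the addition of Eq. \ref{morphalg} corresponds to the coordinatewise operation on the right. Applying this with $A = G_{n-p}$, and with $A = \hat{G}_{n-p}$ via the analogous decomposition Eq. \ref{dualpmapdecomp} for $\hat{\rho}$ (whose coordinate data $(x_* \hat{\rho})_{x \in K_n}$ lives in $\bigoplus_x \hat{G}_{n-p}$), reduces the whole problem to exhibiting $G_{n-p} \cong \hat{G}_{n-p}$.

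The crux is therefore the self-duality of the symmetry groups: because each $G_{n-p}$ is a finite abelian group, its set of irreducible representations $\hat{G}_{n-p}$ is exactly its character group, and the classical Pontryagin duality for finite abelian groups supplies an isomorphism $G_{n-p} \cong \hat{G}_{n-p}$, obtained by splitting $G_{n-p}$ into cyclic factors and dualizing each one through $\widehat{\mathbb{Z}/m} \cong \mathbb{Z}/m$. Chaining the three isomorphisms, $\text{Hom}(C_n, G_{n-p}) \cong \bigoplus_{x \in K_n} G_{n-p} \cong \bigoplus_{x \in K_n} \hat{G}_{n-p} \cong \text{Hom}(C_n, \hat{G}_{n-p})$, and then taking $\bigoplus_n$ establishes Eq. \ref{homiso} for every $p$.

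I expect the main subtlety to be bookkeeping rather than depth. The identity $G_{n-p} \cong \hat{G}_{n-p}$ is non-canonical, so the resulting isomorphism of $p$-map spaces depends on a choice of cyclic decomposition in each degree and is not functorial; one must only verify that the assembled map is a well-defined group isomorphism, which it is, since every step preserves the operation of Eq. \ref{morphalg}. It is worth emphasizing that no compatibility with the differentials $\partial^C_{\bullet}, \partial^G_{\bullet}$ is claimed or required here, as $\text{hom}(C,G)^p$ and $\text{hom}(C,G)_p$ are taken as plain degreewise objects. A secondary point to flag is the standing finiteness hypothesis: the argument genuinely uses that the $G_{\bullet}$ are finite, since for a nontrivial free factor the character group becomes a continuum and the degreewise isomorphism would break down; the statement is thus to be read under the finiteness restriction imposed earlier to keep the associated Hilbert spaces finite dimensional.
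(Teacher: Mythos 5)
Your proof is correct, and it rests on the same essential ingredient as the paper's: the self-duality $G_{n-p} \cong \hat{G}_{n-p}$ of finite abelian groups. Where you differ is in how that fact is promoted to the level of the $p$-map spaces. The paper's proof is a cardinality count: since $G_{\bullet} \cong \hat{G}_{\bullet}$, the finite sets $\text{hom}(C,G)^p$ and $\text{hom}(C,G)_p$ have the same number of elements, ``from which the result follows.'' Taken literally, equal cardinality of finite sets only yields a bijection, which for groups is not enough (e.g. $\mathbb{Z}/4\mathbb{Z}$ and $\mathbb{Z}/2\mathbb{Z} \oplus \mathbb{Z}/2\mathbb{Z}$ have the same order but are not isomorphic); the real content is that the self-duality can be applied factor by factor, and the paper leaves that step implicit. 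Your argument makes it explicit: you use freeness of $C_n$ on the finite set $K_n$ to identify $\text{Hom}(C_n, A) \cong \bigoplus_{x \in K_n} A$ (the content of the decompositions (\ref{pmapdecomp}) and (\ref{dualpmapdecomp})), insert the Pontryagin isomorphism coordinatewise, and assemble the direct sum over $n$, obtaining an honest isomorphism of abelian groups compatible with the operation (\ref{morphalg}) rather than a mere equality of cardinalities. Your two caveats are also well placed and sharpen the paper's statement: the isomorphism is non-canonical and no compatibility with $\partial^C_{\bullet}$, $\partial^G_{\bullet}$ is claimed, and finiteness of the $G_{\bullet}$ is genuinely needed --- the paper's phrase ``for any abelian group'' is too loose, since already for $G = \mathbb{Z}$ the character group is $U(1)$ and the degreewise comparison fails for finitely generated but infinite groups.
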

\begin{proof}
Since $G_{\bullet} \cong \hat{G}_{\bullet}$ for any abelian group, from (\ref{homp}) and (\ref{cohomp})  and the finiteness of these sets, we have $\left| \text{hom} (C,G)^p \right| = \left| \text{hom} (C,G)_p \right|$, from which the result follows.
\end{proof}

Analogously as with the configurations, the representation space also becomes a vector space by means of the binary operation:
\begin{align}\label{morphcoalg}
\left( \hat{\alpha} + \hat{\beta} \right)_n := \hat{\alpha}_n + \hat{\beta}_n \quad \text{with trivial representation} \quad \hat{0}_n: \omega_n \rightarrow \text{id}_{\text{GL}\left( V \right)_n} \quad \forall \; n \;\;\;\;.
\end{align}
Above we have denoted the identity element in the linear space $\text{GL}\left( V \right)_n$ as $\text{id}_{\text{GL}\left( V \right)_n}$. Thus, for any $\hat{\rho} \in \text{hom}(C,G)_p$ and $\omega \in \text{hom} (C,G)^p$, the expression:
\begin{align}
\nonumber \hat{\rho} \left( \omega \right) & = \sum_{x \in K} \left( x_* \hat{\rho}\right) \otimes x_* \circ \sum_{y \in K} \left( y_* \omega \right) \otimes y^* \\
\nonumber & = \sum_{x , y \in K} x_* \hat{\rho} \left( y_* \omega \right) \otimes \underbrace{x_* y^*}_{\delta \left( x , y \right)} = \sum_{x \in K} x_* \hat{\rho} \left( x_* \omega \right) \otimes 1 \\
\label{rho1} & \cong \sum_{x \in K} x_* \hat{\rho} \left( x_* \omega \right) := \sum_{x \in K} \hat{\rho} \left( \omega \right)_x \quad ,
\end{align} 
is well defined, where in the last line we have implicitly defined $\hat{\rho} \left( \omega \right)_x$ which stands as a shorthand of the representation $x_*\hat{\rho} \in \hat{G}_n$ of the element $x_* \omega \in G_n$ localized at $x \in K_n$.

Notice that the expression (\ref{rho1}) can be understood as the block diagonal matrix:
\begin{align}\label{rho2}
\hat{\rho} \left( \omega \right) = \left[ \begin{array}{ccccc}
    \hat{\rho} \left( \omega \right)_{x_1} & 0 & 0 & \cdots & 0 \\
    0 & \hat{\rho} \left( \omega \right)_{x_2} & 0 & \cdots &  0 \\
    0 & 0 & \hat{\rho} \left( \omega \right)_{x_2} & & \vdots \\
    \vdots & \vdots & & \ddots & 0 \\
    0 & 0 & \cdots & 0 & \hat{\rho} \left( \omega \right)_{x_{|K|}}
\end{array} 
\right] \quad \forall \; x_i \in K \quad .
\end{align}
The latter puts us in good foot to consider trace operators and consequently defining characters of these quantities, which we will explore below. In passing we can see that the expression (\ref{rho2}) is also a group homomorphism, satisfying:
\begin{align*}
\hat{\rho} \left( \omega_1 + \omega_2 \right) = \hat{\rho} \left( \omega_1 \right) + \hat{\rho} \left( \omega_2 \right) \quad \quad \text{and} \quad \quad \left[ \hat{\rho}_1 + \hat{\rho}_2 \right] \left( \omega \right) = \hat{\rho}_1 \left( \omega \right) + \hat{\rho}_2 \left( \omega \right) \quad ,  
\end{align*}
for all $\omega , \omega_1 , \omega_2 \in \text{hom} (C,G)^p$ and $\hat{\rho} , \hat{\rho}_1 , \hat{\rho}_2 \in \text{hom} (C,G)_p$. The previous discussion motivates a one-to-one correspondence between $p$-gauge configurations and $p$-gauge representations and the construction of some Hilbert space, so that quantum models can be defined on it. This will be done in the following subsection.

\subsubsection{\label{subsubsec:cohomology} Cohomology for the $p$-maps.}

In order to make explicit the topological features of these models a bit more of structure will be needed. The morphism $d^p:\text{hom} (C,G)^p \rightarrow \text{hom}(C,G)^{p+1}$ \cite{Brown} defined by:
\begin{align}\label{def:InternalHom}
\left( d^p \omega \right)_n = \omega_{n-1} \circ \partial^C_n - (-1)^p \partial^G_{n-p} \circ \omega_n 
\end{align}
turns $\left( \text{hom}(C,G)^\bullet  , d^\bullet \right)$ into a cochain complex:
\begin{align}\label{pseudochain1}
\cdots \xleftarrow{} \text{hom}(C,G)^{p+2} \xleftarrow{d^{p+1}} \text{hom}(C,G)^{p+1} \xleftarrow{d^{p}} \text{hom}(C,G)^{p} \xleftarrow{} \cdots \quad ,
\end{align}
i.e. the maps satisfy the relation $d^{p+1} \circ d^p=0$.

\begin{defn}[chain homotopy] \label{chainhomotopic1} Two elements  $\alpha, \beta \in \text{hom}(C,G)^p$ are said to be \textit{chain homotopic} if $\alpha = \beta + d^{p-1} \gamma$ for some $\gamma \in \text{hom}^{p-1}$.
\end{defn}
This relation defines an equivalence class in each degree of the cochain that we will denote by $\sim^{p}$ (the superscript $p$ will be omitted if no confusion arises). Hence, in equation (\ref{chainhomotopic1}), $\alpha$ and $\beta$ are said to be\footnote{Technically, chain homotopy is defined as the following sequence. Let $f, g : K_{\bullet} \rightarrow L_{\bullet}$ be two morphisms of chain complexes. A \textit{chain homotopy} between $f$ and $g$ is a sequence of homomorphisms
$h_n : K_n \rightarrow L_{n+1}$ such that $f_n - g_n = \partial^L_{n+1} \circ h_n + h_{n-1} 
\circ \partial^K_{n}$ for all $n \in  \mathbb{Z}$. This notion has been generalized in the obvious way.}  $p$-\textit{homotopic}, denoted $\alpha \sim^p \beta$ , so that cohomology groups can be defined in the standard way:

\begin{defn}[$p$-cohmology groups]\label{pcohomology} The cohomology of $G$ with coefficients in $C$ are defined through the quotient set $H^p(C,G) := \ker \left( d^p \right) / \text{im} \left( d^{p-1} \right)$, referred to as $p$-cohomology groups.
\end{defn}

\subsection{\label{subsec:hilb} Construction of the "$p$-Hilbert" spaces}

In this subsection we introduce the usual Dirac notation for quantum states and discuss the properties of the Hilbert spaces constructed from the $p$-maps and $p$-maps representations in a heuristic way. We will use the fact that a Hilbert space can always be constructed by means of defining an appropriate $C^*$ structure. The overall procedure is sketched as follows: Let $A$ be the $C^*$-algebra $C^* \left( G \right)$ (See App. \ref{app_GNS}), defined to be the $C^*$-enveloping algebra of $L^1 \left( G \right)$. Moreover, since $G$ is discrete, the well-definiteness of the $L^1$ norm is then ensured by having continuous representations. Consequently, any $*$-homomorphism from $\mathbb{C} \left[ G \right]$ to some $C^*$-algebra of bounded operators on some Hilbert space, factors through $\mathbb{C} \left[ G \right] \hookrightarrow C_{\max }^{*}(G)$. Additionally, for a field $k$, the spaces $k \left[ G_1 \right] \oplus_k  k \left[ G_2 \right] \cong k \left[ G_1 \oplus G_2 \right]$, which is inductively well defined for finite sums.

\subsubsection{\label{subsubsec:conf}The $p$-configuration basis}

From the previous discussion, and the matrix representation (\ref{rho2}), we can construct:
\begin{align}\label{hilbstates}
\text{hom} \left(C , G \right)^p \; \ni \; \omega \quad \mapsto \quad \ket{\omega} := \frac{1}{\left| \text{hom} (C,G)^p \right|^{1/2}} \sum_{\hat{\pi} \in \text{hom}(C,G)_p} \hat{\pi}^{\dagger} \left( \omega \right)  \quad ,
\end{align}
which is independent of the gauge representation and where the notation $\dagger$  stands for the conjugate transpose (as it is usually used in physics). This process results in the effective construction of a basis of a vector space that inherits the group structure from the $p$-maps. It is then easy to show from (\ref{hilbstates}) that its associated dual element is given by the linear functional:
\begin{align}\label{cohilbstates}
\text{hom} \left(C , G \right)^p \; \ni \; \omega \quad \mapsto \quad \bra{\omega}:= \frac{1}{\left| \text{hom} (C,G)^p \right|^{1/2}} \sum_{\hat{\pi} \in \text{hom}(C,G)_p} \text{tr} \circ \hat{\pi} \left( \omega \right) \circ \delta_{\hat{\pi}} \quad ,
\end{align}
where $\text{tr}$ stands for the trace operator. The latter can be used to show that the $(p)$ inner product:
\begin{align}
\nonumber \left\langle \nu | \omega \right\rangle_p & := \frac{1}{\left| \text{hom} (C,G)^p \right|} \sum_{\hat{\pi} \in \text{hom}(C,G)_p} \text{tr} \circ \hat{\pi} \left( \nu \right) \circ \delta_{\hat{\pi}} \sum_{\hat{\rho} \in \text{hom}(C,G)_p} \hat{\rho}^{\dagger} \left( \omega \right) \\
\label{innerproduct1} & = \frac{1}{\left| \text{hom} (C,G)^p \right|} \sum_{\hat{\pi} \in \text{hom}(C,G)_p} \left\langle \hat{\pi} \left( \nu \right) | \hat{\pi} \left( \omega \right) \right\rangle_F \quad ,
\end{align}
is well defined, where in the last line we have recognized the Frobenius inner product $\left\langle \cdot | \cdot \right\rangle_F$ for linear representations. 
It follows that (\ref{innerproduct1}) defines a norm by means of the Frobenius norm $\left\| \cdot \right\|_F$ given by $\left\| \omega \right\|^2_p := \left\langle \omega | \omega \right\rangle_p = \frac{1}{\left| \text{hom} (C,G)^p \right|} \sum_{\hat{\pi} \in \text{hom}(C,G)_p} \left\| \hat{\pi} \left( \omega \right) \right\|^2_F$. 
Summing up, we have constructed a huge set of inner product vector spaces of the form $\left( \text{span}_{\omega \in \text{hom}(C,G)^p} \left\{ \ket{\omega} \right\} , \left\langle \cdot | \cdot \right\rangle_p \right)$ from which a complete space can always be obtained by means of the Cauchy completion procedure.

\begin{defn}[$p$-Hilbert spaces]\label{hilbspace}
The Hilbert space of these models is defined by completion with respect to the norm above:
\begin{align*}
\mathcal{H}^p := \overline{\text{span}_{\omega \in \text{hom}(C,G)^p} \left\{ \ket{\omega} \right\}} = \bigoplus_{n, x \in K_n} \mathcal{H}^p_{n,x} \cong \bigoplus_{n, x \in  K_n} \mathbb{C} \left[G_{n-p}\right]_x\quad .
\end{align*}
\end{defn}

Where the last expression comes from thinking of each compactly supported space $\mathcal{H}^p_{n,x}$ as a $C^*$ structure and follow the procedure sketched above. We summarized this as the injection $\mathcal{H}^p_{n,x} \hookrightarrow \mathbb{C} \left[ G_{n-p} \right]_x$ , where the target stands for the group algebra of the abelian group $G_{n-p}$ compactly supported over the element $x \in K_n$. Hence, the dimension of the $p$-Hilbert spaces is finite and again given by:
\begin{align}\label{dimHilbert}
\text{dim}(\mathcal{H}^p) = | \text{hom} (C,G)^p | = \prod_n \left| G_{n-p} \right|^{\left| K_n \right|} < \infty \quad .
\end{align}
As it is known for Hilbert spaces, the inner product defines an isomorphism between $\mathcal{H}^p$ and its dual ${\mathcal{H}^p}^*$. In other words, as a byproduct we have $\text{dim} \left( {\mathcal{H}^p}^* \right) = \text{dim} \left( \mathcal{H}^p \right)$ .

\subsubsection{\label{subsubsec:repspace} The $p$-representation basis.}

There is, however, a second possibility for a well defined inner product space within this structure. We start by recalling that the characters of an irreducible representation $r$ of an element $g$ of a group $G$ are defined by the bounded linear functional $\chi_r \left( g \right) := \text{tr} \left( r \left( g \right) \right) \in \mathbb{C}$ \cite{isaacs1994character,fulton,serre}. The latter allows, by means of the Riesz representation theorem, to interpret the irreducible characters as a realization of the inner product $\chi_r \left( g \right) := \left\langle g , r \right\rangle := \left\langle r | g \right\rangle_G \;\; \in \;\; \mathbb{C}$ or, equivalently, $\chi_r : G \rightarrow \mathbb{C}$, which is reminiscent of the \textit{momentum eigenstates} of usual quantum mechanics \cite{cohen2018lecture}. The notation $\left\langle \cdot | \cdot \right\rangle_G$ indicates that this is indeed an inner product in the $\left( G_{\bullet} , \partial^G_{\bullet} \right)$ chain. In order to accommodate this perspective to our case, we first naively define the quantity:
\begin{align}\label{characinner}
\chi^{(p)}_{\hat{\pi}} \left( \omega \right) := \text{tr} \left( \hat{\pi} \left( \omega \right) \right)  \;\; \in \;\; \mathbb{C}  \quad \quad \text{for any} \quad \omega \in \text{hom}(C,G)^p \quad , \quad \hat{\pi} \in \text{hom}(C,G)_p \quad ,
\end{align}
as the $p$-\textit{character} of the $p$-gauge configuration $\omega$ in the $p$-gauge representation $\hat{\pi}$ and show that its properties coincide with that of the definition above.
This defines a skew-symmetric non-degenerate bilinear form, such that $\chi^{(p)}_{\hat{\pi}} \left( \cdot  \right) : \text{hom}(C,G)^p  \rightarrow  \mathbb{C}$ and $\chi^{(p)}_{ \left( \cdot \right)} \left( \omega \right) : \text{hom}(C,G)_p \rightarrow \mathbb{C}$ are well defined as maps. (We will systematically omit the superscript $(p)$ if no confusion arises). Consider now the following correspondence depending only on the gauge representation:
\begin{align}\label{cocohilbstates}
\text{hom}(C,G)_p \; \ni \; \hat{\pi} \quad \mapsto \quad \ket{\hat{\pi}} := \frac{\hat{\pi}^{\dagger} \left( 0 \right)}{\left\| \hat{\pi} \left( 0 \right) \right\|^{\frac{1}{2}}_F} \quad , \quad \bra{\hat{\pi}} := \text{tr} \circ \frac{\hat{\pi} \left( 0 \right)}{\left\| \hat{\pi} \left( 0 \right) \right\|^{\frac{1}{2}}_F} \circ \delta_{\hat{\pi}} \quad .
\end{align} 
It follows that $\left\langle \hat{\rho} | \hat{\pi} \right\rangle_p= \frac{\chi_{\hat{\pi}} \left( 0 \right)}{\left\| \hat{\pi} \left( 0 \right) \right\|_F}$ if $ \hat{\rho} = \hat{\pi}$ and zero otherwise.
So, if we interprete $\left\| \hat{\pi} \left( 0 \right) \right\|_F = \text{tr} \circ \hat{\pi}\left( 0 \right) \hat{\pi}^{\dagger}  \left( 0 \right) = \text{tr} \, \hat{\pi} \left( 0 \right) = \chi_{\hat{\pi}} \left( 0 \right) = \left| \text{hom}(C,G)^p \right|$, in analogy with its group counterpart, we obtain an orthonormal basis. We can do this since the graded group structure of $\left( \partial^G_{\bullet} , G_{\bullet} \right)$ is preserved by the trace operator, as is evident from Eq. (\ref{rho2}). By means of (\ref{hilbstates}) and (\ref{cohilbstates}), any $\omega \in  \text{hom}(C,G)^p$ and $\hat{\rho} \in \text{hom}(C,G)_p$, defines:
\begin{align}\label{mixedinnerprod} 
\left\langle \omega | \hat{\rho} \right\rangle_p 
= \frac{\text{tr} \, \hat{\rho} \left( \omega \right)}{\left\| \hat{\rho} \left( 0 \right) \right\|^{\frac{1}{2}}_F} := \frac{\chi_{\hat{\rho}} \left( \omega \right)}{\left| \text{hom}(C,G)^p \right|^{\frac{1}{2}}} \quad \text{and} \quad \left\langle \hat{\rho} | \omega \right\rangle_p  &
= \frac{\text{tr} \, \hat{\rho} \left( - \omega \right)}{\left\| \hat{\rho} \left( 0 \right) \right\|^{\frac{1}{2}}_F}  := \frac{\bar{\chi}_{\hat{\rho}} \left( \omega \right)}{\left| \text{hom}(C,G)^p \right|^{\frac{1}{2}}} \quad .
\end{align}
Furthermore, it is also straightforward to calculate:
\begin{align}\label{repinnerprod}
\sum_{\omega \in \text{hom}(C,G)^p} \frac{\bar{\chi}_{\hat{\pi}} \left( \omega \right) \chi_{\hat{\rho}} \left( \omega \right)}{\left| \text{hom}(C,G)^p \right|} = \sum_{\omega \in \text{hom}(C,G)^p} \frac{ \bar{\chi}_{\hat{\pi} - \hat{\rho}} \left( \omega \right)}{\left| \text{hom}(C,G)^p \right|} = \delta \left( \hat{\pi} , \hat{\rho} \right) = \left\langle \hat{\pi} | \hat{\rho} \right\rangle  \quad .
\end{align}
This is, the $p$-characters satisfy the firts Schur orthogonality relation \cite{isaacs1994character} for the charecters of irreducible representations, while the latter also serves as an alternative definition for the inner product in the \textit{gauge representation basis}. By abuse of terminology and, again in analogy with the group case, we indulge into call these $p$-gauge representations \textit{irreducible}, since the same irreducibility criterion is found in this context, i.e. $\left\| \hat{\pi} \right\|^2_p = 1$.

At this point, and similarly as before, we upgrade the inner product vector spaces $\left( \text{span}_{\hat{\rho} \in \text{hom}(C,G)_p} , \left\langle \cdot | \cdot  \right\rangle_p \right)$ (with the inner product defined as in (\ref{repinnerprod})) to a Hilbert space by a Cauchy completion procedure, i.e. $\hat{\mathcal{H}}^p := \overline{\text{span}_{\hat{\rho}} \left\{ \ket{\hat{\rho}} \right\}}$ for all $\hat{\rho} \in  \text{hom}(C,G)_p$. This implies:
\begin{align}\label{dimHilbertdual}
\text{dim}(\hat{\mathcal{H}}^p) = \left| \text{hom} (C,G)_p \right| = \left| \text{hom} (C,\hat{G})^p \right| = \prod_n \left| \hat{G}_{n-p} \right|^{\left| K_n \right|} = \text{dim}(\mathcal{H}^p)  \quad ,
\end{align}
where in the last step we have used the fact that $\hat{G}_{n-p} \cong G_{n-p}$ for abelian groups. Since both spaces are finite dimensional it follows that $\hat{\mathcal{H}}^p \cong \mathcal{H}^p$. As a consequence, the bases $\left\{  \ket{\hat{\rho}} \right\}_{\hat{\rho} \in \text{hom}(C,G)_p}$ and $\left\{ \ket{\omega} \right\}_{\omega \in \text{hom}(C,G)^p}$ are complete and orthonormal (the second by repeating the previous argument and using the second Schur orthogonality relation), allowing to write any state as in the following decompositions:
\begin{align}
\label{completerep} \mathcal{H}^p \; \ni \; &\ket{\Psi} = \sum_{\hat{\rho} \in \text{hom}(C,G)_p} \left\langle \hat{\rho} | \Psi \right\rangle \ket{\hat{\rho}}  \quad \quad \text{with} \quad \quad \mathbb{1}_{\mathcal{H}^p} = \sum_{\hat{\rho} \in \text{hom}(C,G)_p} \ket{\hat{\rho}} \otimes \bra{\hat{\rho}} \quad \quad  \text{and} \\
\label{completeconf} \mathcal{H}^p \; \ni \; &\ket{\Phi} = \sum_{\omega \in \text{hom}(C,G)^p} \left\langle \omega | \Phi \right\rangle \ket{\omega} \quad \quad \text{with} \quad \quad \mathbb{1}_{\mathcal{H}^p} = \sum_{\omega \in \text{hom}(C,G)^p} \ket{\omega} \otimes \bra{\omega} \quad .
\end{align}
where the second expression in each line is a resolution of the identity. 

\subsection{\label{subsec:homology} Induced Homology for $p$-gauge representations.}

For the rest of the paper we will write $\text{hom}(C,G)^p \mapsto \text{hom}^p$ and $\text{hom}(C,G)_p \mapsto \text{hom}_p$ to simplify the notation. Given the expressions (\ref{mixedinnerprod}), interpreted as inner product between the gauge configuration states and gauge representation states, we define the \textit{dual} operator $\hat{T}$ of $T \in \text{End} \left( \mathcal{H}^p \right)$ to be the unique adjoint operator with respect to this inner product. In other words, the operator $\hat{T}$ satisfies the relation $\left\langle \hat{\nu} | T | \omega  \right\rangle = \left\langle \hat{\nu} | T^* \omega  \right\rangle = \left\langle T_* \hat{\nu} | \omega \right\rangle := \left\langle \hat{T} \hat{\nu} | \omega \right\rangle$ for all $\omega \in \text{hom}^p$ and $\hat{\nu} \in \text{hom}_p$. Of all of the morphisms acting over $\text{hom}^p$, we are interested in the dual morphism of $d^p: \text{hom}^p \rightarrow \text{hom}^{p+1}$ , this is:
\begin{align}\label{compd_p}
\left\langle \hat{\nu} | d^p \omega \right\rangle = \left\langle \hat{d}^p \hat{\nu} | \omega \right\rangle := \left\langle d_{p+1} \hat{\nu} | \omega \right\rangle \quad ,
\end{align}
for all $\omega \in \text{hom}^p$ and $\hat{\nu} \in \text{hom}_{p+1}$ . Remark that in the last term we have implicitly defined $\hat{d}^p := d_{p+1}$ for our later convenience. The latter discussion leads to:

\begin{prop}[Compatibility for $d_p$ and $d_{p+1}$ maps]
In order for this map to be compatible, the diagram of FIG. \ref{commutdiagram}, defining a coproduct structure must be a commutative one.
\begin{figure}[!ht]
\begin{tikzcd}
\text{hom}_{p+1} \times \text{hom}^p \arrow{dr}{\varphi} \arrow[hook]{r}{d_{p+1} \otimes \sim} &  \text{im} \left( d_{p+1} \right) \times \text{im}\left( d^p \right)
\arrow[swap,dashrightarrow]{d}{\left\langle \cdot | \cdot \right\rangle_p} & \arrow[swap,hook]{l}{\sim \otimes d^p} \arrow[swap]{dl}{\varphi'} \text{hom}_p \times \text{hom}^{p-1} \\
& \sim 1 & 
\end{tikzcd}
\caption{\label{commutdiagram} Compatibility conditions for $d^p$ and $d_{p+1}$.}
\end{figure}
\end{prop}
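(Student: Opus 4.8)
The plan is to read the diagram of FIG.~\ref{commutdiagram} as the assertion that the representation boundary $d_{p+1}$ is, by construction, the adjoint of the configuration coboundary $d^p$ with respect to the pairing $\langle\cdot|\cdot\rangle_p$, and that this adjoint descends consistently to the chain-homotopy classes. The two hooked arrows $d_{p+1}\otimes\sim$ and $\sim\otimes d^p$ feed the representation and configuration factors into the middle object $\text{im}(d_{p+1})\times\text{im}(d^p)$, the dashed arrow evaluates $\langle\cdot|\cdot\rangle_p$, and commutativity of the two triangles $\varphi$ and $\varphi'$ is then exactly relation (\ref{compd_p}), namely $\langle\hat{\nu}\,|\,d^p\omega\rangle=\langle d_{p+1}\hat{\nu}\,|\,\omega\rangle$, read in the two possible orders.

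First I would establish existence and uniqueness of the adjoint. Because each $\mathcal{H}^p$ is finite dimensional by (\ref{dimHilbert}), and the pairing $\langle\cdot|\cdot\rangle_p$ is non-degenerate since the configuration and representation bases are orthonormal by (\ref{confortho}) and (\ref{repinnerprod}), the Riesz representation theorem guarantees that $d^p$ admits a unique adjoint, which we name $d_{p+1}:=\widehat{d^p}$. This is the content already fixed in (\ref{dualoprep})--(\ref{compd_p}), and it pins down both $\varphi$ and $\varphi'$ as the two sides of a single defining relation rather than as independent data.

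Next I would verify commutativity by direct computation. Expanding $\langle\hat{\nu}|d^p\omega\rangle$ with the mixed inner product (\ref{mixedinnerprod}) turns it into a character sum proportional to $\chi_{\hat{\nu}}(d^p\omega)$, and substituting the explicit coboundary (\ref{def:InternalHom}), $(d^p\omega)_n=\omega_{n-1}\circ\partial^C_n-(-1)^p\,\partial^G_{n-p}\circ\omega_n$, lets me transfer the two pieces onto $\hat{\nu}$ through the adjoints of $\partial^C$ and $\partial^G$ at the level of characters; collecting terms reproduces $\chi_{d_{p+1}\hat{\nu}}(\omega)$, i.e. the right triangle, while the left triangle is the same identity with the factors exchanged. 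Compatibility with the equivalence relation $\sim$, so that the maps descend to homotopy classes and both paths land in the trivial class $\sim 1$, then follows from the cochain condition $d^{p+1}\circ d^p=0$ and its adjoint $d_p\circ d_{p+1}=0$: if $\omega\sim^p\omega'$ then $\omega-\omega'\in\text{im}(d^{p-1})$, whence $d^p(\omega-\omega')=d^pd^{p-1}\gamma=0$, so the pairing is insensitive to the representative, and dually for $d_{p+1}$.

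The step I expect to be the genuine obstacle is the degree bookkeeping across the shifted complexes: the left pairing lives in degree $p+1$ while the right lives in degree $p$, so I must check that the normalizations $|\text{hom}^{p+1}|^{1/2}$ and $|\text{hom}^{p}|^{1/2}$ combine with the Schur orthogonality (\ref{repinnerprod}) so that the adjoint transfer is an exact equality and not merely a proportionality. Once the character orthogonality cancels these factors, the square closes and the coproduct reading becomes automatic, since the commuting diagram is precisely the transpose of the evaluation (product) pairing used to define $d^p$ in the first place.
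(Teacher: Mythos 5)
You should first be aware of what you are comparing against: the paper never proves this proposition. It is stated as a definitional requirement (``in order for this map to be compatible \ldots must be a commutative one''), and its only downstream use is the one-line appeal to ``the universal property of the diagram'' in the proof of the isomorphism (\ref{isomorphismd}). So your proposal is not competing with an argument in the paper; it is supplying the missing one, and your reading is the sensible one: both triangles $\varphi$, $\varphi'$ are the two sides of the single adjoint relation (\ref{compd_p}), fixed by (\ref{dualoprep}), so commutativity is equivalent to existence and uniqueness of the adjoint (note in passing that the right-hand arrow of FIG.~\ref{commutdiagram} is mislabelled in the paper --- $d^p$ does not act on $\text{hom}^{p-1}$ --- and you silently read it the only way that typechecks). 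Your three steps --- Riesz plus finite dimensionality (\ref{dimHilbert}) for existence/uniqueness, the explicit transfer of $\partial^C_{\bullet}$ and $\partial^G_{\bullet}$ across the character pairing, and descent to homotopy classes via $d^p \circ d^{p-1}=0$ --- are all consistent with the paper's constructions; indeed your middle computation is precisely how the paper later obtains the componentwise formula (\ref{def:InternalcoHom}), so the two routes agree in substance.

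The one genuine flaw sits exactly where you predicted the obstacle would be, but your proposed resolution of it is wrong. With the normalizations of (\ref{mixedinnerprod}), the left pairing carries $1/\left|\text{hom}^{p+1}\right|^{1/2}$ and the right one $1/\left|\text{hom}^{p}\right|^{1/2}$, and Schur orthogonality (\ref{repinnerprod}) cannot cancel these: it is a statement about averages over all of $\text{hom}^p$, not a rescaling of an individual character value. What is exactly true is the character-level identity $\chi_{\hat{\nu}}\left( d^p \omega \right) = \chi_{d_{p+1}\hat{\nu}}\left( \omega \right)$, which by non-degeneracy determines $d_{p+1}$ uniquely; the normalized inner products then agree only up to the fixed constant $\left( \left|\text{hom}^{p}\right| / \left|\text{hom}^{p+1}\right| \right)^{1/2}$. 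So you should either define the adjoint at the level of characters (which is all the next proposition needs, since only non-degeneracy enters its proof), or read the target ``$\sim 1$'' of the diagram as equality up to normalization --- the paper itself runs into the same constant in (\ref{generalboundary}), where pairing against the zero element yields $\left|\text{hom}(C,G)^{p+1}\right|^{1/2}$ rather than $1$. With that correction your proof stands; as written, the final cancellation claim is false.
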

Moreover,
\begin{prop}[]\label{isomorphismd} The sequence of maps $d^p$ and $d_{p+1}$ (defined above) satisfy the isomorphic condition $
\text{im} \left( d^p  \right) \cong \text{im} \left( d_{p+1} \right)$.
\end{prop}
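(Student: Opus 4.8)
The plan is to recognize that, under the dual-space isomorphisms of Eq.~(\ref{homiso}), the map $d_{p+1}$ is precisely the Hilbert-space adjoint of $d^p$, and then to invoke the standard finite-dimensional relationship between the image of an operator and the image of its adjoint. Concretely, the defining relation (\ref{compd_p}) reads $\left\langle \hat{\nu} | d^p \omega \right\rangle = \left\langle d_{p+1} \hat{\nu} | \omega \right\rangle$ for all $\omega \in \text{hom}^p$ and $\hat{\nu} \in \text{hom}_{p+1}$. Transporting everything into the configuration spaces via the isomorphisms $\text{hom}_{p+1} \cong \text{hom}^{p+1}$ and $\text{hom}_p \cong \text{hom}^p$, and using that the inner product identifies each $\mathcal{H}^{\bullet}$ with its own dual, this identity becomes exactly the defining property of the adjoint $\left( d^p \right)^{*}: \mathcal{H}^{p+1} \to \mathcal{H}^p$. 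I would therefore set $d_{p+1} = \left( d^p \right)^{*}$ as operators between finite-dimensional Hilbert spaces, the well-definiteness being guaranteed by the commutativity of the diagram in FIG.~\ref{commutdiagram} and the finiteness established in (\ref{dimHilbert}).

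First I would record the two standard facts about adjoints on finite-dimensional inner product spaces: the orthogonal-complement identity $\text{im}\left( \left( d^p \right)^{*} \right) = \left( \ker d^p \right)^{\perp}$, and the observation that the restriction $d^p\big|_{\left( \ker d^p \right)^{\perp}}$ is injective with image $\text{im}\left( d^p \right)$, hence a vector-space isomorphism onto $\text{im}\left( d^p \right)$. Chaining these yields
\begin{align*}
\text{im}\left( d_{p+1} \right) = \text{im}\left( \left( d^p \right)^{*} \right) = \left( \ker d^p \right)^{\perp} \cong \text{im}\left( d^p \right) \quad ,
\end{align*}
which is exactly (\ref{isomorphismd}). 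This route has the advantage of producing an \emph{explicit} isomorphism, namely the restriction of $d^p$ itself, rather than a bare dimension count, and it clarifies that although $\text{im}\left( d^p \right)$ and $\text{im}\left( d_{p+1} \right)$ sit in different ambient spaces ($\text{hom}^{p+1}$ and $\text{hom}_p$ respectively), the isomorphism between them is canonical once the pairing is fixed.

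An equivalent but less informative route is through the rank: since an operator and its adjoint always share the same rank, one has $\dim \text{im}\left( d^p \right) = \text{rank}\left( d^p \right) = \text{rank}\left( \left( d^p \right)^{*} \right) = \dim \text{im}\left( d_{p+1} \right)$, and any two finite-dimensional $\mathbb{C}$-spaces of equal dimension are isomorphic. I would keep this in reserve as a sanity check but present the orthogonal-complement version in the body.

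The step I expect to be the main obstacle is not the linear algebra, which is routine, but making rigorous the identification $d_{p+1} = \left( d^p \right)^{*}$ across the two differently graded families $\text{hom}^{\bullet}$ and $\text{hom}_{\bullet}$. One must verify that transporting $d_{p+1}$ through the isomorphisms of (\ref{homiso}) genuinely lands on the adjoint of $d^p$ and not on some twisted variant, and that this transport respects the coproduct structure encoded in FIG.~\ref{commutdiagram}. The sign $(-1)^p$ in the definition (\ref{def:InternalHom}) of $d^p$ is exactly where a careless identification could introduce an error, so I would confirm the adjoint relation degree-by-degree using the block-diagonal form (\ref{rho2}) and the Schur orthogonality (\ref{repinnerprod}) before concluding.
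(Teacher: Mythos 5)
Your proof is correct, but it runs along a more explicit route than the paper's. The paper disposes of this proposition in a single sentence: the isomorphism is declared ``immediate from the non-degenerate inner product $\left\langle \cdot | \cdot \right\rangle_p$ and the universal property of the diagram'' in FIG.~\ref{commutdiagram} --- an abstract appeal to the coproduct structure, with the actual mechanism left implicit. You instead take the defining relation (\ref{compd_p}) at face value as the statement that $d_{p+1}$ \emph{is} the adjoint of $d^p$ relative to the pairing, and then run the standard finite-dimensional argument
\begin{align*}
\text{im}\left( d_{p+1} \right) = \text{im}\left( \left( d^p \right)^{*} \right) = \left( \ker d^p \right)^{\perp} \cong \text{im}\left( d^p \right) \quad ,
\end{align*}
the last isomorphism being realized by the restriction of $d^p$ itself. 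Both arguments ultimately rest on the same ingredient --- non-degeneracy of the mixed pairing, without which ``adjoint'' is not even well defined --- but yours buys an explicit, canonical isomorphism and needs nothing beyond elementary linear algebra on the finite-dimensional spaces guaranteed by (\ref{dimHilbert}), whereas the paper's buys brevity and coherence with the categorical packaging (the coproduct diagram) that it reuses later; a reader could reasonably complain that the paper's appeal to a ``universal property'' is under-specified, and your version fills exactly that gap. One remark: the obstacle you flag at the end --- that the sign $(-1)^p$ in (\ref{def:InternalHom}) could corrupt the identification $d_{p+1} = \left( d^p \right)^{*}$ --- is not a genuine obstacle for this particular statement: an overall sign, or any degree-by-degree invertible twist, changes neither kernels nor images, so (\ref{isomorphismd}) is insensitive to it. Your degree-by-degree verification via (\ref{rho2}) and (\ref{repinnerprod}) is good hygiene, and it would matter for finer statements (e.g.\ chain-map or naturality claims), but it is not needed to conclude the isomorphism of images.
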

\begin{proof}
The latter is immediate from non-degenerate inner product $\left\langle \cdot | \cdot \right\rangle_p$ and the universal property of the diagram \ref{commutdiagram}. 
\end{proof}

Additionally, under these conditions, for any $\omega \in \text{hom}(C,G)^{p-1}$, the following chain of equivalences must hold:
\begin{align}\label{generalboundary}
\left| \text{hom}(C,G)^{p+1} \right|^{1/2} = \left\langle \hat{\nu} | d^{p} \circ d^{p-1} \omega \right\rangle = \left\langle d_{p+1} \hat{\nu} | d^{p-1} \omega \right\rangle  = \left\langle d_{p} \circ d_{p+1} \hat{\nu} |  \omega \right\rangle \quad .
\end{align}\label{holonomymap}
The latter implies that the relation $d_{p} \circ d_{p+1} = 0$ must be satisfied for all $\hat{\nu} \in \text{hom}(C,G)_{p+2}$. In other words, the pair $\left( \text{hom}(C,G)_\bullet  , d_\bullet \right)$ becomes a chain complex:
\begin{align}\label{pseudochain2}
\cdots \xrightarrow{} \text{hom}(C,G)_{p+1} \xrightarrow{d_{p+1}} \text{hom}(C,G)_{p} \xrightarrow{d_{p}} \text{hom}(C,G)_{p-1} \xrightarrow{} \cdots \quad .
\end{align}
It follows that $d_p$ can be used to define:

\begin{defn}[chain comohotopy]\label{chainhomotopic2} 
Two elements  $\hat{\alpha}, \hat{\beta} \in \text{hom}(C,G)_p$ are said to be a \textit{co-chain homotopic} if they satisfy $\hat{\alpha} = \hat{\beta} + d_{p+1} \hat{\gamma}$ for some $\gamma \in \text{hom}(C,G)_{p+1}$.
\end{defn}

The latter also defines an equivalence class in each degree of the chain,  denoted by $\sim_{p}$. Hence, homology groups can also be defined in the canonical way

\begin{defn}[$p$-homology groups]\label{phomology}
The homology of $G$ with coefficients in $C$ are defined through the quotient sets $H_p(C,G) := \ker \left( d_{p+1} \right) / \text{im} \left( d_{p} \right)$ and are referred to as $p$-homology groups.
\end{defn}

In an explicit way, the morphism $d_p$ can also be understood in its components as a $p$-map representation by means of equation (\ref{dualpmapdecomp}). So, the inner product above can be written as $
\left\langle\hat{\rho} | \omega \right\rangle \propto \sum_{x,y \in K} \left\langle x_* \rho | y_* \omega \right\rangle_G \otimes \left\langle x | y \right\rangle_C$, which is the direct product of two inner products. Thus, an analogous rationale to that of the case of the $p$-maps, leads to the factoring:
\begin{align}\label{def:InternalcoHom}
\left( d_{p+1} f \right)_n = \hat{\rho}_{n} \circ \hat{\partial}^C_n - (-1)^p \hat{\partial}^G_{n-p} \circ \hat{\rho}_{n-1} 
\end{align}
where now $\hat{\partial}^C_n$ is the adjoint with respect to the $\left\langle \cdot | \cdot \right\rangle_C$ inner product, and $\hat{\partial}^G_n$ is the adjoint with respect to the $\left\langle \cdot | \cdot \right\rangle_G$ inner product. This is, $\left\langle x | \partial^C_n y  \right\rangle = \left\langle \hat{\partial}^C_n x | y \right\rangle_C$ for all $x , y \in K_n$, satisfying $\hat{\partial}^C_n \circ \hat{\partial}^C_{n-1} = 0$, and $\left\langle a | \partial^G_n b  \right\rangle = \left\langle \hat{\partial}^G_n a | b \right\rangle_G$ for all $a , b \in G_n$ satisfying $\hat{\partial}^G_n \circ \hat{\partial}^G_{n-1} = 0$. 

\section{\label{sec:QDM} Formalism  for Abelian Higher Symmetry Models}

We understand a $(0-)$ \textit{gauge configuration} to be performed by an element $\omega \in \text{hom} (C,G)^0$. Similarly, a $(0-)$ \textit{gauge representation} is understood to be performed by an element $\hat{\rho} \in \text{hom} (C,G)_0$. In what follows, we omit the $p$-grade of the Hilbert spaces $\mathcal{H}$ since it is understood that we are working over $\mathcal{H}^0$. Having said this, we describe the action of the operators acting over the bases of this latter Hilbert space and, it is clear that general behavior is immediately extrapolated it by linearity. Thus:
\begin{defn}[Shift and clock operators]\label{def:PQoperators}
Given  $\alpha \in \text{hom}^0$ and $\hat{\beta} \in \text{hom}_0$, we define the \textit{shift} and \textit{clock} operators, respectively as $P^{\alpha} \ket{\omega} = \ket{P^{\alpha} \circ \omega} := \ket{\omega + \alpha}$ and $Q_{\hat{\beta}} \ket{\omega} = \ket{Q_{\hat{\beta}} \circ \omega} := \chi_{\hat{\beta}} \left( \omega \right) \ket{\omega}$ for all  $\omega \in \text{hom}^0$.
\end{defn}

It is evident that the \textit{clock} operator is diagonal in the configuration basis. Meanwhile, even if the \textit{shift} operator is not, it allows us to study these models in a \textit{Heisenberg picture} fashion by means of the expression $\ket{\omega} = P^{\omega} \ket{0}$, where $0 \in \text{hom}^0$ is the aforementioned \textit{trivial} map.

From (\ref{def:PQoperators}) we immediately obtain (so we omit the proofs):
\begin{prop}[algebra\footnote{The strict abelian quantum double algebra is recovered when considering the associated operators $P^{\omega}$ as above and $R^{\mu}:=\frac{1}{\left| \text{hom}_0 \right|} \sum_{\hat{\rho}} \bar{\chi}_{\hat{\rho}} \left( \mu \right) Q_{\hat{\rho}}$ for $\omega \, ,\, \mu \, \in \text{hom}^0$ and $\hat{\rho} \in \text{hom}_0$. See for instance \cite{Beigi}.} for the shift and clock operators]\label{PQalgebra} The operators  defined in (\ref{def:PQoperators}) satisfy the relations: 
\begin{align*}
P^{\alpha} P^{\alpha'} = P^{\alpha'} P^{\alpha} = P^{\alpha + \alpha'} \;\;\; , \;\;\; Q_{\hat{\beta}} Q_{\hat{\beta}'} = Q_{\hat{\beta}'} Q_{\hat{\beta}} = Q_{\hat{\beta} + \hat{\beta}'} \;\;\; , \;\;\;  Q_{\hat{\beta}} P^{\alpha} = \chi_{\hat{\beta}} \left( \alpha \right) P^{\alpha} Q_{\hat{\beta}} \quad ,
\end{align*}
for all $\alpha, \alpha'\in \text{hom}^0$ and $\hat{\beta}, \hat{\beta}'\in \text{hom}_0$. 
\end{prop}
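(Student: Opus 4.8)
All three identities in (\ref{PQalgebra}) are statements about operators on $\mathcal{H} = \mathcal{H}^{0}$, so by the completeness and orthonormality of the configuration basis established in (\ref{confortho})--(\ref{completeconf}) it is enough to verify each of them on an arbitrary basis ket $\ket{\omega}$ with $\omega \in \text{hom}^{0}$ and then extend by linearity. The whole argument rests on two inputs: that $(\text{hom}^{0},+)$ is an abelian group under the operation (\ref{morphalg}), and that the $p$-character (\ref{characinner}) is multiplicative in each of its two slots,
\begin{align*}
\chi_{\hat{\beta}+\hat{\beta}'}\left( \omega \right) = \chi_{\hat{\beta}}\left( \omega \right)\,\chi_{\hat{\beta}'}\left( \omega \right) \quad , \quad \chi_{\hat{\beta}}\left( \omega + \alpha \right) = \chi_{\hat{\beta}}\left( \omega \right)\,\chi_{\hat{\beta}}\left( \alpha \right) \quad .
\end{align*}

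The shift relation is purely group-theoretic. Applying the definition (\ref{def:PQoperators}) twice gives $P^{\alpha}P^{\alpha'}\ket{\omega} = \ket{\omega+\alpha'+\alpha}$, and commutativity of the operation (\ref{morphalg}) lets me read the argument as both $\omega+(\alpha+\alpha')$ and $\omega+(\alpha'+\alpha)$; this yields $P^{\alpha}P^{\alpha'} = P^{\alpha'}P^{\alpha} = P^{\alpha+\alpha'}$ simultaneously. For the clock operators I would exploit that $Q$ is diagonal in the configuration basis: $Q_{\hat{\beta}}Q_{\hat{\beta}'}\ket{\omega} = \chi_{\hat{\beta}}(\omega)\,\chi_{\hat{\beta}'}(\omega)\ket{\omega}$, and the first character identity collapses the scalar to $\chi_{\hat{\beta}+\hat{\beta}'}(\omega)$, giving $Q_{\hat{\beta}}Q_{\hat{\beta}'} = Q_{\hat{\beta}+\hat{\beta}'}$, with commutativity inherited from that of $\mathbb{C}$.

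The mixed relation is the only one that entangles the two families, and here I would compare $Q_{\hat{\beta}}P^{\alpha}\ket{\omega} = \chi_{\hat{\beta}}(\omega+\alpha)\ket{\omega+\alpha}$ against $P^{\alpha}Q_{\hat{\beta}}\ket{\omega} = \chi_{\hat{\beta}}(\omega)\ket{\omega+\alpha}$; factoring $\chi_{\hat{\beta}}(\omega+\alpha) = \chi_{\hat{\beta}}(\alpha)\,\chi_{\hat{\beta}}(\omega)$ through the second identity produces exactly $Q_{\hat{\beta}}P^{\alpha} = \chi_{\hat{\beta}}(\alpha)\,P^{\alpha}Q_{\hat{\beta}}$. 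Thus the three relations reduce to a handful of one-line evaluations on basis states.

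The content of the proof is therefore concentrated in the two character identities, and this is where I expect the only real subtlety. The multiplicativity in the representation label is already implicit in the Schur computation (\ref{repinnerprod}), being the dual (Pontryagin) group law on $\text{hom}_{0}$ whereby the additive operation (\ref{morphcoalg}) is realized as the pointwise product of characters. The multiplicativity in the argument is the genuinely load-bearing property: it amounts to reading each $\chi_{\hat{\beta}}$ as a group homomorphism $(\text{hom}^{0},+)\to U(1)$, which is precisely the defining feature of a character of the finite abelian groups $G_{n-p}$ and is consistent with the per-site factorization of $\mathcal{H}$ recorded in (\ref{hilbspace}). The care required is to employ $\chi$ in this multiplicative (per-site product) incarnation rather than as the raw matrix trace of (\ref{rho2}); once the two identities are fixed in that form, the relations in (\ref{PQalgebra}) follow from the computations above.
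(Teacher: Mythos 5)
Your proof is correct, and it coincides with what the paper intends: the paper omits this proof entirely, stating that the relations follow ``immediately'' from (\ref{def:PQoperators}), and your basis-state verification together with linearity is precisely that immediate argument. Your closing observation is also the right resolution of the one genuine subtlety: $\chi_{\hat{\beta}}$ must be read as a multiplicative character of the abelian group $\left( \text{hom}^0 , + \right)$ (the per-site product of local phases, i.e.\ the trace of the tensor-product of the one-dimensional local irreps) rather than as the literal trace of the block-diagonal matrix (\ref{rho2}), which would be additive over sites and would break both character identities; this multiplicative reading is the only one consistent with the paper's own uses of $\chi$ in (\ref{repinnerprod}) and (\ref{confortho}).
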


Requiring the operators defined in (\ref{def:PQoperators}) to be unitary implies that they satisfy $\left( P^{\alpha} \right)^{\dagger} = P^{-\alpha} $ and $\left( Q_{\hat{\alpha}} \right)^{\dagger} = Q_{-\hat{\alpha}}$ . Finally, it is a simple exercise to check that on the representation basis these operators reverse their role. This is, $Q_{\hat{\beta}} \ket{\hat{\alpha}} = \ket{\hat{Q}_{\hat{\beta}} \circ \hat{\alpha} } = \ket{\hat{\alpha} + \hat{\beta}}$ and $P^{\omega} \ket{\hat{\alpha}} = \ket{\hat{P}^{\omega} \circ \hat{\alpha}} = \bar{\chi}_{\hat{\alpha}} \left( \omega \right) \ket{\hat{\alpha}}$ for all $\hat{\alpha} , \hat{\beta} \in \text{hom}_0$ and $\omega \in \text{hom}^0$. Thus, the \textit{Heisenberg picture} in this basis is given by the clock operator, i.e. $\ket{\hat{\nu}} = Q_{\hat{\nu}} \ket{\hat{0}}$, where $\hat{0} \in \text{hom}_0$. Accordingly, throughout the paper, we will move freely between the configuration and representation bases when convenient.

\subsection{\label{subsec:module} Module structure over the configuration and representation bases.}

In order to define the dynamical operators suitable for our models in the next subsection, we will need to briefly describe the elements of $\text{hom}^{-1}$ and $\text{hom}_1$ and the mechanism in which they induce equivalence classes over the already discussed natural bases. Let us start with a general element $\alpha \in \text{hom}^{-1}$, represented diagrammatically in FIG. \ref{fig:ChainMaps2}.  Given $\text{hom}^0$, this set can be viewed as a $\hom^{-1}$-module by the action of the map $d^{-1}:\text{hom}^{-1} \rightarrow \text{hom}^0$. By extension, the Hilbert space $\mathcal{H}$ inherits the $\sim^0$ equivalence relation by performing $P^{d^{-1}\gamma}\ket{\beta} = \ket{\beta + d^{-1}\gamma} := \ket{\alpha}$, $\gamma \in \text{hom}^{-1}$. If this is the case we refer to $\ket{\alpha}$ and $\ket{\beta}$ as being \textit{gauge equivalent} and, by abuse of notation, we write $\ket{\alpha} \sim^0 \ket{\beta}$. Analogously, the set $\text{hom}_0$ can be viewed as a $\text{hom}_1$-module by the action of the map $d_1: \text{hom}_1 \rightarrow \text{hom}_0$. Consequently, over the Hilbert space $\hat{\mathcal{H}}$ is induced the $\sim_0$ equivalence when performing the operation $Q_{d_1\hat{\gamma}} \ket{\hat{\beta}} = \ket{\hat{\beta} + d_1 \hat{\gamma}} := \ket{\hat{\alpha}}$, $\hat{\gamma} \in \text{hom}_1$. Here, we call $\ket{\hat{\alpha}}$ and $\ket{\hat{\beta}}$ as \textit{co-gauge equivalent}, written $\ket{\hat{\alpha}} \sim_0 \ket{\hat{\beta}}$, where a general element of $\text{hom}_1$ corresponds to the diagram of FIG. \ref{fig:CoChainMaps2}. 
\begin{figure}[ht]
\begin{subfigure}{.49\textwidth}
  \centering
 \includegraphics[width=1\linewidth]{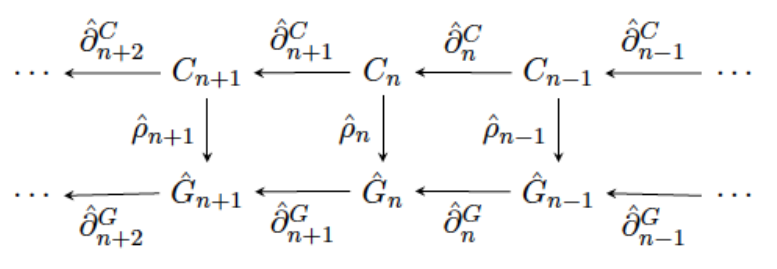}  
\caption{\label{fig:CoChainMaps} $\hat{\rho} \in \text{hom} (C,G)_0$}
\end{subfigure}
\begin{subfigure}{.49\textwidth}
  \centering
  \includegraphics[width=1\linewidth]{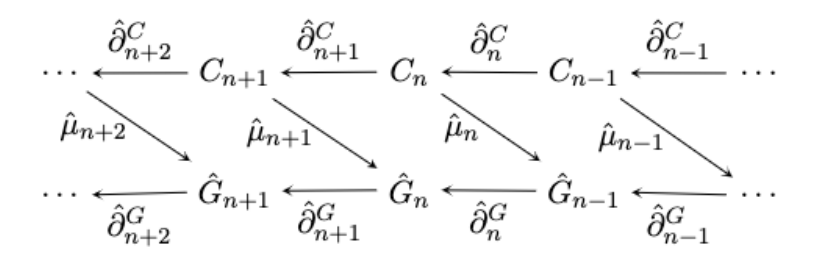}  
    \caption{\label{fig:CoChainMaps2} $\hat{\mu} \in \text{hom}(C,G)_1$}
\end{subfigure}
\caption{\label{CoCHainMaps}Associated diagram for $p$-representations.}
\end{figure}

It will be seen later on that it is this double module structure the one that is responsible for the topological order of these class of models. As of the usual topological order is concern, only the $\text{hom}^{-1}$ and $\text{hom}_1$ are sufficient to reproduce its effects. However, we leave open the possibility for studying other more general $\text{hom}^{-p}$ and $\text{hom}_p$-module structures ($p \neq 0,1$), since the formalism would allow it.

\subsection{\label{subsec:dynamics} Dynamics of the models}

We construct the operators: 
\begin{defn}[Fake-Gauge and Fake-Holonomy]
We define the \textit{fake-gauge} operator as the \textit{Fourier transform}:
\begin{align}
\label{Aop} A_{\hat{\rho}} := \mathcal{F} \left( P^{d^{-1}} \right) \left( \hat{\rho} \right) = \frac{1}{|\text{hom}^{-1}|}\sum_{\alpha\, \in \, \text{hom}^{-1}}\bar{\chi}_{\hat{\rho}}\left( \alpha \right)P^{d^{-1} \alpha} \quad & \text{for any} \quad \hat{\rho} \in \text{hom}_{-1} \quad ,
\end{align}
where $P^{d^{-1}}$ is the shift operator of equation (\ref{def:PQoperators}). Likewise, the \textit{fake-holonomy} operator is the one resulting from the \textit{inverse Fourier transform} of the clock operator:
\begin{align}
\label{Bop} B^{\omega} &:= \mathcal{F}^{-1} \left( Q_{d_1} \right) \left( \omega \right) = \frac{1}{|\text{hom}_{1}|}\sum_{\hat{\beta} \, \in \, \text{hom}_1} \bar{\chi}_{\hat{\beta}} \left( \omega \right) Q_{d_1 \hat{\beta}} \quad & \text{for any} \quad \omega \in \text{hom}^1 \quad ,
\end{align}
where $d_1:=\hat{d}^0$, which is dual relative to the \textit{character} inner product discussed in the last subsection.
\end{defn}

Explicitly, the operator $A_{\hat{\rho}}$ acting over an element $\ket{\omega} \in \mathcal{H}$ produces 
a weighted superposition of \textit{gauge equivalent} states. On the other hand, the fake-holonomy is diagonal over the elements $\ket{\nu} \in \mathcal{H}$.


\begin{prop}[Algebra of the Fake-Gauge and Fake-Holonomy operators]
For all $\hat{\rho},\hat{\rho}' \in \text{hom}_{-1}$ and $\omega, \omega'\in \text{hom}^1$, we can show that the operators (\ref{Aop}) and (\ref{Bop}) satisfy the algebra:
\begin{enumerate}
\item Pairwise commutation $\quad A_{\hat{\rho}} B^{\omega} = B^{\omega} A_{\hat{\rho}} \quad $,
 
\item Self-adjoint-ness $\quad A_{\hat{\rho}} = A^*_{\hat{\rho}} = A^{\dagger}_{\hat{\rho}}  \quad$ , $\quad B^{\omega} = \left( B^{\omega} \right)^* = \left( B^{\omega} \right)^{\dagger} \quad $, 

 
 
\item Orthogonality $\quad A_{\hat{\rho}} A_{\hat{\rho}'} = \delta \left(\hat{\rho},\hat{\rho}'\right) A_{\hat{\rho}'} \quad $ , $\quad B^{\omega} B^{\omega'} = \delta \left(\omega,\omega'\right) B^{\omega'} \quad $,


\item Completeness $\sum_{\hat{\rho}} A_{\hat{\rho}} = \mathbb{1}_{\mathcal{H}} \;\; , \;\; \sum_{\omega} B_{\omega} = \mathbb{1}_{\mathcal{H}}$ , where $\mathbb{1}_{\mathcal{H}}$ is the identity operator in ${\mathcal{H}}$ ,

 
In other words, this pair of operators are \textit{commuting orthogonal projectors}. 
\end{enumerate}
\end{prop}

The proof of this Proposition is an exercise in the orthogonal properties of characters and can be seen in the original paper \cite{dealmeida2017topological}. We emphasize that these operators act over compactly supported regions of the embedded space (in and around $x \in K$) by means of the decompositions (\ref{pmapdecomp})  and (\ref{dualpmapdecomp}), respectively. Following this reasoning and anticipating its connection with quantum codes, we can calculate:
\begin{align}\label{logAandB}
 \log_2 \left( A_{\hat{\rho}} \right) 
 = - \sum_{n, x \in K_n} \left( \mathbb{1}_{\mathcal{H}} - A_{\left(x_*\hat{\rho}\right)\otimes x_*} \right)  \;\; \text{and} \;\; \log_2 \left( B^{\omega} \right) 
 = - \sum_{n, x \in K_n} \left( \mathbb{1}_{\mathcal{H}} - B^{\left(x_*\omega\right)\otimes x^*} \right) \;\; ,
\end{align}
where we have expanded the natural logarithms using the Mercator series and we have used the fact that the localized terms $A_{\left( x_* \hat{\rho} \right) \otimes x_*}$ and $B^{\left( x_* \omega \right) \otimes x^*}$ for all $x \in K$ are projectors. Of particular interest to us is the trivial gauge operator $A_{\hat{0}}$ and trivial holonomy operator $B^{0}$. In order to understand why, let us first consider the state $\ket{\alpha} \in \mathcal{H}$ in the configuration basis. Here, the trivial gauge and holonomy operators of equations (\ref{Aop}) and (\ref{Bop}) act according to: 
\begin{align*}
A_{\hat{0}}\ket{\alpha} &= \dfrac{\sum_{\beta} P^{d^{-1} \beta}}{|\text{hom}^{-1}|} \ket{\alpha} = \dfrac{\sum_{\beta} \ket{\alpha + d^{-1} \beta}}{|\text{hom}^{-1}|} =\dfrac{1}{|\text{hom}^{-1}|}\sum_{\gamma \sim^0 \alpha} \ket{\gamma}  \;\;\;\;\;\;\;\;\; , \;\;\; \gamma \in \text{hom}^{-1} \quad , \\
B^{0}\ket{\alpha} &= \dfrac{\sum_{\hat{\gamma}} Q_{d_1 \hat{\gamma}}}{|\text{hom}_{1}|}\ket{\alpha} = \dfrac{\sum_{\hat{\gamma}} \chi_{\hat{\gamma}} \left( d^0 \alpha \right)}{{|\text{hom}_{1}|}} \ket{\alpha} = \delta \left( 0,d^0\alpha \right) \ket{\alpha} \quad \quad \quad \quad , \quad  \hat{\gamma} \in \text{hom}_{1} \quad . 
\end{align*}
On one hand, it can be seen that the second relation measures only the states in which $\alpha \in \ker d^0$, in other words with trivial (fake) holonomies or, equivalently, with flat connections. On the other hand, the first relation transforms the single state $\ket{\alpha}$ into a homogeneous superposition of states $\ket{\gamma}$ for all $\gamma \sim^0 \alpha \in \text{hom}^0$. Using the representation basis, for $\ket{\hat{\alpha}} \in \hat{\mathcal{H}}$ we obtain the reverse behavior $A_{\hat{0}} \ket{\hat{\alpha}} = \delta \left( \hat{0} ,d_0 \hat{\alpha} \right) \ket{\hat{\alpha}}$ and  $B^{0} \ket{\alpha} = \frac{1}{\left| \text{hom}_{1} \right|} \sum_{\hat{\gamma} \sim_0 \hat{\hat{\alpha}}} \ket{\hat{\gamma}}$ for $\hat{\gamma} \in \text{hom}_{1}$. In other words, on this basis $A_{\hat{0}}$ is measuring trivial co-gauge holonomies, and $B^0$ is producing a homogeneous superposition of states $\ket{\hat{\gamma}}$ for all $\hat{\gamma} \sim_0 \hat{\alpha} \in \text{hom}_0$. We use these to define:

\begin{defn}[Hamiltonian (\`{a} la Kitaev)] We define the Hamiltonian operator $H \in \text{End} \left( \mathcal{H} \right)$ as\footnote{The constant $\ln \left( 2 \right)$ is usually scaled to $1$ in most references, however we think this derivation is cleaner.}:
\begin{align}\label{def:Hamiltonian}
H := -\ln \left( A_{\hat{0}} B^{0} \right) = - \ln \left( 2 \right) \left[ \log_2 \left( A_{\hat0} \right) + \log_2 \left( B^{0}   \right) \right] \quad \quad .
\end{align}
\end{defn}
The minus sign is justified such that the Hamiltonian has a null eigenvalue whenever there is compactly supported gauge invariancy (enforced by $A_{\left(x_*\hat{0}\right) \otimes x_*}$)  and compactly supported trivial (fake) holonomy (enforced by $B_{\left(x_*0 \right) \otimes x^*}$). The more familiar expression is obtained when using equations (\ref{logAandB}). The projector nature of these compactly supported localized terms allows us to invert this expression and write:
\begin{align}\label{GSproj1}
 e^{-H} = A_{\hat{0}} B^{0} = B^{0} A_{\hat{0}} = \prod_{n,x \in K_n} A_{\left( x_* \hat{0} \right)\otimes x_*} \prod_{n,y \in K_n} B^{\left( y_* 0 \right)\otimes y^*} \quad .  
\end{align}

Denoting the algebra of observables as the closed set $\mathcal{O}=\mathcal{B}\left( \mathcal{H} \right)$. The generator of the dynamics \cite{1980applications,bratteli1987operator} is then the closure of the operator $\delta \left( O \right) := \left[ H , O    \right]$ valid for any $O \in \mathcal{O}$, where $H$ is the Hamiltonian operator defined in (\ref{def:Hamiltonian}). Thus, the time evolution is given by the expression $\mathcal{U}_t \left( O \right) := e^{i t \delta} \left( O \right) $. It easily follows that the dynamics of this class of systems, defined by the Hamiltonian (\ref{def:Hamiltonian}), performed by the Fake-Gauge and Fake-Holonomy operators, are time-independent.
The fact that these same operators perform time-independent dynamics over these models is completely desirable since the dynamical orbits defined by these flows remain unchanged.

\section{\label{sec:topdesc} Topological description of the Ground State Subspace}

This section mostly follows the treatment presented in \cite{dealmeida2017topological,Ibieta_Jimenez_2020}. We invite the reader to consult these references if further information on the formalism is needed. 
It follows from the discussion above that the spectrum of the Hamiltonian operator (\ref{def:Hamiltonian}) is bounded from below as $\sigma \left( H \right) \geq 0$ with equality when the condition:
\begin{align}\label{def:GS}
A_{\left(x_*\hat{0}\right)\otimes x_*} \ket{\Psi} = B^{\left(x_* 0\right)\otimes x^*} \ket{\Psi} = \ket{\Psi} \quad \quad , \quad \quad  \forall \;\; x \in K \quad , \quad \ket{\Psi} \in \mathcal{H}_0 \subset \mathcal{H} \quad ,
\end{align}
is satisfied. Where $\mathcal{H}_0$ is the \textit{ground state subspace} which can be defined by the closure of the span of all the states satisfying the condition above, i. e. $\mathcal{H}_0 = \overbrace{\left\{ \ket{\Psi} \right\}}$ such that $\ket{\Psi}$ satisfies (\ref{def:GS}).

\begin{defn}[Projector into the Ground state]
Let $\Pi_{\hat{0}}^{0} : \mathcal{H} \rightarrow \mathcal{H}_0$ be the operator:
\begin{align}\label{def:GSproj}
\Pi_{\hat{0}}^0 := e^{- H} \quad \quad \text{is a projector, such that} \quad \quad \Pi_{\hat{0}}^0 \ket{\Psi} =\begin{cases} 
\ket{\Psi} \;\; &, \;\;  \ket{\Psi} \in \mathcal{H}_0 \\
0 \;\; &, \;\; \text{otherwise} \\
\end{cases}
\quad ;
\end{align}
called the \textit{ground state projector}. 
\end{defn}

Notice that studying the eigenstates of this operator is equivalent to finding the states satisfying condition (\ref{def:GS}). Furthermore, the latter is precisely the operator defined in (\ref{GSproj1}) and trivially satisfies the commutation $\left[ \Pi_{\hat{0}}^0 , H \right]=0$, i.e. both operators have the same spectral resolutions. Given the general decomposition of an element of $\mathcal{H}$, we have $\Pi_{\hat{0}}^0 \ket{\Psi} = \sum_{\omega \in \text{hom}(C,G)^0} \Psi \left( \omega \right) \left(  \Pi_{\hat{0}}^0 \ket{\omega} \right)$. So, it is necessary and sufficient to study the behavior over the basis $\left\{ \omega \right\}_{\omega \in \text{hom}(C,G)^0}$  (or, equivalently, over the basis $\left\{ \ket{\hat{\rho}} \right\}_{\hat{\rho} \in \text{hom}(C,G)_0}$). Thus defined, the ground state degeneracy ($GSD$) can be computed by the expression: 
\begin{align}
\text{Tr} \left( \Pi_{\hat{0}}^0 \right) = \sum_i \lambda_i = GSD
\end{align}
where $\lambda_i$ are the eigenvalues of $\Pi_{\hat{0}}^0$ listed according to their algebraic multiplicity. In this case, $\lambda_i = 0,1$ out of being a projector and $\mu_A \left( 1 \right)$ (multiplicity of $\lambda_i = 1$) is exactly the $GSD$. At the same time, from (\ref{GSproj1}) we can calculate partition function associated to the embedded manifold $X$ to be: 
\begin{align}\label{partition}
 Z \left( X \right) = \text{Tr} \left( e^{-\beta H} \right) = \text{Tr} \left( \left( \Pi_{\hat{0}}^0 \right)^{\beta} \right) = \sum_i \lambda_i^{\beta} = {GSD}^{\beta} \quad \quad \text{with} \quad \quad  \beta = \frac{1}{\kappa_B T} \quad .   
\end{align}
It follows that, if the $GSD$ is topological, the partition function becomes topological as well, which is a general characteristic of TQFTs. Furthermore, the thermodynamic energy associated to these models is $\left\langle E \right\rangle = - \frac{\partial \ln Z \left( X \right)}{\partial \beta} = \ln \left( \frac{1}{GSD} \right)$, which, under the same assumptions, is also topological. However, notice that the usual thermodynamic entropy is always null $S = \kappa_B \left( \ln Z \left( X \right) + \beta \left\langle E \right\rangle\right) = 0$. Ergo, the necessity of defining new entropy like functions for the study of topological order. For a comprehensive approach on the context of these models see \cite{Ibieta_Jimenez_2020} where the topological entanglement entropy is defined and calculated.

\section{\label{ec:GSbases}Ground state subspace orthonormal bases}

In the following, we will make the best use of this dual structure so that we can study all possibilities at once. The situation is clear when one introduces some notation for later convenience. Let $\ket{\alpha_{\hat{\beta}}} := A_{\hat{\beta}} \ket{\alpha}$ (resp $\bra{\prescript{}{\hat{\beta}}{\alpha}}:=\bra{\alpha} A_{\hat{\beta}}$)  for $\alpha \in \text{hom}^0$ and $\hat{\beta} \in \text{hom}_{-1}$, so that in our models we can characterize any two \textit{gauge equivalent} states as satisfying $\ket{\alpha_{\hat{0}}} = \ket{\alpha'_{\hat{0}}}$ for $\alpha,\alpha' \in \text{hom}^0$, which can easily be verified to be valid if and only if $\alpha \sim^0 \alpha'$. Analogously, let $\ket{\hat{\alpha}^{\beta}} := B^{\beta} \ket{\hat{\alpha}}$ (resp.$\bra{\prescript{\beta}{}{\hat{\alpha}}} := \bra{\hat{\alpha}} B^{\beta}$) where $\hat{\alpha} \in \text{hom}_0$ and $\beta \in \text{hom}^{1}$. Consequently, two \textit{ cohomotopical} representations $\hat{\alpha} \sim_0 \hat{\beta} \in \text{hom}_0$ are \textit{fake co-gauge equivalent} if and only if $\ket{\hat{\alpha}^0} = \ket{\hat{\beta}^0}$. We first remark that the ground state subspace $\mathcal{H}_0$ is non-empty since it always has at least one element. Indeed, we can see this in the configuration basis when for the state $\ket{0_{\hat{0}}}$, which satisfies the condition $\Pi_{\hat{0}}^0 \ket{0_{\hat{0}}} = \ket{0_{\hat{0}}}$, i.e. $\ket{0_{\hat{0}}} \in \mathcal{H}_0$\footnote{The latter condition is sometimes referred in the condensed matter literature as \textit{frustration free} property \cite{wouters2021interrelations}. Concretely, given the Hamiltonian operator of equation (\ref{def:Hamiltonian}) we have the equivalent condition $H \Pi_{\hat{0}}^0 = 0$, fashionable in the quantum theory community \cite{Michalakis_2013}.} or, equivalently, in the representation basis, for the state $\ket{\hat{0}^{0}}$ that also satisfies the condition $\Pi_{\hat{0}}^0 \ket{\hat{0}^{0}} = \ket{\hat{0}^{0}}$, i.e. $\ket{\hat{0}^{0}} \in \mathcal{H}_0$. Since $\left[ \Pi^0_{\hat{0}} , P^{\omega} \right] = 0$ iff $\omega \in \ker \left( d^0 \right)$ (resp. $\left[ \Pi^0_{\hat{0}} , Q_{\hat{\rho}} \right] = 0$ iff $\hat{\rho} \in \ker \left( d_1 \right)$), the ground state subspace $\mathcal{H}_0$ turns out to be composed by states of the form: 
\begin{align}\label{groundstatesconf}
\ket{\omega_{\hat{0}}} = \left. P^{\omega} \ket{0_{\hat{0}}} \right|_{\omega \in \text{ker} \left( d^0 \right)}  \quad \quad ( \text{resp.} \quad \ket{\hat{\rho}^0} := Q_{\hat{\rho}} \left. \ket{\hat{0}^{0}} \right|_{\hat{\rho} \in \text{ker} \left( d_1 \right)} ) \quad \quad .  
\end{align} 
This turns out to be a complete characterization of $\mathcal{H}_0$, since every configuration is exhausted by the elements $\omega \in \text{hom}^0$ (resp. $\hat{\rho} \in \text{hom}_0$). Notice also that the state $\ket{0_{\hat{0}}} \in \mathcal{H}_0$ is intrinsic to all of the $\ket{\omega} \in \mathcal{H}_0$, this justifies $\ket{0_{\hat{0}}}$ being called the \textit{seed state for configurations} (resp. \textit{seed state for representations}), where its use can be seen in the literature, for example, in \cite{Padmanabhan:2021ywi}. The discussion so far has justified the claim that the elements of $\mathcal{H}_0$ can be classified by the equivalence classes of the relation $\sim^0$ (resp. $\sim_0$). This sets entail a natural basis for the ground state subspace so we are interested in the equivalence classes:
\begin{align}\label{0thcohomologyandhomology}
\ker \left( d^0 \right)/\sim^0 \; \cong \ker \left( d^0 \right)/\text{im} \left( d^{-1} \right) 
\quad \quad \text{and} \quad \quad \ker \left( d_1 \right)/ \sim_0 \; \cong \ker \left( d_1 \right)/\text{im} \left( d_0 \right) 
\quad \quad ,  
\end{align}
which follows from the fact that any two $\alpha, \beta \in \ker \left( d^0 \right)$ satisfying $\ket{\alpha_{\hat{0}}} = \ket{\beta_{\hat{0}}}$ (resp. $\hat{\alpha}, \hat{\beta} \in \text{ker} (d_0)$ satisfying $\ket{\hat{\alpha}^{0}} = \ket{\hat{\beta}^{0}}$), will belong to the same equivalence class if and only if $\alpha - \beta \in \text{im} \left( d^{-1} \right)$ (resp. ). More formally, the quotient space $\ker \left( d^0 \right)/\sim^0$ is characterized  by the commuting diagram shown in FIG.\ref{isomorphism1} (resp. $\ker \left( d_0 \right)/\sim_0$ in FIG. \ref{isomorphism2}), where the downward injection map is natural. The result then follows from the universal property. 
\begin{figure}[!ht]
\centering
\begin{subfigure}{.49\textwidth}
  \centering
\begin{tikzcd}
\text{ker} \left( d^{0} \right) \arrow[d,hook] \arrow{r}{d^{-1}} & \text{ker} \left( d^0 \right) / \text{im} \left( d^{-1} \right) \arrow{d}{\cong} \\
\text{ker} \left( d^{0} \right) / \sim^0 \arrow[ur,dashed] & H^0 \left( C, G \right)
\end{tikzcd}
\caption{ \label{isomorphism1} Commutative diagram for configurations. }
\end{subfigure}
\begin{subfigure}{.49\textwidth}
  \centering
\begin{tikzcd}
\text{ker} \left( d_1 \right) \arrow[d,hook] \arrow{r}{d_0} & \text{ker} \left( d_1 \right) / \text{im} \left( d_0 \right) \arrow{d}{\cong}  \\
\text{ker} \left( d_1 \right) / \sim_0 \arrow[ur,dashed] & H_0 \left( C, G \right)
\end{tikzcd}
\caption{ \label{isomorphism2} Commutative diagram for representations. }
\end{subfigure}
\caption{}
\end{figure}

This discussion leads to the following:
\begin{prop}[GSD]\label{thm:main} The dimension of the subspace $\mathcal{H}_0$ (GSD) is given by: 
\begin{enumerate}
\item  $\text{GSD} = |H^0(C,G)|$ (the main result of \cite{dealmeida2017topological}), or equivalently,
\item $\text{GSD}=\left| H_0 (C,G) \right|$.
\end{enumerate}
\end{prop}
It is clear that these expressions are topological in the strict mathematical sense, since they are defined by the specific cohomology group $H^0 (C,G)$ and homology group $H_0 (C,G)$. Then it is evident that a suitable basis for $\mathcal{H}_0$ can be labeled by the equivalence classes $\underline{\omega} \in H^0 (C,G)$, where $\omega$ is a representative element or, equivalently, by $\hat{\underline{\nu}} \in H_0 (C,G)$, where $\nu$ is a representative element. These are immediately orthonormal by construction. We can reconcile these two bases by profiting from the already obtained isomorphisms:

\begin{prop}[Compatibility between bases of the ground state subspace]\label{thm:mainhom} The bases $\left\{ \ket{\underline{\omega}} \right\}_{\underline{\omega} \in H^0}$ and $\left\{ \ket{\underline{\hat{\rho}}} \right\}_{\underline{\hat{\rho}} \in H_0}$ are complete and equivalent as bases for the ground state $\mathcal{H}_0$.
\end{prop}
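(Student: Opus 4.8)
The plan is to deduce both completeness and equivalence from the single observation that $\mathcal{H}_0$ is the image of one basis-independent operator. Since the ground state projector of (\ref{def:GSproj}) is $\Pi^0_{\hat{0}} = e^{-H}$, the subspace $\mathcal{H}_0 = \text{im}\left( \Pi^0_{\hat{0}} \right)$ is intrinsically defined, independently of whether we expand vectors in the configuration or in the representation basis. First I would recall that Sec.~\ref{confcohomology} already exhibits $\left\{ \ket{\underline{\omega}} \right\}_{\underline{\omega} \in H^0}$ as vectors of $\mathcal{H}_0$ --- by (\ref{groundstatesconf})--(\ref{0thcohomology}) they exhaust the gauge-inequivalent flat configurations --- and that they are mutually orthonormal by (\ref{confortho}); dually, Sec.~\ref{rephomology} exhibits $\left\{ \ket{\underline{\hat{\rho}}} \right\}_{\underline{\hat{\rho}} \in H_0}$ as an orthonormal family in the \emph{same} $\mathcal{H}_0$ via (\ref{groundstatesrep})--(\ref{0thhomology}) and the identification $\mathcal{H} \cong \hat{\mathcal{H}}$. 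Because both are orthonormal families whose spans are each shown to be all of $\text{im}\left( \Pi^0_{\hat{0}} \right)$, each is automatically complete, and their cardinalities must agree: $|H^0| = \dim \mathcal{H}_0 = |H_0|$, which already reconciles (\ref{thm:main}) with (\ref{thm:second}).

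To upgrade equal cardinality to an \emph{explicit} equivalence I would produce the change-of-basis matrix directly from the mixed overlap (\ref{mixedinnerprod}), defining on representatives $M(\underline{\omega}, \underline{\hat{\rho}}) := \langle \omega | \hat{\rho} \rangle = \chi_{\hat{\rho}}(\omega) / |\text{hom}^0|^{1/2}$, with $\omega \in \ker\left( d^0 \right)$ a representative cocycle of (\ref{groundstatesconf}) and $\hat{\rho}$ a representative cycle of (\ref{groundstatesrep}). The decisive step is that $M$ descends to the quotient classes. Shifting a cocycle $\omega \mapsto \omega + d^{-1}\gamma$ rescales $\chi_{\hat{\rho}}(\omega)$ by $\chi_{\hat{\rho}}(d^{-1}\gamma)$, which by the adjointness (\ref{compd_p}) equals the character of $\gamma$ against the adjoint boundary of $\hat{\rho}$ and hence is trivial because $\hat{\rho}$ is a cycle; shifting $\hat{\rho}$ by an element of $\text{im}\left( d_1 \right)$ rescales $\chi_{\hat{\rho}}(\omega)$ by $\chi_{\hat{\gamma}}(d^0 \omega) = 1$ since $\omega$ is a cocycle. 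Both cancellations are instances of $d_{p+1} = (d^p)^*$ together with the commuting diagram of FIG.~\ref{commutdiagram}, so $M$ is a well-defined map on $H^0 \times H_0$ and furnishes $\ket{\underline{\hat{\rho}}} = \sum_{\underline{\omega}} \overline{M(\underline{\omega}, \underline{\hat{\rho}})}\, \ket{\underline{\omega}}$.

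Finally I would argue that $M$ is unitary. Abstractly this is forced once both families are known to be orthonormal bases of the same $\mathcal{H}_0$, but I would also verify it intrinsically: the Schur orthogonality (\ref{repinnerprod}) shows that the rows and columns of $M$ are orthonormal, while the non-degeneracy of $\langle \cdot | \cdot \rangle_p$ combined with the isomorphism $\text{im}\left( d^p \right) \cong \text{im}\left( d_{p+1} \right)$ of (\ref{isomorphismd}) guarantees that the descended pairing is nondegenerate, hence $M$ invertible. The hard part will be precisely this nondegeneracy on the quotients: one must check that no nonzero cohomology class pairs trivially with every homology class, and it is here that (\ref{isomorphismd}) --- the statement that the coboundary and boundary images are matched by the inner product --- does the real work, rather than the comparatively routine descent and orthonormality checks. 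Securing nondegeneracy promotes the abstract cardinality equality $|H^0| = |H_0|$ to the concrete unitary $M$ realizing the duality $H^0 \cong H_0$, which is exactly the asserted equivalence of the two bases.
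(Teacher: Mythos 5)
Your proposal is correct, but it takes a genuinely different route from the paper. The paper's own proof never touches the Hilbert space: it applies the first isomorphism theorem to get $\text{im}\left( d^0 \right) \cong \text{hom}^{0} / \ker \left( d^0 \right)$ and $\text{im}\left( d_1 \right) \cong \text{hom}_0 / \ker \left( d_1 \right)$, then combines the dual-image isomorphism (\ref{isomorphismd}) with $\text{hom}^0 \cong \text{hom}_0$ from (\ref{homiso}) to deduce $\ker \left( d^0 \right) \cong \ker \left( d_1 \right)$ and hence $H^0 \cong H_0$, so the two GSD counts (\ref{thm:main}) and (\ref{thm:second}) agree and the representation basis is complete. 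You instead argue at the level of $\mathcal{H}_0$ itself: since $\mathcal{H}_0 = \text{im}\left( \Pi^0_{\hat{0}} \right)$ is defined independently of any basis and both families are orthonormal spanning sets of it, completeness and the cardinality equality $|H^0| = \dim \mathcal{H}_0 = |H_0|$ are immediate; you then construct the explicit intertwiner from the descended character pairing. That second step is the real added value: it is exactly the content the paper defers to Prop.~\ref{restrictionGS} and Prop.~\ref{Uequivalence}, where the unitary (\ref{transfunc}) is the matrix of characters you call $M$, and your descent computation (boundary shifts killed by adjointness (\ref{compd_p}) against cycles and cocycles) is what justifies that (\ref{mixedGSS}) is well defined on classes --- a point the paper asserts rather than proves. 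Two caveats: first, once restricted to $\mathcal{H}_0$ the correct normalization is $\left| H^0 \right|^{-1/2}$ as in (\ref{mixedGSS}), not $\left| \text{hom}^0 \right|^{-1/2}$; with your literal definition $M$ is only proportional to a unitary, though your abstract argument (two orthonormal bases of the same finite-dimensional space) is unaffected. Second, the cycle condition your descent actually uses is $d_0 \hat{\rho} = \hat{0}$ (the adjoint of $d^{-1}$), which matches the paper's ground-state condition $A_{\hat{0}} \ket{\hat{\alpha}} = \delta\left( \hat{0} , d_0 \hat{\alpha} \right) \ket{\hat{\alpha}}$ even though the paper's own index conventions in (\ref{phomology}) and Sec.~\ref{rephomology} are shifted; your phrasing ``because $\hat{\rho}$ is a cycle'' is right in substance, but be aware the paper's labels for which differential annihilates the cycles are not consistent, so state explicitly which kernel you are using. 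Finally, note that your observation that unitarity is ``forced'' abstractly is what makes the intrinsic nondegeneracy check optional rather than, as you put it, the hard part --- it is only essential if one wants the pairing itself, not the bases, to carry the duality.
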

\begin{proof}
Recall that, from (\ref{isomorphismd}) with $p=0$, we have shown that $\text{im} \, \left( d^{-1} \right) \cong \text{im} \, \left( d_0 \right)$. At the same time, from the first isomorphism theorem, we have $\text{im} \, \left( d^0 \right) \cong \text{hom}^{0} / \ker \left( d^0 \right)$ and $\text{im} \, \left( d_1 \right) \cong \text{hom}_0 / \ker \left( d_1 \right)$. Given that $\text{hom}^0 \cong \text{hom}_0$ (see (\ref{homiso}) for $p=0$), it follows that $\ker \left( d^0 \right) \cong \ker \left( d_1 \right)$ and, consequently $H^0 (C,G) \cong H_0 (C,G)$. Furthermore, we have:
\begin{align*}
\text{GSD} = |H_0 (C,G)| =|H^0(C,G)| \quad \quad .
\end{align*}
In other words, the gauge representation basis is also a complete basis for $\mathcal{H}_0$.
\end{proof}

 This dualization between homology and cohomology falls into a more general framework. In fact, given the definitions used in this paper, the same rationale leading to (\ref{thm:mainhom}), implies:
\begin{prop}[duality for $p$-homology and $p$-cohomology] \label{pduality} The mathematical structure of these  abelian systems imply the isomorphism $\hat{H}^p (C,G) \cong H^p (C,G) \cong H_p (C,G)$.
\end{prop}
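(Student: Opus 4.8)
The plan is to establish the three-term chain as two independent isomorphisms, $\hat{H}^p \cong H^p$ and $H^p \cong H_p$, and then compose them. Throughout I would exploit that every group in sight is finite, so that order-counting already settles the statement at the level needed for the ground state degeneracy, while the non-degenerate pairing $\langle \cdot | \cdot \rangle_p$ is what upgrades the bare equalities of cardinalities to honest isomorphisms.

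For the first isomorphism I would read $\hat{H}^p$ as the cohomology of the cochain complex $(\text{hom}(C,\hat{G})^\bullet, d^\bullet)$ built from the dual groups $\hat{G}_\bullet$. Since $\hat{G}_{n} \cong G_{n}$ for every finite abelian $G_n$, this identification is degree-wise compatible with the coboundary $d^p$ of (\ref{def:InternalHom}) — the character pairing defining the duality is natural with respect to $\partial^G_\bullet$ and its adjoints — so the two cochain complexes are isomorphic \emph{as complexes}. This promotes the degree-wise content of Proposition (dual spaces isomorphism), i.e. (\ref{homiso}), to an isomorphism of the induced cohomology groups, giving $\hat{H}^p \cong H^p$ for all $p$.

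For the second isomorphism I would transcribe the $p=0$ argument of Proposition (Compatibility between bases of the ground state subspace) verbatim, shifting $0 \mapsto p$, $-1 \mapsto p-1$ and $1 \mapsto p+1$. Concretely, (\ref{isomorphismd}) supplies $\text{im}(d^{p}) \cong \text{im}(d_{p+1})$ and hence $\text{im}(d^{p-1}) \cong \text{im}(d_{p})$; the first isomorphism theorem gives $\text{im}(d^{p}) \cong \text{hom}^{p}/\ker(d^{p})$ and $\text{im}(d_{p+1}) \cong \text{hom}_{p+1}/\ker(d_{p+1})$; feeding in $\text{hom}^{p} \cong \text{hom}_{p}$ from (\ref{homiso}) and comparing orders yields $\ker(d^{p}) \cong \ker(d_{p+1})$, exactly as (\ref{dual1}) does for $p=0$. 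Forming the quotients $H^{p} = \ker(d^{p})/\text{im}(d^{p-1})$ and $H_{p} = \ker(d_{p+1})/\text{im}(d_{p})$ then matches isomorphic numerators against isomorphic denominators, so $|H^{p}| = |H_{p}|$ and therefore $H^{p} \cong H_{p}$, generalizing (\ref{thm:mainhom}).

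The delicate step — and the one I expect to be the main obstacle — is justifying the passage from equality of orders to a genuine group isomorphism rather than merely equal degeneracies. The clean way to close the gap is to use that $d_{p+1}$ is, by construction (\ref{compd_p}), the adjoint of $d^p$ under the non-degenerate form $\langle \cdot | \cdot \rangle_p$. Non-degeneracy then forces $\ker(d_{p+1}) = \text{im}(d^p)^{\perp}$ and $\text{im}(d_{p+1}) = \ker(d^p)^{\perp}$, so the pairing descends to a perfect pairing between $H^p$ and $H_p$; since finite abelian groups are (non-canonically) isomorphic to their duals, $H^p \cong (H_p)^{\vee} \cong H_p$. Verifying that the form is well defined on the quotients and remains perfect there is the real content; once that is in place the three isomorphisms compose to the asserted $\hat{H}^p \cong H^p \cong H_p$, reconciling the homological and cohomological bases at every degree $p$ and not only at $p=0$.
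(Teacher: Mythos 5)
Your proposal is correct, and it amounts to a rigorous completion of the paper's own proof rather than a different theorem. The paper's proof is a two-line assertion: $\text{im}\left(d^{p-1}\right) \cong \text{im}\left(d_p\right)$ and $\ker\left(d^{p-1}\right) \cong \ker\left(d_p\right)$, said to be ``induced by the bilinear form (\ref{characterprop})'', after which the quotients are implicitly identified --- i.e.\ exactly the degree-shifted transcription of (\ref{dual1})--(\ref{thm:mainhom}) that you carry out in your second paragraph. The genuinely valuable part of your proposal is that you flag why that transcription alone is inconclusive: for finite abelian groups, isomorphic numerators together with isomorphic denominators do not force isomorphic quotients (the two copies of $\mathbb{Z}_2$ inside $\mathbb{Z}_4 \times \mathbb{Z}_2$ give quotients $\mathbb{Z}_2 \times \mathbb{Z}_2$ and $\mathbb{Z}_4$), so neither the paper's one-liner nor pure order-counting settles (\ref{pduality}). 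Your closing argument --- adjointness (\ref{compd_p}) plus non-degeneracy gives $\ker\left(d_{p+1}\right) = \text{im}\left(d^p\right)^{\perp}$, finiteness (perfectness of the pairing) gives $\text{im}\left(d_{p+1}\right) = \ker\left(d^p\right)^{\perp}$, the character pairing then descends to a perfect pairing between $H^p$ and $H_p$, and self-duality of finite abelian groups finishes --- is the standard way to make ``induced by the bilinear form'' precise, and it is what the paper evidently intends but does not write out. Two further remarks. First, you are the only one to address $\hat{H}^p$ at all: the paper never defines it, and its proof is silent about that factor of the statement. Second, your justification of $\hat{H}^p \cong H^p$ is the weakest link of your write-up: choosing isomorphisms $\hat{G}_n \cong G_n$ degreewise does not produce an isomorphism \emph{of complexes}, since no such choice is natural and the dualized differentials run in the opposite direction; the clean statement is that Pontryagin duality is exact on finite abelian groups, so the cohomology of the dual-coefficient complex is the Pontryagin dual of the corresponding (co)homology, and finiteness then supplies the non-canonical isomorphism --- which is the same mechanism as your perfect-pairing step, so the repair costs you nothing.
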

\begin{proof} The latter follows from the previous results $
\text{im} \left( d^{p-1}\right) \cong \text{im} \left( d_p \right)$  and $\text{ker} \left( d^{p-1} \right) \cong \text{ker} \left( d_p \right)$ which is induced by the bilinear form mixed inner product.
\end{proof}
This duality can be thought of as a consequence of the deep symmetry in the structure of these models. Basically, this construction carries the property that the dual of a group and the group itself are again isomorphic when abelian \cite{serre} to the level of topology.

Even though the ground state is sufficiently characterized at this point, and its degeneracy can be analytically computed (modulo the group structure of the $\left( G_{\bullet} \partial^G_{\bullet}\right)$ chain, see App. \ref{app:dimanalysis}), these bases are not enough for classification purposes. In fact, from a purely algebraic perspective, we can state some immediate facts derived from the previous sections. Firstly, $\delta \left( P^{\underline{\omega}} \right) = \delta \left( Q_{\underline{\hat{\nu}}} \right) = \mathbb{1}_{\mathcal{H}}$ iff $U_t \left( P^{\underline{\omega}} \right) = U_t \left( Q_{\underline{\hat{\nu}}} \right) = \mathbb{1}_{\mathcal{H}}$  for all $\underline{\omega} \in H^0$ and $\underline{\hat{\nu}} \in H_0$. Hence, these operators also perform dynamical transformations over the ground state space, which can be recognized as the generators from a so called seed state. At the same time, it is easy to show that the following commutation relations are satisfied for the operators above:
\begin{align}\label{gencomm1}
\left[ \Pi_{\hat{0}}^0 , P^{\underline{\omega}} \right] = 0 \quad , \quad \left[ \Pi_{\hat{0}}^0 , Q_{\underline{\hat{\nu}}} \right] = 0 \quad , \quad \left[ \Pi_{\hat{0}}^0 , H \right]=0 \quad , \quad \left[ Q_{\underline{\hat{\nu}}} , P^{\underline{\omega}} \right]_g \propto \chi_{\underline{\hat{\nu}}} \left( \underline{\omega} \right) \quad ,
\end{align}
where the last expression is reminiscent of the \textit{momentum eigenstates} commutation, and where we have denoted $\left[ \cdot , \cdot \right]_g$ for the \textit{group commutator}. It follows then that the operators $P^{\underline{\omega}}$ and $Q_{\underline{\hat{\nu}}}$ \textit{do not conform} a complete set of commuting observables for the ground state subspace projected by $\Pi_{\hat{0}}^0$. 
The strategy will be to bypass this complication by interpreting this space in a more \textit{algebro-topological} way in the following section.

\subsection{\label{app:dimanalysis} On the Ground state Degeneracy Calculation (*)\footnote{This section can be omitted without compromising the reading of the rest of the paper.}}

As before and, for economical reasons, we simplify the notation once more and write $H^p(C,G) \mapsto H^p$ and $H_p(C,G) \mapsto H_p$ whenever no confusion arises. Before embarking on the classification problem, we briefly explore the ground state degeneracy calculation in an explicit dimensional way. We focus on its cohomological description since the use of the Universal Coefficient Theorem is immediate. However, as proven in the previous subsection, homology should equally adequate to study this description. To begin, it follows from (\ref{groundstatesconf}) that:
\begin{align}\label{Pgs}
\mathcal{H}_0 \; \ni \, \ket{\underline{\omega}_{\hat{0}}} = P_{\underline{\omega}} \ket{\underline{0}_{\hat{0}}} \quad \quad \text{where} \quad \quad  \underline{\omega} \in H^0 \quad .
\end{align}
and hence, the structure of of the basis is decoded in the operator $P_{\underline{\omega}}$ appearing in (\ref{Pgs}) which, when studied from a \textit{Heisenberg picture perspective}, can be understood as a generator of the configuration basis. On one hand, any ground state basis as can be decomposed as:
\begin{align}\label{0decomp}
H^0 \ni \underline{\omega} = \bigoplus_{n} \underline{\omega}_n \quad \quad \text{where} \quad \quad \underline{\omega}_n := \bigoplus_{x \in K_n} \underline{\left( x_* \omega \right)} x^* \quad .    
\end{align}
On the other hand, for the sake of definiteness, consider the (not necessarily natural) injection map $\iota: \prod_n H^n ( C, H_{n}(G)) \rightarrow H^0(C,G)$ defined as:
\begin{align}\label{omegadecompn}
\underline{\omega} = \iota \underline{\alpha} := \sum_n \pi_n \underline{\alpha} \quad \text{where} \quad \underline{\alpha} \in \prod_n H^n ( C, H_{n}(G)) \quad \text{with} \quad \pi_n \underline{\alpha} = \underline{\omega}_n \quad ,
\end{align}
where the latter is understood as a formal sum. A similar map is used in \cite{dealmeida2017topological} and it is proved in \cite{Brown} to be an isomorphism\footnote{As it is shown in \cite{Brown}, such isomorphism exists for all $\prod_n H^n ( C, H_{n-p}(G)) \rightarrow H^p(C,G)$ in any degree $p$.}. Hence, the ground state operator of equation (\ref{Pgs}) allows the decomposition $P_{\underline{\omega}} = P_{\bigoplus_n \underline{\omega}_n} \cong P_{\sum_n \pi_n \underline{\alpha}} = \prod_n P_{\pi_n \underline{\alpha}} \cong \prod_n P_{\underline{\omega}_n}$. In other words, $P_{\underline{\omega}}$ is performing many operations in different dimensions simultaneously. As a byproduct, the elements of the cohomology groups themselves define an $n$-dimensional decomposition: 
\begin{align*}
\ket{\underline{\omega}_{\hat{0}}} = P_{\underline{\omega}}\ket{0_{\hat{0}}} = \prod_n P_{\underline{\omega}_n} \ket{0_{\hat{0}}} = \bigoplus_n \ket{ \pi_n \underline{\alpha}_{\hat{0}}} \cong \bigotimes_n \ket{{\underline{\omega}_n}_{\hat{0}}} \quad , \quad \pi_n \underline{\alpha} \in H^n ( C, H_{n}(G)) \quad .
\end{align*} 
The same reasoning can be employed to understand $Q_{\underline{\hat{\rho}}}$ 
as  being the generator of the dual basis and, consequently, decompose any dual space basis as $\ket{\hat{\underline{\rho}}_0} \cong \bigotimes_n \ket{{{\hat{\underline{\rho}}}_n} \, {}_0}$ with $\hat{\underline{\rho}} = \widehat{{\iota \underline{\alpha}}}$ and $\underline{\alpha} \in \prod_n H^n ( C , H_{n}(G))$. At the same time, the inner structure of $H^n ( C, H_{n-p}(G)$ for all $n$ can also be understood directly from the universal coefficient theorem \cite{Hatcher}:
\begin{align}\label{UCT} 
H^n \left( C , H_n \left( G \right) \right) = \text{Hom} \left( H_n \left( C \right) ,  H_n \left( G \right) \right) \oplus \text{Ext}^1 \left( H_{n-1} \left( C \right) ,  H_n \left( G \right) \right) \quad ,
\end{align}
which makes explicit the fact that the ground state is completely independent of the particular triangulation $C(X)$ used.

Let us first tackle the $\left( C_{\bullet}, \partial^C_{\bullet} \right)$ chain. Most of the examples worked so far in the literature have considered free modules $H_{\bullet} \left( C \right)$ for some simplicial complex $C(X)$ \cite{dealmeida2017topological,Ibieta_Jimenez_2020}. This results in the trivialization of the $\text{Ext}^1 \left( H_{n-1} \left( C \right) , - \right)$ functor in (\ref{UCT}) for all $n$. The same result holds more generally for all $H_{\bullet} \left( C \right)$ projective (free $\Rightarrow$ projective) \cite{Weibel}. By means of the fundamental theorem of finitely generated abelian groups, any Homology group can be decomposed into:
\begin{align}\label{simphomology}
H_n \left( C \right) = F_n \oplus T_n \quad \text{with} \quad F_n \cong \mathbb{Z}^{\beta_n} \quad \text{and} \quad  T_n \cong \bigoplus_
{\left\{p\right\}_n} \mathbb{Z} / \left\{ p \right\}_n \mathbb{Z} \quad ,
\end{align}
where $F_n$ is the free part and $T_n$ is the torsion part of the $n$-th Homology group $H_n \left( C \right)$. The $n$-th free part is isomorphic to $\mathbb{Z}^{\beta_n}$, where $\beta_n$ is the $n$-th dimensional Betti number. The $n$-th torsion part is isomorphic to a direct sum of cyclic groups of prime power group orders which we have collectively denoted by $\left\{p\right\}_n$.

Now, obtaining $H_n \left( G \right)$ from the $\left( G_{\bullet}, \partial^G_{\bullet} \right)$ chain is usually non-straightforward. In general, given a group $G$, the homology groups $H_n \left( G \right)$ are defined as $H_n \left( G \right) = H_n \left( \mathbb{Z} \otimes_{\mathbb{Z}G} F_* \right)$ for $n \in \mathbb{N}$, where $\mathbb{Z}G$ is the free $\mathbb{Z}$-module generated by the elements of $G$ and $F_*$ is any free (or projective) resolution for $G$ \cite{MR672956}. However, the problem of constructing a free resolution for general abelian groups is highly non-trivial. Moreover, it is known to be carried out directly in the cases where the resolution is small (or minimal) so that several computation algorithms have been devised \cite{romero2011computing,RUBIO2002389}.

In our case, as typical examples of the type of chains $\left( G_{\bullet}, \partial^G_{\bullet} \right)$, a cyclic group $G$ of order $m$ are usually considered. Several structures can be embedded in this basic framework. For instance, the usual alternative  is considering $G$ to be the graded group $G=\bigsqcup_{p} G_{p}$, such that each $G_p \cong N_p \rtimes G_{p-1}$ (semi direct product), where $N_p$ is a normal subgroup and the action of $G_{p-1}$ is performed via the inverse group morphisms $\left(\partial^G_p\right)^{-1}$. As it is known, this decomposition is not unique and obtaining its resolution is usually cumbersome. Instead, let us consider a known resolution of the cyclic group $G$ of order $m$, which is characterized by its single generator $t$ so that $F_*$ is given by:
\begin{align*}
\cdots \xrightarrow{N} \mathbb{Z} G \xrightarrow{t-1} \mathbb{Z} G \xrightarrow{N} \mathbb{Z} G \xrightarrow{t-1} \mathbb{Z} G \rightarrow \mathbb{Z} \rightarrow 0 \quad ,
\end{align*}
which, for definiteness, we take to coincide with the chain $(G_{\bullet},\partial^G_{\bullet})$, where $N=1+t+t^2+t^3+\cdots+t^{n-1}$ is the normal element of $\mathbb{Z}G$. The latter produces the homology groups ($n>0$):
\begin{align}\label{grouphomology}
H_n \left( G \right) = \begin{cases}
\mathbb{Z} &, \quad \quad n = 0 \\
\mathbb{Z} / n \mathbb{Z} &, \quad \quad n \text{ odd}\\
0 &, \quad \quad n \text{ even}
\end{cases}
\end{align}

When inserting these back into equation (\ref{UCT}), we obtain the following expression in terms of the $n$-th group homology:
\begin{align}\label{Classification}
H^n \left( C , H_n \left( G \right) \right) \cong \begin{cases} 
\mathbb{Z}^{\beta_n} \bigoplus_{\left\{ p \right\}_n} \prescript{}{\left\{ p \right\}_n}{\mathbb{Z}} &, \quad \quad n = 0 \\
\mathbb{Z}_n^{\beta_n} \bigoplus_{\left\{ p \right\}_{n-1}} \mathbb{Z}_{\text{gcd}\left( m , \left\{ p \right\}_{n-1}\right)} \bigoplus_{\left\{ p \right\}_n} \prescript{}{\left\{ p \right\}_n}{\mathbb{Z}_n} &, \quad \quad n \text{ odd} \\
\mathbb{Z}^{\beta_n} \bigoplus_{\left\{ p \right\}_{n-1}} \mathbb{Z}_{\left\{ p \right\}_{n-1}} \bigoplus_{\left\{ p \right\}_n} \prescript{}{\left\{ p \right\}_n}{\mathbb{Z}} &, \quad \quad n \text{ even}
\end{cases} \quad .
\end{align}
where we have written $\mathbb{Z}_d \cong \mathbb{Z} / d \mathbb{Z}$ with $d \in \mathbb{N}$ and we have denoted as $\prescript{}{h}{G}$ the subgroup $\left\{ g  \in G : h g = 0 \right\}$ of elements of $G$ of order $h$ (or some divisor of $h$) and $gcd \left( a,b \right)$ stands for the greatest common divisor between $a$ and $b$.

\section{\label{sec:classification} Classification of the Ground State Subspace}

Coming back to the ground state subspace bases, we will first show that they are not only related by an endomorphism, but they are unitarily equivalent. Let us start by focusing on the representation space, since the configuration space case is analogous. Consider two elements $\ket{\underline{\hat{\alpha}}^{0}} = B^{0}\ket{\underline{\hat{\alpha}}}\, , \, \ket{\underline{\hat{\beta}}^{0}} = B^{0}\ket{\underline{\hat{\beta}}}$ of the cohomological basis for $\mathcal{H}_0$, where $\underline{\hat{\alpha}}, \underline{\hat{\beta}} \in H_0$. Taking its inner product, yields:
\begin{align*}
\left\langle \prescript{0}{}{\underline{\hat{\alpha}}} | \underline{\hat{\beta}}^{0} \right\rangle = \delta \left( \underline{\hat{\alpha}} , \underline{\hat{\beta}} \right) + \frac{1}{\left| \text{hom}_1 \right|} \sum_{\hat{\mu} \in \text{hom}_1} \delta\left( \underline{\hat{\alpha}}, d_{1} \hat{\mu} \right) = \delta \left( \underline{\hat{\alpha}} , \underline{\hat{\beta}} \right) \quad \quad ,
\end{align*}
where the term under sum is identically null since $\underline{\hat{\alpha}} \notin \text{im} \left( d_1 \right)$. Thus, this is indeed an orthonormal basis. The situation is completely analogous for the basis $\left\{ \ket{\underline{\alpha}_{\hat{0}}} \right\}_{\underline{\alpha} \in H^0} \,$. Consequently: 
\begin{prop}[\label{restrictionGS}Restriction to the ground state subspace]
Restricting the inner product to the ground state subspace $\mathcal{H}_0$ (denoted $\left\langle \cdot | \cdot \right\rangle_{\mathcal{H}_0}$), we have:
\begin{enumerate}
\item The dimension of the groundstate subspace to be given by $\left.\chi_{\underline{\hat{\pi}}} \left( \underline{0} \right)\right|_{\mathcal{H}_0} = \left| H^0 \right|$ , 

\item The string of relations for the mixed inner product:
\begin{align}\label{mixedGSS}
\left\langle  \prescript{}{\hat{0}}{\underline{\omega}} | \underline{\hat{\nu}}^0 \right\rangle_{\mathcal{H}_0} = \bra{\underline{\omega}} A_{\hat{0}} B^{0} \ket{\underline{\hat{\nu}}}_{\mathcal{H}_0} =  \bra{\underline{\omega}} B^{0} A_{\hat{0}} \ket{\underline{\hat{\nu}}}_{\mathcal{H}_0} = \left\langle \underline{\omega} | \underline{\hat{\nu}} \right\rangle_{\mathcal{H}_0} = \frac{\chi_{\underline{\hat{\nu}}} \left( \underline{\omega} \right)}{\left| H^0 \right|^{\frac{1}{2}}}
\end{align}
and 
\item the following resolutions of the identity (completeness relations):
\begin{align}
\nonumber & \Pi^{\underline{\omega}} := \ket{\underline{\omega}_{\hat{0}}} \otimes \bra{\prescript{}{\hat{0}}{\underline{\omega}}}  \quad \quad \text{such that} \quad \quad \sum_{\underline{\omega} \in H^0} \Pi^{\underline{\omega}} = \mathbb{1}_{\mathcal{H}_0} \quad \quad \text{and}  \\ 
\label{completeGSS} & \Pi_{\underline{\hat{\nu}}} := \ket{\underline{\hat{\nu}}^0} \otimes \bra{\prescript{0}{}{\underline{\hat{\nu}}}} \quad \quad \text{such that} \quad \quad \sum_{\underline{\hat{\nu}} \in H_0} \Pi_{\underline{\hat{\nu}}} = \mathbb{1}_{\mathcal{H}_0} \quad \quad ,
\end{align}
where $\Pi^{\underline{\omega}}$ and $\Pi_{\underline{\hat{\nu}}} $ are projectors into the subspaces $U^{\underline{\omega}} := \text{span} \ket{\alpha_{\hat{0}}}_{\alpha \sim^0 \underline{\omega} \in H^0}$ and $U_{\underline{\hat{\nu}}} :=\text{span} \ket{\hat{\beta}^{0}}|_{\hat{\beta} \sim_0 \underline{\hat{\nu}} \in H_0}$ of $\mathcal{H}_0$, respectively. 
\end{enumerate}
\end{prop}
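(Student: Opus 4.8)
The plan is to dispatch the three assertions in turn, using only material already in hand: the mixed inner product (\ref{mixedinnerprod}), the fact that $A_{\hat{0}}$ and $B^{0}$ are commuting self-adjoint projectors whose product is the ground-state projector (\ref{GSproj1}), and the orthonormality $\left\langle \prescript{0}{}{\underline{\hat{\alpha}}} | \underline{\hat{\beta}}^0 \right\rangle = \delta(\underline{\hat{\alpha}},\underline{\hat{\beta}})$ verified immediately above the statement. The unifying observation is that, by the defining condition (\ref{def:GS}), both $A_{\hat{0}}$ and $B^{0}$ act as $\mathbb{1}_{\mathcal{H}_0}$ once restricted to $\mathcal{H}_0$; this single mechanism is what turns each $p$-space identity into its ground-state analogue with the global normalization $\left| \text{hom}^0 \right|$ replaced by $\left| H^0 \right|$.

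For assertion (1) I would recall that in the full $p$-Hilbert space the block-diagonal form (\ref{rho2}) gives $\chi_{\hat{\pi}}(0) = \text{tr}\,\hat{\pi}(0) = \left| \text{hom}^0 \right| = \dim \mathcal{H}$, since $\hat{\pi}(0)$ is the identity on the representation space. Passing to ground-state representatives $\underline{\hat{\pi}} \in H_0$ and $\underline{0} \in H^0$ amounts to replacing the full representation by its restriction to $\mathcal{H}_0$, whose trace of the identity is $\dim \mathcal{H}_0$. The degeneracy count (\ref{thm:main}), equivalently (\ref{thm:second}), then yields $\left. \chi_{\underline{\hat{\pi}}}(\underline{0}) \right|_{\mathcal{H}_0} = \dim \mathcal{H}_0 = \left| H^0 \right|$.

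Assertion (2) is a chain of rewritings. The first equality is the definition of the decorated bra and ket in (\ref{newnot1})--(\ref{newnot2}) combined with $A_{\hat{0}} = A_{\hat{0}}^{\dagger}$; the second is the pairwise commutation of the Fake-Gauge and Fake-Holonomy operators; the third uses that $B^{0} A_{\hat{0}}$ is the ground-state projector (\ref{GSproj1}), acting as $\mathbb{1}_{\mathcal{H}_0}$ by (\ref{def:GS}), so it may be deleted once the overlap is taken inside $\mathcal{H}_0$; and the last equality is the mixed inner product (\ref{mixedinnerprod}) with $\left| \text{hom}^0 \right|$ replaced by $\left| H^0 \right|$. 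I would write out only the third step in detail, as it is the only place where the restriction to $\mathcal{H}_0$ does genuine work.

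Finally, for assertion (3) I would first observe that $\alpha \sim^0 \underline{\omega}$ forces $\ket{\alpha_{\hat{0}}} = \ket{\underline{\omega}_{\hat{0}}}$, so each $U^{\underline{\omega}}$ is the one-dimensional span of $\ket{\underline{\omega}_{\hat{0}}}$ and $\Pi^{\underline{\omega}}$ is rank one; idempotency $\Pi^{\underline{\omega}} \Pi^{\underline{\omega}'} = \delta(\underline{\omega},\underline{\omega}') \Pi^{\underline{\omega}}$ and self-adjointness are then immediate from orthonormality. Completeness is a dimension count: the orthonormal family $\{ \ket{\underline{\omega}_{\hat{0}}} \}_{\underline{\omega} \in H^0}$ has cardinality $\left| H^0 \right| = \dim \mathcal{H}_0$ by (\ref{thm:main}), hence is a basis, giving $\sum_{\underline{\omega}} \Pi^{\underline{\omega}} = \mathbb{1}_{\mathcal{H}_0}$; the representation-side statement follows verbatim after the exchange $H^0 \leftrightarrow H_0$, $A_{\hat{0}} \leftrightarrow B^{0}$, using (\ref{thm:second}). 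The main obstacle I anticipate is lodged in the last equality of (\ref{mixedGSS}): one must confirm that the character pairing $\chi_{\underline{\hat{\nu}}}(\underline{\omega})$ descends to a well-defined, and in fact non-degenerate, pairing on the classes of $H^0$ and $H_0$, i.e. that shifting $\omega$ by an element of $\text{im}(d^{-1})$ or $\hat{\nu}$ by an element of $\text{im}(d_0)$ leaves the character unchanged. This is precisely the point where the restriction to $\ker(d^0)$ and $\ker(d_1)$, together with the image isomorphism $\text{im}(d^{-1}) \cong \text{im}(d_0)$ of Proposition (\ref{isomorphismd}) and the duality (\ref{dual1}), must be invoked.
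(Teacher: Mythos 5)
Your proposal is correct and is essentially the proof the paper intends: the paper explicitly omits these proofs as ``straightforward,'' and your reconstruction assembles exactly the ingredients it lines up beforehand --- the decorated-state definitions (\ref{newnot1})--(\ref{newnot2}), the projector identity (\ref{GSproj1}) together with the pairwise commutation, the orthonormality computation immediately preceding the proposition, the degeneracy results (\ref{thm:main}) and (\ref{thm:second}), and the mixed inner product (\ref{mixedinnerprod}). The ``obstacle'' you flag --- that $\chi_{\underline{\hat{\nu}}}\left( \underline{\omega} \right)$ descends to a well-defined pairing on classes --- is indeed the one nontrivial check, and it closes via the adjointness relation (\ref{compd_p}), which gives $\chi_{d_1 \hat{\mu}}\left( \underline{\omega} \right) = \chi_{\hat{\mu}}\left( d^0 \underline{\omega} \right) = 1$ and $\chi_{\underline{\hat{\nu}}}\left( d^{-1} \beta \right) = \chi_{d_0 \underline{\hat{\nu}}}\left( \beta \right) = 1$ on the respective kernels; this is also the mechanism that makes consistent the normalization convention you inherit from the paper itself, namely treating $A_{\hat{0}} \ket{\underline{\omega}}$ and $B^{0} \ket{\underline{\hat{\nu}}}$ as unit vectors of $\mathcal{H}_0$ so that $\left| \text{hom}^0 \right|$ is replaced by $\left| H^0 \right|$ throughout.
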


The proofs are straightforward so we also omit them here. However, from (\ref{gencomm1}) it is clear that $\Pi^{\underline{\omega}}$ and $\Pi_{\underline{\hat{\nu}}}$ are still \textit{non} commuting projectors. Nevertheless, the completeness of the bases imply that the closed indexed sets $\left\{ \overline{U^{\underline{\omega}}} \right\}_{\underline{\omega} \in H^0}$ and $\left\{ \overline{U_{\underline{\hat{\nu}}}}\right\}_{\underline{\hat{\nu}} \in H_0}$ decompose the ground state subspace as $\mathcal{H}_0 \cong \bigoplus_{\underline{\omega} \in H^0} \overline{U^{\underline{\omega}}} \cong \bigoplus_{\underline{\hat{\nu}} \in H_0} \overline{U_{\underline{\hat{\nu}}}} $. This decomposition implies that the space $\mathcal{H}_0$ is at least connected to the class $\ket{\underline{0}_{\hat{0}}}$ or $\ket{\underline{\hat{0}}^0}$, respectively. This is not enough for our purposes, so we embark on studying some of its topological properties as to ensure that the classification space is well defined.

\subsection{\label{subsec:pathconnec}Unitary equivalence, path-connectedness and open covers}

Given that we known two complete bases, it is sufficient to study them and extend by linearity any results obtained. Let us first show that the bases of $\mathcal{H}_0$ transform unitarily.

\begin{prop}[Unitary equivalence] \label{Uequivalence}
The bases found in the previous section are \textit{unitarily equivalent}.
\end{prop}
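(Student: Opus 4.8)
The plan is to produce the explicit change-of-basis operator between the two orthonormal bases and to check by hand that it is unitary, with the verification reducing to the Schur orthogonality relation (\ref{repinnerprod}) carried down to the ground-state classes. First I would record what is already in hand: by the Restriction Proposition~\ref{restrictionGS} together with the duality (\ref{dual1})--(\ref{thm:mainhom}), both $\{\ket{\underline{\omega}_{\hat 0}}\}_{\underline{\omega}\in H^0}$ and $\{\ket{\underline{\hat\nu}^0}\}_{\underline{\hat\nu}\in H_0}$ are orthonormal bases of one and the same finite-dimensional space $\mathcal{H}_0$, with $\dim\mathcal{H}_0=|H^0|=|H_0|$. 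Using the duality isomorphism $H^0\xrightarrow{\sim}H_0$, $\underline{\omega}\mapsto\underline{\hat\omega}$, to identify the two index sets, I would define
\[
U:=\sum_{\underline{\omega}\in H^0}\ket{\underline{\hat\omega}^0}\,\bra{\prescript{}{\hat 0}{\underline{\omega}}}\in\mathrm{End}(\mathcal{H}_0),
\]
so that orthonormality of the configuration basis gives $U\ket{\underline{\mu}_{\hat 0}}=\ket{\underline{\hat\mu}^0}$, i.e. $U$ sends one basis onto the other by construction.

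Next I would identify the matrix of $U$ in a single basis, which is where the physically meaningful content lives. Reading the overlaps from the mixed inner product (\ref{mixedGSS}), the entries within the configuration basis are
\[
\bra{\prescript{}{\hat 0}{\underline{\mu}}}\,U\,\ket{\underline{\omega}_{\hat 0}}=\langle \prescript{}{\hat 0}{\underline{\mu}}|\underline{\hat\omega}^0\rangle_{\mathcal{H}_0}=\frac{\chi_{\underline{\hat\omega}}\left(\underline{\mu}\right)}{\left|H^0\right|^{1/2}},
\]
so that $U$ is precisely the (normalized) character table of the finite abelian group $H^0\cong H_0$, i.e. a generalized discrete Fourier transform intertwining the shift and clock descriptions. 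Unitarity then follows abstractly from completeness (\ref{completeGSS}): with $U^{\dagger}=\sum_{\underline{\omega}}\ket{\underline{\omega}_{\hat 0}}\bra{\prescript{0}{}{\underline{\hat\omega}}}$ one computes
\[
U^{\dagger}U=\sum_{\underline{\omega},\underline{\omega}'}\ket{\underline{\omega}_{\hat 0}}\,\langle \prescript{0}{}{\underline{\hat\omega}}|\underline{\hat\omega}'^0\rangle_{\mathcal{H}_0}\,\bra{\prescript{}{\hat 0}{\underline{\omega}'}}=\sum_{\underline{\omega}}\ket{\underline{\omega}_{\hat 0}}\bra{\prescript{}{\hat 0}{\underline{\omega}}}=\mathbb{1}_{\mathcal{H}_0},
\]
and the computation of $UU^{\dagger}$ is identical with the two bases exchanged. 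Concretely, this same identity is the assertion that the character-table matrix $|H^0|^{-1/2}\chi_{\underline{\hat\nu}}(\underline{\omega})$ has orthonormal rows and columns, which is Schur orthogonality (\ref{repinnerprod}) read on the quotient groups.

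The step I expect to require the most care --- and which I regard as the main obstacle --- is the descent of the character pairing to the quotients: one must confirm that $(\underline{\hat\nu},\underline{\omega})\mapsto\chi_{\underline{\hat\nu}}(\underline{\omega})$ is well defined on homology/cohomology classes and remains non-degenerate there. Well-definedness is essentially inherited from Proposition~\ref{restrictionGS}, since the multiplicativity $\chi_{\underline{\hat\nu}}(\omega+d^{-1}\gamma)=\chi_{\underline{\hat\nu}}(\omega)\,\chi_{\underline{\hat\nu}}(d^{-1}\gamma)$ combined with the adjointness between $d^{\bullet}$ and $d_{\bullet}$ and the kernel conditions defining $H^0$ and $H_0$ forces $\chi_{\underline{\hat\nu}}(d^{-1}\gamma)=1$, so the pairing factors through $H_0\times H^0$; the symmetric computation handles changes of the representative of $\underline{\hat\nu}$. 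Non-degeneracy on classes then follows from the $p$-character Definition (\ref{bilinear})--(\ref{characterprop}), so the induced pairing exhibits $H_0\cong\widehat{H^0}$ as a perfect pairing of finite abelian groups, at which point the classical orthogonality relations deliver unitarity of $U$ directly and the two bases are unitarily equivalent.
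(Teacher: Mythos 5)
Your proof is correct and follows essentially the same route as the paper's: both identify the change-of-basis matrix between the two ground-state bases as the normalized character table $\bar{\chi}_{\underline{\hat{\alpha}}}\left( \underline{\omega} \right)/\left| H^0 \right|^{1/2}$ via the mixed inner product (\ref{mixedGSS}), and obtain unitarity from the completeness relations (\ref{completeGSS}) together with Schur orthogonality restricted to $\mathcal{H}_0$. Your extra step checking that the character pairing descends to the quotients $H^0$ and $H_0$ and remains non-degenerate there is left implicit in the paper's proof; it is a worthwhile refinement of the same argument rather than a genuinely different one.
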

\begin{proof}
Using the completeness relations (\ref{completeGSS}) and relations (\ref{mixedGSS}) above, we can show that the basis elements transform into each other as:
\begin{align}\label{transfunc} 
\ket{\underline{\omega}_{\hat{0}}} = \left[ U \right]^{\underline{\hat{\alpha}}}_{\underline{\omega}} \ket{\underline{\hat{\alpha}}^0} \quad \text{and} \quad \ket{\underline{\hat{\nu}}^0} = \left( \left[ U \right]^{\underline{\hat{\alpha}}}_{\underline{\omega}} \right)^{\dagger} \ket{\underline{\beta}_{\hat{0}}} \quad \quad \text{where} \quad \left[ U \right]^{\underline{\hat{\alpha}}}_{\underline{\omega}} := \frac{\bar{\chi}_{\underline{\hat{\alpha}}} \left( \omega \right)}{\left| H^0 \right|^{\frac{1}{2}}} \quad,
\end{align}
written in matrix notation. Notice then, that $\left[ U \right]^{\underline{\hat{\alpha}}}_{\underline{\omega}}$ is a unitary matrix, i.e.  
\begin{align*}
\left[ U \right]^{\underline{\hat{\alpha}}}_{\underline{\omega}} \left( \left[ U \right]^{\underline{\hat{\alpha}}}_{\underline{\omega}} \right)^{\dagger} = \left( \left[ U \right]^{\underline{\hat{\alpha}}}_{\underline{\omega}} \right)^{\dagger} \left[ U \right]^{\underline{\hat{\alpha}}}_{\underline{\omega}} = \mathbb{1}_{\mathcal{H}_0} \quad \quad \text{(Schur orthogonality in $\mathcal{H}_0$)}.  
\end{align*}
Unitarity in this context is basically expressing the isometric (and co-isometric) equivalence of the natural bases. The latter result will be useful in our next task.
\end{proof}

\begin{prop}[Path connectedness]\label{pathconnected}
The ground state space $\mathcal{H}_0$ is \textit{path connected}\footnote{Another well known connectedness construction can also be achieved by means of the bases $\left\{ \ket{\underline{\omega}_{\hat{0}}} \right\}_{\underline{\omega} \in H^0}$ (or $\left\{\ket{\underline{\hat{\alpha}}^0}\right\}_{\underline{\hat{\alpha}} \in H_0}$) and the Gram-Schmidt orthogonalization procedure.}.   
\end{prop}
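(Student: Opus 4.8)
The plan is to leverage the unitary equivalence of the two natural bases established in Proposition \ref{Uequivalence}, combined with the connectedness of the unitary group. Write $N := |H^0| = |H_0|$ for the finite dimension of $\mathcal{H}_0$, and let $U(\mathcal{H}_0) \cong U(N)$ denote the group of unitary operators \emph{on the subspace} $\mathcal{H}_0$. The key structural point is that any element of $U(\mathcal{H}_0)$ leaves $\mathcal{H}_0$ invariant by construction, so that a continuous curve in $U(\mathcal{H}_0)$ applied to a fixed vector traces out a continuous curve that never leaves the ground state subspace. Internality of the relevant generators is precisely what the commutation relations $[\Pi_{\hat{0}}^0 , P^{\underline{\omega}}]=[\Pi_{\hat{0}}^0 , Q_{\underline{\hat{\nu}}}]=0$ of (\ref{gencomm1}) guarantee.

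First I would recall the standard fact that $U(N)$ is path connected: every $V \in U(N)$ can be written as $V = e^{iH}$ with $H=H^{\dagger}$, so that $t \mapsto e^{itH}$, $t\in[0,1]$, is a continuous path in $U(N)$ joining $\mathbb{1}_{\mathcal{H}_0}$ to $V$. In particular, the transition matrix $[U]^{\underline{\hat{\alpha}}}_{\underline{\omega}}$ of (\ref{transfunc}), which Proposition \ref{Uequivalence} shows to be unitary, is joined to the identity by such a path; this already yields a continuous deformation, internal to $\mathcal{H}_0$, of the configuration basis $\{\ket{\underline{\omega}_{\hat{0}}}\}_{\underline{\omega}\in H^0}$ into the representation basis $\{\ket{\underline{\hat{\alpha}}^0}\}_{\underline{\hat{\alpha}}\in H_0}$.

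Next, to connect an \emph{arbitrary} pair of states $\ket{\psi_0},\ket{\psi_1}\in\mathcal{H}_0$, I would use that $U(N)$ acts transitively on the unit sphere of $\mathcal{H}_0$: after normalization there exists $V\in U(\mathcal{H}_0)$ with $V\ket{\psi_0}=\ket{\psi_1}$, and writing $V=e^{iH}$ as above, the curve $\gamma(t)=e^{itH}\ket{\psi_0}$ is continuous, remains on the unit sphere of $\mathcal{H}_0$, and satisfies $\gamma(0)=\ket{\psi_0}$, $\gamma(1)=\ket{\psi_1}$. The remaining radial degree of freedom, together with the origin, is absorbed by the straight-line homotopy $(1-t)\ket{\psi_0}+t\ket{\psi_1}$, which also lies in $\mathcal{H}_0$ since the latter is a linear subspace; either construction establishes the claim.

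I expect the only delicate point to be conceptual rather than technical: as a bare vector space $\mathcal{H}_0$ is convex and hence trivially path connected, so the genuine content is to realize the connecting paths \emph{through the physically meaningful (normalized, unitarily related) states} and to keep them inside $\mathcal{H}_0$ rather than the ambient $\mathcal{H}$. This is exactly what working with $U(\mathcal{H}_0)$ and invoking Proposition \ref{Uequivalence} accomplishes, and it is this unitary framing — rather than naive convexity — that will feed the classifying-space construction of the next section. The footnote's alternative via Gram--Schmidt amounts to the same underlying fact, namely the deformation retraction of $GL(N,\mathbb{C})$ onto $U(N)$ and the connectedness of the latter.
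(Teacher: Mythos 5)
Your proposal is correct and takes essentially the same route as the paper: both arguments rest on the unitary equivalence of Prop.~\ref{Uequivalence} together with path connectedness of the unitary group acting internally on $\mathcal{H}_0$, the only mechanical difference being that you join a unitary to the identity via the exponential form $t \mapsto e^{itH}$ and transitivity on the unit sphere, while the paper diagonalizes $\left[ U \right]^{\underline{\hat{\alpha}}}_{\underline{\omega}} = S^{-1} D S$ and moves the $S^1$ eigenvalues to $1$, then composes with a convex path on the projection coefficients --- equivalent constructions. Your closing observation that bare convexity of the linear subspace already gives path connectedness, so that the genuine content is realizing the paths through normalized, unitarily related states inside $\mathcal{H}_0$, accurately reflects what the paper's construction (and its Gram--Schmidt footnote) is actually doing.
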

\begin{proof}
We will show that, path connectedness follows Prop. \ref{Uequivalence}. In the following, we will use a standard construction that briefly summarize here. If we write the projection $\Psi^{\underline{\omega}} := \Pi^{\underline{\omega}} \ket{\Psi}$ for $\underline{\omega} \in H^0$ and $\Psi_{\underline{\hat{\nu}}} := \Pi_{\underline{\hat{\nu}}} \ket{\Psi}$ for $\underline{\nu} \in H_0$ , by completeness a general ground state $\ket{\Psi} \in \mathcal{H}_0$ can be decomposed as:
\begin{align*}
\ket{\Psi} = \sum_{\underline{\omega} \in H^0} \Psi^{\underline{\omega}} \ket{\underline{\omega}_{\hat{0}}} = \sum_{\underline{\hat{\alpha}} \in H_0} \Psi_{\underline{\hat{\alpha}}} \ket{\underline{\hat{\alpha}}^0} \quad \text{where the projectors transform as} \quad \Psi_{\underline{\hat{\nu}}} = \left[ U \right]^{\underline{\hat{\alpha}}}_{\underline{\omega}} \Psi^{\underline{\omega}} \quad .
\end{align*}
Thus, the expression: $ \quad \quad \quad \quad  \Psi^{\underline{\omega}} \left( p \right) := \left( 1 - p \right)\Psi^{\underline{\omega}} + p \, I_{\mathcal{H}_0 \times \mathcal{H}_0} \quad \quad$ for  $\quad \quad p \in \left[ 0 , 1 \right] \quad \;\;\; $, \\ is a convex path connecting the identity $I_{\mathcal{H}_0 \times \mathcal{H}_0}$ and the projection $\Psi^{\underline{\omega}}\,$. At the same time, since $\left[ U \right]^{\underline{\hat{\alpha}}}_{\underline{\omega}}$ is a unitary matrix, it can be diagonalized by some matrix $S$ in the form:
\begin{align*}
\left[ U \right]^{\underline{\hat{\alpha}}}_{\underline{\omega}} = S^{-1} D S \quad \text{where} \;\; D \;\; \text{is now a diagonal matrix with eigenvalues in $S^1$} \;\;.
\end{align*}
Now, we can path-connect $D$ to $I_{\mathcal{H}_0 \times \mathcal{H}_0}$ by rotating each element of the diagonal $\lambda_j$, i.e. each eigenvalue, around the unit circle to $1$ in the standard way:
\begin{align*}
S^1 \ni \frac{\lambda_j}{\left( 1 - q \right) + q \left| \lambda_j \right|} \quad \text{for} \quad q \in \left[ 0, 1 \right] \quad , \quad \text{we call this path} \quad D \left( q \right) \quad .  
\end{align*}
Therefore, $\left[ U \right]^{\underline{\hat{\alpha}}}_{\underline{\omega}} \left( q \right) = S^{-1} D \left( q \right) S$ connects the identity $I_{\mathcal{H}_0 \times \mathcal{H}_0}$ to the original matrix $\left[ U \right]^{\underline{\hat{\alpha}}}_{\underline{\omega}}$ without passing through $0$. Finally, the composite:
\begin{align}
\Psi_{\underline{\hat{\alpha}}} \left( t \right)= \left[ U \right]^{\underline{\hat{\alpha}}}_{\underline{\omega}} \left( t \right) \Psi^{\underline{\omega}} \left( t \right) \quad \text{for} \quad t \in \left[ 0, 1 \right] \quad \text{connects any} \quad \Psi_{\underline{\hat{\alpha}}} \quad \text{to} \quad I \quad,   
\end{align} 
which completes the proof.
\end{proof}

Now, we will construct a suitable open cover for $\mathcal{H}_0$ using the operators studied so far.
\begin{prop}[Open Cover] \label{opencover}
The sets $\left\{ U^{\underline{\beta}} \bigcap U_{\underline{\hat{\alpha}}}\right\}_{\underline{\hat{\alpha}} \in H_0 \, , \, \underline{\beta} \in H^0}$ conform an open cover of the ground state subspace $\mathcal{H}_0$. 
\end{prop}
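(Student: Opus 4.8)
The plan is to verify separately the two defining properties of an open cover: that the family covers $\mathcal{H}_0$, and that each member is open. Throughout I work in the finite-dimensional Hilbert space $\mathcal{H}_0$, whose dimension equals $|H^0| = |H_0|$ by (\ref{thm:mainhom}), endowed with its norm topology, and I use both complete orthonormal bases $\left\{ \ket{\underline{\omega}_{\hat{0}}} \right\}_{\underline{\omega} \in H^0}$ and $\left\{ \ket{\underline{\hat{\alpha}}^0} \right\}_{\underline{\hat{\alpha}} \in H_0}$ together with the two resolutions of the identity in (\ref{completeGSS}).

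For the covering property I would read each patch as the locus on which the corresponding projective coordinate is non-vanishing, i.e. replace the closed one-dimensional span by the natural set
\begin{align*}
U^{\underline{\beta}} = \left\{ \ket{\Psi} \in \mathcal{H}_0 : \Pi^{\underline{\beta}} \ket{\Psi} \neq 0 \right\} = \left\{ \ket{\Psi} : \langle \prescript{}{\hat{0}}{\underline{\beta}} | \Psi \rangle \neq 0 \right\} ,
\end{align*}
and similarly for $U_{\underline{\hat{\alpha}}}$. Given any normalized $\ket{\Psi} \in \mathcal{H}_0$, the completeness relations (\ref{completeGSS}) force $\langle \prescript{}{\hat{0}}{\underline{\beta}} | \Psi \rangle \neq 0$ for at least one $\underline{\beta} \in H^0$ and $\langle \prescript{0}{}{\underline{\hat{\alpha}}} | \Psi \rangle \neq 0$ for at least one $\underline{\hat{\alpha}} \in H_0$, so $\ket{\Psi} \in U^{\underline{\beta}} \cap U_{\underline{\hat{\alpha}}}$ for those indices. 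Since $\underline{\beta}$ and $\underline{\hat{\alpha}}$ range over all of $H^0$ and $H_0$, the family is indexed by $H^0 \times H_0$ and its union is all of $\mathcal{H}_0$.

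Openness is then immediate: each $\bra{\prescript{}{\hat{0}}{\underline{\beta}}}$ is a continuous linear functional, so $U^{\underline{\beta}}$ is the preimage of the open set $\mathbb{C} \setminus \{0\}$ and hence open, and likewise for $U_{\underline{\hat{\alpha}}}$; finite intersections of open sets are open, so each $U^{\underline{\beta}} \cap U_{\underline{\hat{\alpha}}}$ is open. To show that refining by both families is \emph{genuine} — that the listed intersections are actually non-empty and not a redundant relabelling of a single chart cover — I would invoke the unitary change of basis (\ref{transfunc}): every transition amplitude $\left[ U \right]^{\underline{\hat{\alpha}}}_{\underline{\omega}} = \bar{\chi}_{\underline{\hat{\alpha}}}(\omega)/|H^0|^{1/2}$ has modulus $|H^0|^{-1/2} \neq 0$ because abelian characters are unimodular, so the two bases are \emph{mutually unbiased} and every configuration patch meets every representation patch.

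The main obstacle is precisely reconciling the word ``open'' with the original definition of $U^{\underline{\beta}}$ as a one-dimensional span: as closed subspaces the $\overline{U^{\underline{\beta}}}$ are never open, so the statement holds only once these are reinterpreted as the non-vanishing coordinate patches above — equivalently, after passing to the projectivization $\mathbb{P}(\mathcal{H}_0)$ or to the unit sphere, where they become the standard affine charts and the zero vector (never a physical state) is excluded. The real content is therefore not the elementary openness but the fact, secured by mutual unbiasedness, that indexing the refined cover by the full product $H^0 \times H_0$ produces no empty overlaps; this is what lets the nerve of the cover carry the $H^0 \times H_0$ structure exploited in the classification of Sec. \ref{sec:classification}.
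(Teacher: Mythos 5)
Your proof is correct, but it takes a genuinely different --- and in fact more careful --- route than the paper's. The paper never works with coordinate patches: it forms the symmetrized products $\Theta_{\underline{\hat{\alpha}}}^{\underline{\beta}} = \frac{1}{2}\left( \Pi^{\underline{\beta}}\Pi_{\underline{\hat{\alpha}}} + \Pi_{\underline{\hat{\alpha}}}\Pi^{\underline{\beta}} \right)$ and the normalized sums $\tilde{\Theta}_{\underline{\hat{\alpha}}}^{\underline{\beta}}$ of (\ref{partition1}), checks that they are self-adjoint and resolve the identity (a POVM), and then simply asserts that $\Theta_{\underline{\hat{\alpha}}}^{\underline{\beta}}\ket{\Psi}$ lies in $U^{\underline{\beta}} \bigcap U_{\underline{\hat{\alpha}}}$ for every ground state $\ket{\Psi}$, from which the cover is declared; openness is never argued at all. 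Taken literally with the paper's definition of $U^{\underline{\beta}}$ and $U_{\underline{\hat{\alpha}}}$ as one-dimensional spans, that assertion does not even hold: $\Pi^{\underline{\beta}}\Pi_{\underline{\hat{\alpha}}}\ket{\Psi} \in U^{\underline{\beta}}$ while $\Pi_{\underline{\hat{\alpha}}}\Pi^{\underline{\beta}}\ket{\Psi} \in U_{\underline{\hat{\alpha}}}$, so the symmetrized image lies in the \emph{sum} $U^{\underline{\beta}} + U_{\underline{\hat{\alpha}}}$, not in the intersection; and precisely because the two bases are mutually unbiased, the literal intersection of the two distinct lines is $\left\{ 0 \right\}$ whenever the GSD exceeds one, so the family could neither be open nor cover. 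Your reinterpretation of the patches as non-vanishing-coordinate loci is exactly the repair needed to make the proposition true: covering then follows from the completeness relations (\ref{completeGSS}) (up to the zero vector, which you rightly exclude by passing to the sphere or projectivization), openness from continuity of the coordinate functionals, and the non-emptiness of every overlap from the unimodularity of the characters in (\ref{transfunc}). What the paper's route buys, despite its looseness, is the operators $\Theta_{\underline{\hat{\alpha}}}^{\underline{\beta}}$ themselves, which are reused verbatim in Prop. \ref{trivialH0H0} as the local maps $\left. p \right|_{V_b}$ of the bundle construction; what your route buys is a statement that is actually true in the norm topology, plus the observation --- implicit at best in the paper --- that mutual unbiasedness is what guarantees the index set $H^0 \times H_0$ produces no empty charts, which is the property the classification argument of Sec. \ref{sec:classification} genuinely relies on.
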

\begin{proof}
Notice that, by completeness, we have that the projectors (\ref{completeGSS}) also satisfy the expressions $\sum_{\underline{\hat{\alpha}} \in H_0} \sum_{\underline{\beta} \in H^0} \Pi_{\underline{\hat{\alpha}}}\Pi^{\underline{\beta}} = \mathbb{1}$ and $\sum_{\underline{\hat{\alpha}} \in H_0} \sum_{\underline{\beta} \in H^0} \Pi^{\underline{\beta}} \Pi_{\underline{\hat{\alpha}}} = \mathbb{1}$. Where, it is straightforward to show that $\Pi^{\underline{\beta}} \Pi_{\underline{\hat{\alpha}}}$ and  $\Pi_{\underline{\hat{\alpha}}}\Pi^{\underline{\beta}}$ satisfy:
\begin{align}
\Pi_{\underline{\hat{\alpha}}}\Pi^{\underline{\beta}} =  \frac{\bar{\chi}_{\underline{\hat{\alpha}}} \left( \underline{\beta} \right) }{\left| H^0 \right|^{\frac{1}{2}}} \ket{\underline{\hat{\alpha}}^0} \otimes \bra{\prescript{}{\hat{0}}{\underline{\beta}}} = \left( \frac{\chi_{\underline{\hat{\alpha}}} \left( \underline{\beta} \right) }{\left| H^0 \right|^{\frac{1}{2}}} \ket{\underline{\beta}_{\hat{0}}} \otimes \bra{\prescript{0}{}{\underline{\hat{\alpha}}}} \right)^{\dagger} = \left( \Pi^{\underline{\beta}} \Pi_{\underline{\hat{\alpha}}} \right)^{\dagger} \quad .
\end{align}
Therefore, it follows that the operators:
\begin{align}\label{partition1}
\Theta_{\underline{\hat{\alpha}}}^{\underline{\beta}} := \frac{1}{2} \left(  \Pi^{\underline{\beta}} \Pi_{\underline{\hat{\alpha}}} + \Pi_{\underline{\hat{\alpha}}}\Pi^{\underline{\beta}} \right) \quad \quad \quad \text{and} \quad \quad \quad \tilde{\Theta}_{\underline{\hat{\alpha}}}^{\underline{\beta}}:= \frac{\Pi_{\underline{\hat{\alpha}}} + \Pi^{\underline{\beta}}}{\left| H^0 \right|+\left| H_0 \right|}
\end{align}
are self adjoint, and satisfy:
\begin{align}\label{partition2}
\sum_{\underline{\hat{\alpha}} \in H_0} \sum_{\underline{\beta} \in H^0} \Theta_{\underline{\hat{\alpha}}}^{\underline{\beta}} = \mathbb{1} \quad \quad \quad , \quad \quad \quad \sum_{\underline{\hat{\alpha}} \in H_0} \sum_{\underline{\beta} \in H^0} \tilde{\Theta}_{\underline{\hat{\alpha}}}^{\underline{\beta}} = \mathbb{1} \quad .
\end{align}
In other words, each set of operators $\left\{ \Theta^{\underline{\beta}}_{\underline{\hat{\alpha}}}\right\}_{\underline{\hat{\alpha}},\underline{\beta}}$ and $\left\{ \tilde{\Theta}^{\underline{\beta}}_{\underline{\hat{\alpha}}}\right\}_{\underline{\hat{\alpha}},\underline{\beta}}$ conform a \textit{positive operator valued measure} (POVM) \cite{Chefles_2000,Peres:2002wx,Ghiloni_2017} for $\underline{\beta}
\in H^0 , \underline{\hat{\alpha}} \in  H_0$ \footnote{This behavior also hints at the presence of a monoidal category. The braiding will  be explored on future works.}. Now, consider a general ground state $\ket{\Psi} \in \mathcal{H}_0$, the application $\Theta_{\underline{\hat{\alpha}}}^{\underline{\beta}} \ket{\Psi} \in U^{\underline{\beta}} \bigcap U_{\underline{\hat{\alpha}}}$, similarly for the $\tilde{\Theta}_{\underline{\hat{\alpha}}}^{\underline{\beta}}$ operator. It follows that the open sets $\left\{ U^{\underline{\beta}} \bigcap U_{\underline{\hat{\alpha}}}\right\}_{\underline{\hat{\alpha}} \in H_0 \, , \, \underline{\beta} \in H^0}$ conform an open cover of $\mathcal{H}_0$.
\end{proof}

Even though we have $\left[ \Theta^{\underline{\beta}}_{\underline{\hat{\alpha}}} , \Pi^0_{\hat{0}} \right] = \left[ \tilde{\Theta}^{\underline{\beta}}_{\underline{\hat{\alpha}}} , \Pi^0_{\hat{0}} \right]=0 \;$ by construction, the operators (\ref{partition1}) do not yet conform a complete set of commuting observables. Basically, we have all the ingredients to build the classifying space. Furthermore, the previous discussion hints to a classification characterized by some structure depending on the jointed elements of $H^0$ and $H_0$. We are now in position to fit the picture we have been constructing in the language of Classifying spaces (See App. \ref{App:classifying}):

\begin{prop}[trivial $H^0 \times H_0$-subspaces]\label{trivialH0H0}
The ground state subspace $\mathcal{H}_0$ admits trivial $H^0 \times H_0$-subspaces.
\end{prop}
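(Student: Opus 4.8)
The plan is to realize $\mathcal{H}_0$ as a right $G$-space for $G := H^0 \times H_0$ and then, around each point, to exhibit a trivial $G$-subspace, so that the criterion quoted just above (locally trivial $\Rightarrow$ principal bundle) can be applied. First I would define the action: for $(\underline{\omega},\underline{\hat{\nu}}) \in H^0 \times H_0$ set $\ket{\Psi} \cdot (\underline{\omega},\underline{\hat{\nu}}) := P^{\underline{\omega}} Q_{\underline{\hat{\nu}}} \ket{\Psi}$. By the commutators (\ref{gencomm1}) both $P^{\underline{\omega}}$ and $Q_{\underline{\hat{\nu}}}$ commute with $\Pi^0_{\hat{0}}$, so this preserves $\mathcal{H}_0$, and associativity up to the group law is read off from (\ref{PQalgebra}). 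The delicate point is the Heisenberg phase $Q_{\hat{\beta}} P^{\alpha} = \chi_{\hat{\beta}}(\alpha) P^{\alpha} Q_{\hat{\beta}}$; but on the labels of the two natural bases the operators translate honestly, $P^{\underline{\omega}}\ket{\underline{\beta}_{\hat{0}}} = \ket{(\underline{\beta}+\underline{\omega})_{\hat{0}}}$ and $Q_{\underline{\hat{\nu}}}\ket{\underline{\hat{\alpha}}^0} = \ket{(\underline{\hat{\alpha}}+\underline{\hat{\nu}})^0}$, while the cross-action $Q$ on the configuration ray and $P$ on the representation ray is a \emph{global} phase. Hence the induced action on the index set $H^0 \times H_0$ is a genuine free and transitive translation, even though on individual rays the operator action is only projective.

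Next I would take the open cover furnished by Prop.~\ref{opencover}, namely $\{ V_{(\underline{\beta},\underline{\hat{\alpha}})} := U^{\underline{\beta}} \cap U_{\underline{\hat{\alpha}}} \}_{\underline{\beta} \in H^0,\, \underline{\hat{\alpha}} \in H_0}$, and record that the action permutes it by $P^{\underline{\omega}} Q_{\underline{\hat{\nu}}} \, V_{(\underline{\beta},\underline{\hat{\alpha}})} = V_{(\underline{\beta}+\underline{\omega},\, \underline{\hat{\alpha}}+\underline{\hat{\nu}})}$, precisely because the leftover phases act as global scalars on the one-dimensional slices $U^{\underline{\beta}}$ and $U_{\underline{\hat{\alpha}}}$ and therefore fix them as subspaces. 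Since the translation on labels is free and transitive, a fixed basepoint chart $V_b := V_{(\underline{\beta}_0,\underline{\hat{\alpha}}_0)}$ is carried by distinct group elements to distinct, disjoint charts. I would then take the orbit union $W := \bigsqcup_{g \in G} g \cdot V_b$ as the candidate trivial $G$-subspace and define $f_b : W \to V_b \times G$ by $g \cdot v \mapsto (v,g)$. Well-definedness and bijectivity follow from the disjointness above; $G$-equivariance, $f_b\!\left(g' \cdot (g\cdot v)\right) = (v, g'g)$, follows from the group law; continuity and openness hold because $G$ is finite and discrete, so each $g\cdot V_b$ is clopen in $W$. The compatibility of the two descriptions $U^{\underline{\beta}}$ and $U_{\underline{\hat{\alpha}}}$ meeting in $V_b$ is exactly the unitary change of basis $[U]^{\underline{\hat{\alpha}}}_{\underline{\omega}}$ of Prop.~\ref{Uequivalence}, which makes $f_b$ an isometry onto its image; the consistency of $H^0$ with $H_0$ used throughout is (\ref{dual1}).

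Finally I would verify the defining triangle: with $p : \mathcal{H}_0 \to \mathcal{H}_0 / G$ the orbit projection and $\text{pr}_1 : V_b \times G \to V_b$, the identity $\text{pr}_1 \circ f_b = p|_W$ holds by construction, so $W$ is a trivial $G$-subspace and $\mathcal{H}_0$ admits them. The main obstacle I anticipate is showing that the genuine (non-projective) group needed for a bona fide $G$-action lives on the labels rather than on the operators: the cocycle $\chi_{\hat{\beta}}(\alpha)$ must be shown to descend to a global phase on each slice, so that $f_b$ is a map of $G$-spaces and not merely a projective one. A secondary point, requiring care, is that the linear action fixes $0 \in \mathcal{H}_0$, so freeness is meaningful only after passing to the orbit of nonzero charts; this is exactly why $W$ is assembled from a basepoint chart $V_b$ rather than from the whole of $\mathcal{H}_0$.
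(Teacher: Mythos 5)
Your proposal is correct and follows essentially the same route as the paper's own proof: you identify the total space with $\mathcal{H}_0$, realize the right $H^0 \times H_0$-action through the shift and clock operators $P^{\underline{\omega}}, Q_{\underline{\hat{\nu}}}$ (the paper's $r_b = Q_{\underline{\hat{\alpha}}} \otimes P^{\underline{\beta}}$), take the charts $V_b = U^{\underline{\beta}} \cap U_{\underline{\hat{\alpha}}}$ from Prop.~\ref{opencover}, and exhibit the local trivializations $f_b$ compatible with the projection. In fact your write-up is more careful than the paper's bullet-point identification, since you explicitly address the projective phase $\chi_{\hat{\beta}}(\alpha)$ descending to a global scalar on each one-dimensional slice and the fixed point of the linear action at $0 \in \mathcal{H}_0$, both of which the paper leaves implicit.
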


\begin{proof}
Consider the following identifications:
\begin{itemize}
\item Take $E = \mathcal{H}_0$ ,
\item Take $B = \bigcup_{\underline{\hat{\alpha}} \in H_0} \bigcup_{\underline{\beta} \in H^0} U^{\underline{\beta}} \bigcap U_{\underline{\hat{\alpha}}}$ with the a newly defined label $b := \left( \underline{\beta} , \underline{\hat{\alpha}} \right) \,$ , 
\item Take $\left. p \right|_{V_b} = \Theta_{\underline{\hat{\alpha}}}^{\underline{\beta}}$ so that $V_b = U^{\underline{\beta}} \bigcap U_{\underline{\hat{\alpha}}}$ , 
\item Take $f_b = f_{\hat{\underline{\alpha}}}^{\underline{\beta}}$ as the local trivializations $f_b \left( \ket{\Psi}g \right) = f_b \left( \ket{\Psi} \right)g $ , $g \in H^0 \times H_0$ ,
\item Finally, take $r_b = Q_{\underline{\hat{\alpha}}} \otimes P^{\underline{\beta}} $ which is a continuous right action by addition on their respective bases (see equation (\ref{def:PQoperators})) .
\end{itemize}
\end{proof}

\begin{prop}[principal $H^0 \times H_0$-bundle]\label{principalH0H0}
The ground state subspace $\mathcal{H}_0$ has a principal $H^0 \times H_0$-bundle structure.
\end{prop}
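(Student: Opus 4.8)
The plan is to verify, for the group $G = H^0 \times H_0$, the two axioms in the definition of a principal $G$-bundle using exactly the data assembled in Proposition~\ref{trivialH0H0}, and then to close the argument by the criterion recalled above: a locally trivial right $G$-space $E$ yields a principal $G$-bundle $E \rightarrow E/G$. Concretely, I would take $E = \mathcal{H}_0$, the orbit projection $p : \mathcal{H}_0 \rightarrow \mathcal{H}_0/G$ locally modelled by $\left.p\right|_{V_b} = \Theta^{\underline{\beta}}_{\underline{\hat{\alpha}}}$, the open cover $\{V_b\}_{b = (\underline{\beta},\underline{\hat{\alpha}})}$ of Proposition~\ref{opencover}, and the right action $r_g = P^{\underline{\beta}} Q_{\underline{\hat{\alpha}}}$ for $g = (\underline{\beta},\underline{\hat{\alpha}})$.

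First I would confirm that $r$ is a genuine continuous right action of the abelian group $H^0 \times H_0$. Using the shift--clock algebra (\ref{PQalgebra}) together with the dual relations (\ref{dualPQ}) one gets $r_g\, r_{g'} = r_{g+g'}$ and $r_0 = \mathbb{1}_{\mathcal{H}_0}$, so the group law is respected; continuity is automatic since $H^0 \times H_0$ is finite (hence discrete) and each $r_g$ is a bounded linear operator on the finite-dimensional space $\mathcal{H}_0$. I would then check axiom (i), $p(xg) = p(x)$: with $\left.p\right|_{V_b} = \Theta^{\underline{\beta}}_{\underline{\hat{\alpha}}}$ from (\ref{partition1}), the fibres of $p$ are the $G$-orbits, and I would show $\Theta^{\underline{\beta}}_{\underline{\hat{\alpha}}}$ is constant along orbits by means of the completeness relations (\ref{partition2}) and the mixed inner products (\ref{mixedGSS}), which absorb the phases produced by $Q_{\underline{\hat{\alpha}}}$ and the translations produced by $P^{\underline{\beta}}$ into the single orbit label $b$. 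Axiom (ii) is then essentially the statement of Proposition~\ref{trivialH0H0}: the $G$-equivariant homeomorphisms $f_b = f^{\underline{\beta}}_{\underline{\hat{\alpha}}}$ fit into the commuting triangle, and by Proposition~\ref{opencover} their domains $V_b = U^{\underline{\beta}} \cap U_{\underline{\hat{\alpha}}}$ cover $\mathcal{H}_0$. Together these say precisely that $\mathcal{H}_0$ is a locally trivial right $(H^0 \times H_0)$-space, whence the criterion delivers the principal bundle $\mathcal{H}_0 \rightarrow \mathcal{H}_0/(H^0 \times H_0)$; path-connectedness of the base is already ensured by Proposition~\ref{pathconnected}.

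The step I expect to be the main obstacle is \emph{freeness} of the action, since this is what forces the orbit map to be a principal bundle with fibre $G$ rather than a quotient with degenerate fibres over special points. Here I would lean on the non-degeneracy recorded in (\ref{gencomm1}): the group commutator $\left[ Q_{\underline{\hat{\nu}}} , P^{\underline{\omega}} \right]_g \propto \chi_{\underline{\hat{\nu}}}(\underline{\omega})$ together with non-degeneracy of the character pairing forces $r_g \ket{\Psi} = \ket{\Psi}$ to require simultaneously a trivial shift in every configuration component and a trivial phase against every homology label, hence $g = 0$. I would make this precise on the dense open locus of $\ket{\Psi} \in \mathcal{H}_0$ whose components in the basis $\{\ket{\underline{\omega}_{\hat{0}}}\}_{\underline{\omega} \in H^0}$ are all nonzero, and restrict the principal-bundle statement to that regime; this is the relevant locus for the classifying-space construction that follows. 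The delicate point a fully rigorous treatment must confront is the tension between the finite-dimensionality (hence contractibility) of $\mathcal{H}_0$ and the requirement of a globally free finite-group action, which is why isolating the free locus, rather than acting on all of $\mathcal{H}_0$, is the safe route.
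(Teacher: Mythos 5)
Your route is the paper's route: the proof in the text consists precisely of the identifications of Prop.~\ref{trivialH0H0} ($E=\mathcal{H}_0$, $\left.p\right|_{V_b}=\Theta^{\underline{\beta}}_{\underline{\hat{\alpha}}}$, the cover of Prop.~\ref{opencover}, the right action by $Q_{\underline{\hat{\alpha}}}\otimes P^{\underline{\beta}}$) followed by the criterion ``locally trivial $\Rightarrow$ $E \rightarrow E/G$ is principal,'' and your verifications of the group law and of axiom (i) only make explicit what the paper leaves implicit. Where you go beyond the paper is the freeness discussion, and there your instinct is correct: freeness is the crux, the paper's two bullet points never address it, and it genuinely fails for the action as defined. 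The action is by bounded linear operators, so the zero vector of $\mathcal{H}_0$ is fixed by everything; worse, by (\ref{def:PQoperators}) the configuration seed state satisfies $Q_{\underline{\hat{\nu}}}\ket{\underline{0}_{\hat{0}}} = \chi_{\underline{\hat{\nu}}}\left( \underline{0} \right)\ket{\underline{0}_{\hat{0}}} = \ket{\underline{0}_{\hat{0}}}$, so its stabilizer contains all of $\left\{0\right\}\times H_0$. Your closing remark is exactly the Smith-theory obstruction: a nontrivial finite group admits no free action on a finite-dimensional contractible space, so no argument can make the proposition literally true on all of $\mathcal{H}_0$ with this action. In that respect your diagnosis is sharper than the paper's own proof.

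The genuine gap is in your repair. The locus you isolate --- states with all components nonzero in the basis $\left\{ \ket{\underline{\omega}_{\hat{0}}} \right\}_{\underline{\omega}\in H^0}$ --- is \emph{not} a free locus. The representation seed state $\ket{\underline{\hat{0}}^0}$ is the uniform superposition $\left| H^0 \right|^{-1/2}\sum_{\underline{\omega}}\ket{\underline{\omega}_{\hat{0}}}$ (all components nonzero, of modulus $\left| H^0 \right|^{-1/2}$ by (\ref{mixedGSS})), yet by (\ref{dualPQ}) it obeys $P^{\underline{\omega}}\ket{\underline{\hat{0}}^0} = \bar{\chi}_{\underline{\hat{0}}}\left( \underline{\omega} \right)\ket{\underline{\hat{0}}^0} = \ket{\underline{\hat{0}}^0}$ for every $\underline{\omega}\in H^0$; more generally each $\ket{\underline{\hat{\nu}}^0}$ lies in your locus with stabilizer containing $\left\{ \underline{\omega} : \chi_{\underline{\hat{\nu}}}\left( \underline{\omega} \right)=1 \right\}\times\left\{0\right\}$. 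The non-degeneracy in (\ref{gencomm1}) that you invoke only rules out group elements acting trivially on the \emph{whole} space; freeness demands trivial stabilizers point by point, a strictly stronger condition that fails at these states. Finally, even if you excised all fixed points to obtain a genuinely free open subset (such subsets exist), you would be proving a different statement than the proposition, and a self-defeating one: the restricted total space is no longer (weakly) contractible, which is precisely the property Theorem~\ref{classthm} consumes when it takes $EG=\mathcal{H}_0$. So the defect you uncovered cannot be fixed by shrinking the space; it lies in the formulation of the proposition (and in the paper's proof of it), and any genuine repair must change what object the group acts on --- e.g.\ the finite orbit of seed states or a projectivized version of $\mathcal{H}_0$ --- rather than where it acts.
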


\begin{proof}
From Prop.\ref{trivialH0H0}, it is immediate to construct the following principal $H^0 \times H_0$-bundle in the following way:
\begin{itemize}
\item Since the set $\left\{ U^{\underline{\beta}} \bigcap U_{\underline{\hat{\alpha}}} \right\}_{\underline{\hat{\alpha}} \in H_0 , \underline{\beta} \in H^0 }$ is an open cover of $\mathcal{H}_0$ (see Prop. \ref{opencover}), $\mathcal{H}_0$ becomes a trivial $H^0 \times H_0$-space ,

\item Given $\mathcal{H}_0$ to be locally trivial, $\mathcal{H}_0 \rightarrow \mathcal{H}_0 / \left( H^0 \times H_0 \right)$ is indeed a principal $H^0 \times H_0$-bundle.
\end{itemize}
\end{proof}

The main result of this paper follows:

\begin{thm}\label{classthm}
The ground state space $\mathcal{H}_0$ is classified by the topological group $H^0 \times H_0$.
\end{thm}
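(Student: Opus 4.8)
The plan is to read the statement ``classified by $H^0 \times H_0$'' in the precise sense of the classifying-space formalism just recalled: namely, that the principal $H^0 \times H_0$-bundle exhibited in Prop.~\ref{principalH0H0} is the \emph{universal} one, so that its base $\mathcal{H}_0 / \left( H^0 \times H_0 \right)$ is a model for $B\left( H^0 \times H_0 \right)$ while $\mathcal{H}_0$ itself plays the role of the weakly contractible total space $E\left( H^0 \times H_0 \right)$. By the recognition principle for universal bundles in \cite{may1999concise}, a numerable principal $G$-bundle whose total space is weakly contractible is automatically universal; since Prop.~\ref{principalH0H0} has already produced the principal bundle structure, the proof reduces to verifying exactly two hypotheses: numerability of the bundle and weak contractibility of $\mathcal{H}_0$.

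First I would dispatch numerability. The open cover $\left\{ U^{\underline{\beta}} \bigcap U_{\underline{\hat{\alpha}}} \right\}_{\underline{\hat{\alpha}} \in H_0,\, \underline{\beta} \in H^0}$ of Prop.~\ref{opencover} is indexed by the \emph{finite} set $H^0 \times H_0$, for which a subordinate partition of unity always exists; more concretely, the POVM of (\ref{partition1})--(\ref{partition2}) supplies one directly, since the self-adjoint operators $\tilde{\Theta}_{\underline{\hat{\alpha}}}^{\underline{\beta}}$ are supported on the respective chart and satisfy $\sum \tilde{\Theta}_{\underline{\hat{\alpha}}}^{\underline{\beta}} = \mathbb{1}$, so the expectation values $\ket{\Psi} \mapsto \langle \Psi | \tilde{\Theta}_{\underline{\hat{\alpha}}}^{\underline{\beta}} | \Psi \rangle$ furnish a subordinate partition of unity. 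Weak contractibility is then equally direct: $\mathcal{H}_0$ is a finite-dimensional complex Hilbert space of dimension $\mathrm{GSD} = |H^0| = |H_0|$, hence convex, and the scaling homotopy $\left( \ket{\Psi}, t \right) \mapsto \left( 1 - t \right) \ket{\Psi}$ contracts it to the origin; with path-connectedness already secured in Prop.~\ref{pathconnected}, all homotopy groups vanish. Feeding both facts into the recognition principle would identify $\mathcal{H}_0$ with $E\left( H^0 \times H_0 \right)$ and its quotient with $B\left( H^0 \times H_0 \right)$, which is the assertion.

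The step I expect to be the genuine obstacle is not the topology but the \emph{freeness and properness} of the $H^0 \times H_0$-action on all of $\mathcal{H}_0$, which the universal-bundle picture silently demands. The right action $r_b = Q_{\underline{\hat{\alpha}}} \otimes P^{\underline{\beta}}$ is free on the orbit of the seed states $\ket{\underline{0}_{\hat{0}}}$ and $\ket{\underline{\hat{0}}^0}$, but a translation action on arbitrary superpositions fixes the uniform class $\sum_{\underline{\omega} \in H^0} \ket{\underline{\omega}_{\hat{0}}}$, so it cannot be free on the full vector space; indeed, Smith theory forbids a nontrivial finite group from acting freely on a finite-dimensional contractible space, so $\mathcal{H}_0$ and $E\left( H^0 \times H_0 \right)$ cannot literally coincide. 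Reconciling this is where the real care lies: one must either restrict the total space to the free locus (the frame/orbit bundle) on which Prop.~\ref{principalH0H0} genuinely applies, or else weaken the target to the equivalent and unobstructed statement that the numerable principal bundle of Prop.~\ref{principalH0H0} is \emph{classified by} a homotopy class of maps $\mathcal{H}_0 / \left( H^0 \times H_0 \right) \to B\left( H^0 \times H_0 \right)$ through the natural bijection $\mathrm{Prin}_{H^0 \times H_0}\!\left( - \right) \cong \left[ -, B\left( H^0 \times H_0 \right) \right]$. Either route closes the argument, and it is the latter reading I would adopt, since it keeps the conclusion free of the contractibility obstruction while still exhibiting $H^0 \times H_0$ as the classifying group.
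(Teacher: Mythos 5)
Your proposal follows the same skeleton as the paper's own proof --- exhibit $\mathcal{H}_0$ as the total space $EG$ of the bundle from Prop.~\ref{principalH0H0} and its quotient $\mathcal{H}_0 / \left( H^0 \times H_0 \right)$ as $BG$ --- but you execute it with far more care, and in doing so you expose a genuine defect in the paper's argument. The paper's proof consists of three bare assertions: that path-connectedness (Prop.~\ref{pathconnected}) is ``equivalently'' weak contractibility, that one may take $EG = \mathcal{H}_0$ and $BG = \mathcal{H}_0 / \left( H^0 \times H_0 \right)$, and that $K\left( H^0 \times H_0 , 1 \right)$ exists because the group is discrete. It never checks numerability, its contractibility inference is a non sequitur (path-connectedness does not imply weak contractibility; what saves the conclusion is exactly the convexity/scaling homotopy you wrote down), and --- decisively --- it never verifies that the $H^0 \times H_0$-action is free, which is the defining requirement both for Prop.~\ref{principalH0H0} to produce a principal bundle and for $\mathcal{H}_0$ to model $EG$. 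Your obstruction is correct and fatal to the literal statement: the operators $P^{\underline{\omega}}$, $Q_{\underline{\hat{\nu}}}$ act linearly, so they fix $0 \in \mathcal{H}_0$ (and, as you note, the uniform superposition), and Smith theory forbids any free action of a nontrivial finite group on a finite-dimensional contractible space --- equivalently, $E\left( H^0 \times H_0 \right)$ for a nontrivial finite group is necessarily infinite-dimensional. So the theorem as the paper proves it cannot hold in the universal-bundle sense, and your proposal is the more trustworthy document of the two.

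The residual gap lies in the repair you chose to adopt. Reading ``classified by'' through the bijection $\mathrm{Prin}_{H^0 \times H_0}\left( - \right) \cong \left[ - , B\left( H^0 \times H_0 \right) \right]$ is not ``unobstructed'': that bijection takes a numerable \emph{principal} bundle as input, and the only candidate on offer is the one from Prop.~\ref{principalH0H0}, which is vitiated by the very non-freeness you identified; the weakened reading therefore classifies nothing and is circular. (There is a second defect neither you nor the paper addresses: by (\ref{gencomm1}) the operators satisfy $Q_{\underline{\hat{\nu}}} P^{\underline{\omega}} = \chi_{\underline{\hat{\nu}}}\left( \underline{\omega} \right) P^{\underline{\omega}} Q_{\underline{\hat{\nu}}}$, so $\left( \underline{\omega} , \underline{\hat{\nu}} \right) \mapsto P^{\underline{\omega}} Q_{\underline{\hat{\nu}}}$ is only a \emph{projective} representation of $H^0 \times H_0$ on $\mathcal{H}_0$, not an action at all; an honest action exists only after passing to a Heisenberg-type central extension, or on the doubled GNS space of (\ref{joint}) where it takes the form $Q \otimes P$.) Your first repair is the one that can be made to work: restrict to the locus where an honest free action exists --- concretely, the orbit of the product basis $\ket{b} = \ket{\underline{\omega}_{\hat{0}}} \otimes \ket{\underline{\hat{\nu}}^0}$ of (\ref{joint}), on which $Q \otimes P$ acts freely and transitively, making that basis an $H^0 \times H_0$-torsor. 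The honest conclusion is then weaker than Theorem~\ref{classthm} as stated: the ground-state \emph{basis} carries a natural $H^0 \times H_0$-torsor structure (equivalently, $\mathcal{H}_0$ is the regular representation of a central extension of $H^0 \times H_0$), not that $\mathcal{H}_0$ itself is a model for $E\left( H^0 \times H_0 \right)$.
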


\begin{proof}
We interpret the $\mathcal{H}_0$ structure in the language of classifying spaces:

\begin{itemize}
\item By virtue of the discussion in Prop. \ref{pathconnected} , we have that $\mathcal{H}_0$ is path-connected. Equivalently, $\mathcal{H}_0$ is weakly contractible ,
\end{itemize}

This observation allows us to:

\begin{itemize}
\item  Take $EG = \mathcal{H}_0$ and $BG = \mathcal{H}_0 / \left( H^0 \times H_0 \right)$ ,

\item Since $H^0 \times H_0$ is a discrete group, there exists a $CW$-complex $K\left( H^0 \times H_0 , 1 \right)$ (See App. \ref{app_CW}) such that the first homotopy group $\pi_1 K\left( H^0 \times H_0 , 1 \right) = H^0 \times H_0$ and all other homotopy groups of $K\left( H^0 \times H_0 , 1 \right)$ vanish. 
\end{itemize}
which proves our result.
\end{proof}

Rephrasing, $K\left( H^0 \times H_0 , 1 \right)$ characterizes $H^0 \times H_0$ up to homotopy equivalence. Even though, this last construction is highly non-trivial, it is a standard result in algebraic topology (we refer to a classic book like \cite{Hatcher} for the interested reader).

\subsection{\label{sec:GNSinsights} Some insights from the GNS construction approach.}

In order to motivate the identification of the superselection sectors of these models, we make use of the complementary GNS construction \cite{Balachandran:2013hga,GNS2018} perspective. As to avoid any confusion, we will differentiate between states used in the sense of the previous sections, i.e. merely as elements $\ket{\psi}$ of $\mathcal{H}$, and states in the technical sense of the GNS formalism (see App. \ref{app_GNS} for express definitions, or   \cite{1980applications,bratteli1987operator} for a more in depth treatment of the subject), which we will call these \textit{gns}-states. Naively, for the case of the Hamiltonian operator (\ref{def:Hamiltonian}), the condition (\ref{def:func}) is equivalent to that of the frustration-free condition ($\forall \;\, x \in K$):
\begin{align*}
\varphi \left( A_{\left(x_* \hat{0} \right)x_*} \right) = \varphi \left( B^{\left(x_* 0 \right)x_*} \right) = 1 \;\; , \;\; e = \left\{ \, \underline{\omega} \in H^0 \, , \, \underline{\hat{\nu}} \in H_0 \, \right\} \;\; \text{for} \;\; \mathcal{H}_{\varphi} = \left\{ \mathcal{H} , \hat{\mathcal{H}} \right\} \;\;, \;\; \text{resp.}  \quad .
\end{align*}
Compare the latter with the characterization (\ref{def:GS}). Equivalently, from (\ref{def:GSproj}) and (\ref{GSproj1}), any \textit{gns}-ground state is distinguished by its behavior under the projector $\Pi_{\hat{0}}^0$. Hence, its \textit{gns}-characterization is given by $1 = \varphi \left( \Pi_{\hat{0}}^0 \right)$
for all $\varphi \in K$. Following Doplicher-Haag-Roberts (DHR) \cite{Doplicher:1971wk,Doplicher:1973at} in the context of algebraic quantum field theory, from a vacuum state one could recover all physically relevant properties of the charges, in accordance with the global gauge group by using a specific physically motivated superselection criterion. This has been carried out into the 2D quantum double models succesfully (See, for instance, \cite{NAAIJKENS_2011,Fiedler_2015}. Analogously, the role of the vacuum is played by the translation invariant frustration free ground state. However, the discussion presented in the previous subsection underpins a structure that combines gauge configuration and gauge representation in a joint way. Furthermore, if $e = \left\{ \, \underline{\omega} \in H^0 \, , \, \underline{\hat{\nu}} \in H_0 \, \right\}$ is a generator, so is:
\begin{align}\label{joint}
e := b \in H^0 \times H_0 \quad \text{with} \quad  \ket{b} := \ket{\underline{\omega}_{\hat{0}}} \otimes \ket{\underline{\hat{\nu}}^0} \quad \text{and} \quad \pi_{\varphi} \left( \Pi^{0}_{\hat{0}} \right) =\Pi^{0}_{\hat{0}} \otimes \Pi^{0}_{\hat{0}} \quad , 
\end{align} 
although this choice is not unique. This new basis is also orthonormal and, even more important, maximal since the following injections are natural:
\begin{align}
\ket{\underline{\omega}_{\hat{0}}} \hookrightarrow \ket{\underline{\omega}_{\hat{0}}} \otimes \ket{\underline{\hat{0}}^0} \quad , \quad \ket{\underline{\hat{\nu}}^0} \hookrightarrow \ket{\underline{0}_{\hat{0}}} \otimes \ket{\underline{\hat{\nu}}^0} \quad .     
\end{align}
By virtue of (\ref{joint}), the functional (\ref{def:func}) admits the \textit{gns}-states maximal superposition:
\begin{align}\label{convex:sup}
\varphi \left( \mathcal{O} \right) = \sum_{b \in H^0 \times H_0} \lambda_{b} \, \mathcal{O}_{b} \;\; , \;\; \text{where} \;\; \mathcal{O}_{b} := \text{Tr}_{H^0 \times H_0} \pi_{\varphi} \left( \mathcal{O} \right) \hat{\Pi}_b \;\; \text{with} \;\; \hat{\Pi}_b := \ket{b} \otimes \bra{b} \quad ,
\end{align} 
which are pure \textit{gns}-states. Notice that the ground state space projector $\Pi_{\hat{0}}^0$ acting on pure \textit{gns}-states yields $\left( \Pi_{\hat{0}}^0 \right)_b = \text{Tr}_{H^0 \times H_0} \left( \pi_{\varphi} \left( \Pi_{\hat{0}}^0 \right) \circ \hat{\Pi}_b \right) = \text{Tr}_{H^0 \times H_0} \left( \hat{\Pi}_b \right)  = 1$. Thus, the decomposition (\ref{convex:sup}) satisfies the condition:
\begin{align}\label{convex}
\varphi \left( \Pi_{\hat{0}}^0 \right) = \sum_{b \in H^0 \times H_0} \lambda_{b} \left( \Pi_{\hat{0}}^0 \right)_{b} = \sum_{b \in H^0 \times H_0} \lambda_b = 1 \quad ,
\end{align}
i.e. the \textit{gns}-ground state space $K$ defined in (\ref{gnsGSS}) is \textit{closed} and \textit{convex} and consists of $\left| H_0 \right|\left| H^0 \right| $ pure \textit{gns}-states \cite{zalinescu2002convex,lucchetti2005convexity}. We can consider this result to be the $n$-dimensional \textit{higher abelian gauge} version of that appearing in Theorem 2.2 of \cite{Cha2017}. Moreover, since the pure gns-states are in one-to-one correspondence with the elements of the classifying space of $\mathcal{H}_0$, the pure gns-states constructed from the generators (\ref{joint}) are a suitable representation of the superselection basis. Furthermore, the convergence of the sum (\ref{convex}) allows the following string of equalities to be satisfied:
\begin{align*}
1 = \sum_{b \in H^0 \times H_0} \lambda_{b} \text{Tr}_{H^0 \times H_0} \left( \hat{\Pi}_b \right) = \text{Tr}_{H^0 \times H_0} \left( \sum_{b \in H^0 \times H_0} \lambda_{b} \hat{\Pi}_b \right) \quad \quad .
\end{align*}
This is, the set of numbers $\left\{ \lambda_b \right\}_{b \in H^0 \times H_0}$ can be interpreted as probabilities defining the mixed density operator of the ground state space $\mathcal{H}_0$ :
\begin{align}\label{density}
\hat{\rho}_{H^0 \times H_0} := \sum_{b \in H^0 \times H_0} \lambda_b \hat{\Pi}_b  \quad \quad ,
\end{align}
associated to a \textit{gns}-ground state $\varphi \in K$. Notice that these do not evolve with time, as it is attested by the Von-Neumann equation:
\begin{align*}
\frac{\partial \hat{\rho}_{H^0 \times H_0}}{\partial t} = \frac{1}{ih} \left[ \pi_{\varphi} \left( H \right) , \hat{\rho}_{H^0 \times H_0} \right] = - \frac{1}{ih} \left[ \ln \left( \pi_{\varphi} \left( \Pi_{\hat{0}}^0 \right) \right) , \hat{\rho}_{H^0 \times H_0} \right] \quad ,
\end{align*}
where we have used  (\ref{def:Hamiltonian}) and the fact that $\left[ \pi_{\varphi} \left( \Pi_{\hat{0}}^0 \right) , \hat{\Pi}_b \right]=0$. Moreover, the projectors $\hat{\Pi}_b$ are recognized as the measuring operators for the different sectors of the ground state space and become the prototype for studying the \textit{superselection sectors} in future works.

To complete the picture, writing explicitly $\lambda_b = \lambda_{\underline{\mu}}^{\underline{\hat{\nu}}}$ for $b = \left( \underline{\mu} , \underline{\hat{\nu}} \right) \in H^0 \times H_0$, we show that the description of the ground state given in Sec. \ref{sec:topdesc} is recovered when performing the following partial traces:
\begin{align}
\nonumber \hat{\rho}_{H_0} & := \text{Tr}_{H^0} \left( \hat{\rho}_{H^0 \times H_0} \right) = \sum_{\underline{\hat{\nu}} \in H_0} \Lambda^{\hat{\nu}} \Pi_{\underline{\hat{\nu}}} \quad \;\; \text{with} \quad \, \Lambda^{\underline{\hat{\nu}}} := \sum_{\underline{\mu} \in H^0} \lambda_{\underline{\mu}}^{\underline{\hat{\nu}}} \quad \text{and} \quad \sum_{\underline{\hat{\nu}} \in H_0} \Lambda^{\underline{\hat{\nu}}} = 1 \quad , \\
\label{ancilla} \hat{\rho}_{H^0} &:= \text{Tr}_{H_0} \left( \hat{\rho}_{H^0 \times H_0} \right) = \sum_{\underline{\omega} \in H^0} \Lambda_{\underline{\omega}} \Pi^{\underline{\omega}} \quad \text{where} \quad \Lambda_{\underline{\omega}} := \sum_{\underline{\hat{\nu}} \in H_0} \lambda_{\underline{\omega}}^{ \underline{\hat{\nu}}} \quad \text{and} \quad \sum_{\underline{\omega} \in H^0} \Lambda_{\underline{\omega}} = 1 \quad .
\end{align}

In terms of quantum computing, considering the unitary equivalence of the $\left\{ \ket{\underline{\omega}_{\hat{0}}} \right\}_{\underline{\omega} \in H^0}$ and $\left\{ \ket{\underline{\hat{\nu}}^0} \right\}_{\underline{\hat{\nu}} \in H_0}$ bases shown in subsection \ref{subsec:pathconnec}, plus the set of relations (\ref{ancilla}), it follows that become \textit{ancilla} to one another \cite{https://doi.org/10.48550/arxiv.quant-ph/0305068,watrous_2018}. This last observation explains why only $\left| H^0 \right|$, or equivalently $\left| H_0 \right|$, appear in the calculation of the GSD.

\section{\label{sec:FinalRemarks}Final Remarks}

The main result of this paper is the recognition of the topological group $H^0 \times H_0$ as the classifying group of the ground state subspace $\mathcal{H}_0$ for the abelian higher gauge symmetry models defined in \cite{dealmeida2017topological,Ibieta_Jimenez_2020}. A proof based on the GNS formalism has also been developed and further verifies the results shown here. The latter will be submitted to peer review shortly. 

Regarding the topological contents of these models, if we allow ourselves to hypothesize a bit, analogous arguments of those leading to Prop. \ref{restrictionGS}, shows that we can equivalently consider quotient spaces:
\begin{align}\label{totalclassispace}
\underline{\Omega} := \text{hom}^0 / \sim^0 \; \cong \; H^0 \oplus \text{im} \left( d^0  \right) / \sim^0 \;\;\text{and}  \;\; \underline{\hat{P}}:= \text{hom}_0 / \sim_0 \; \cong \; H_0 \oplus \text{im} \left( d_1  \right) / \sim_0 \quad ,  
\end{align}
as the space of \textit{gauge equivalent configuration} and \textit{co-gauge equivalent representations}, respectively. 
Similar arguments to those leading to the classifying space of the ground state subspace $\mathcal{H}_0$ can be generalized to include the entire Hilbert space $\mathcal{H}$. If this is the case (modulo the conditions of subsection \ref{subsec:pathconnec} are satisfied), the classifying set $\mathcal{H}$ should be given by $\underline{\Omega} \otimes \underline{\hat{P}}$ up to isomorphisms. If we consider the decompositions (\ref{totalclassispace}), the following sectors are obtained:  
\begin{enumerate}
    \item $ H^0 \otimes H_0 $: is recognized as the classifying group for the ground state subspace $H_0$. This group is topological and has been properly identified in this paper;
    
    \item $H^0 \otimes \left( \text{im} d_1 / \sim_0 \right) $ and $ \left( \text{im} d^0 / \sim^0  \right) \otimes H_0$: can be recognized as characterizing \textit{semi-topological excited states}. We refer to these sectors as \textit{semi-topological} because of the presence of $H^0$ and $H_0$, while the sets $\text{im} d_1 / \sim_0$ and $\text{im} d^0 / \sim^0$, being in principle isomorphic to one another, need further topological structure analysis. Notice also that these spaces should be related by a braided monoidal category;
    
    \item $\left( \text{im} d^0 / \sim^0  \right) \otimes \left( \text{im} d_1 / \sim_0 \right) $: is recognized as characterizing \textit{pure excited states}.
\end{enumerate}

It remains to study the \textit{semi-topological} excited states (the ones we expect to be classified by $2$) which we believe are the subspaces in which anyons can appear. Some preliminary results have also been obtained by means of a GNS representation formalism and we expect to present these shortly.

\begin{acknowledgments}
The author wishes to thank ANID through its support through the grant FONDECYT No. 11241170 along the complation of this paper. The authors have no conflicts to disclose.
\end{acknowledgments}

\appendix


\section{\label{App:classifying} Review on Classifying spaces}

We briefly review the ingredients of a classifying space so we can immediately identify that the ground state subspace $\mathcal{H}_0$ is indeed characterized by one. A principal $G$-bundle is a local trivial fibration where the charts are compatible with a group action. More concretely, for $G$ a topological group, a principal (right) $G$-bundle is the triple $\left( E , B , p \right)$, where $p: E \rightarrow B$ is a map, together with a continuous right action $r : E \times G \rightarrow E$ satisfying:
\begin{enumerate}[i)]
\item For all $x \in E$ and $g \in G$ we have $p(xg) = p(x)$.
\item For every $b \in B$ there is an open neighbourhood $V_b$ and a $G$-homeomorphism $f_b: p^{-1} \left( V_b \right) \rightarrow V_b \times G$ (where $G$ acts on $p^{-1}\left( V_b \right)$ by restriction of $r$ and acts on $V_b \times G$ by $\left( \left( v , x \right) , g \right) \rightarrow \left( v ,  xg \right)$) such that the following diagram commutes:
\begin{figure*}[!ht]
\begin{tikzcd}
p^{-1}\left( V_b \right) \arrow{r}{f_b} \arrow[swap]{d}{\left.p\right|_{V_b}} & V_b \times G \arrow{dl}{\text{pr}_1} \\%
V_b & 
\end{tikzcd}
\end{figure*}
\end{enumerate}
The action of $G$ on $p^{-1}\left( V_b \right)$ is well defined by the first condition. By the same reason, the map $p$ induces a map $p: E/G \rightarrow B$. The second condition implies that $G$ acts freely on $E$ and that the map $p: E/G \rightarrow B$ is a homeomorphism. A right $G$-space is said to be locally trivial if it has an open covering by trivial $G$-subspaces. If $E$ is locally trivial, then $E \rightarrow E/G$ is a principal $G$-bundle.

\begin{figure}[h!]
\centering
\tikzset{every picture/.style={line width=0.75pt}} 
\begin{tikzpicture}[x=0.75pt,y=0.75pt,yscale=-1,xscale=1]

\draw  [dash pattern={on 1.69pt off 2.76pt}][line width=1.5]  (34,210) .. controls (41,193.5) and (117,183.5) .. (124,210) .. controls (131,236.5) and (150,253) .. (131,279.5) .. controls (112,306) and (54,300) .. (34,270) .. controls (14,240) and (27,226.5) .. (34,210) -- cycle ;
\draw  [dash pattern={on 1.69pt off 2.76pt}][line width=1.5]  (94,221) .. controls (102,214.5) and (116,220.5) .. (143,225.5) .. controls (170,230.5) and (196,243.5) .. (184,281) .. controls (172,318.5) and (103,314.5) .. (83,284.5) .. controls (63,254.5) and (86,227.5) .. (94,221) -- cycle ;
\draw [line width=1.5]    (85,261.5) .. controls (103.84,247.37) and (92.61,265.3) .. (107,255.5) .. controls (121.39,245.7) and (105.84,265.37) .. (127,249.5) ;
\draw [line width=1.5]    (53,78) .. controls (73.09,62.93) and (114.24,120.23) .. (162.12,79.87) .. controls (210,39.5) and (250.09,96.43) .. (270,81.5) ;
\draw  [dash pattern={on 4.5pt off 4.5pt}]  (64,7.5) -- (81,152.5) ;
\draw  [dash pattern={on 4.5pt off 4.5pt}]  (99,16.5) -- (100,150.5) ;
\draw  [dash pattern={on 4.5pt off 4.5pt}]  (135,22.5) -- (125,157.5) ;
\draw  [dash pattern={on 4.5pt off 4.5pt}]  (178,35.5) -- (164,160.5) ;
\draw  [dash pattern={on 4.5pt off 4.5pt}]  (158,34.5) -- (143,162.5) ;
\draw  [dash pattern={on 4.5pt off 4.5pt}]  (213,34.5) -- (188,160.5) ;
\draw  [dash pattern={on 4.5pt off 4.5pt}]  (246,29.5) -- (213,163.5) ;
\draw [line width=1.5]    (131,95.5) .. controls (61.42,157.24) and (88.84,218.02) .. (101.26,256.19) ;
\draw [shift={(102,258.5)}, rotate = 252.47] [color={rgb, 255:red, 0; green, 0; blue, 0 }  ][line width=1.5]    (14.21,-4.28) .. controls (9.04,-1.82) and (4.3,-0.39) .. (0,0) .. controls (4.3,0.39) and (9.04,1.82) .. (14.21,4.28)   ;
\draw    (10,207.5) .. controls (18.41,199.09) and (21.91,186.96) .. (54,187.5) .. controls (86.09,188.04) and (137.95,185.92) .. (173,199.5) .. controls (208.05,213.08) and (259,234) .. (273,254.5) ;
\draw  [fill={rgb, 255:red, 0; green, 0; blue, 0 }  ,fill opacity=1 ] (126,95.5) .. controls (126,93.15) and (128.01,91.25) .. (130.5,91.25) .. controls (132.99,91.25) and (135,93.15) .. (135,95.5) .. controls (135,97.85) and (132.99,99.75) .. (130.5,99.75) .. controls (128.01,99.75) and (126,97.85) .. (126,95.5) -- cycle ;
\draw  [fill={rgb, 255:red, 0; green, 0; blue, 0 }  ,fill opacity=1 ] (97.5,258.5) .. controls (97.5,256.15) and (99.51,254.25) .. (102,254.25) .. controls (104.49,254.25) and (106.5,256.15) .. (106.5,258.5) .. controls (106.5,260.85) and (104.49,262.75) .. (102,262.75) .. controls (99.51,262.75) and (97.5,260.85) .. (97.5,258.5) -- cycle ;

\draw (33,222.4) node [anchor=north west][inner sep=0.75pt]    {$\mathnormal{U}_i$};
\draw (150,255.4) node [anchor=north west][inner sep=0.75pt]    {$U_j$};
\draw (96,219.4) node [anchor=north west][inner sep=0.75pt]    {$V_{b}$};
\draw (80,262) node [anchor=north west][inner sep=0.75pt]    {$\gamma $};
\draw (30,81.4) node [anchor=north west][inner sep=0.75pt]    {$\tilde{\gamma }$};
\draw (247,97.4) node [anchor=north west][inner sep=0.75pt]    {$p^{-1}( V_{b})$};
\draw (68,163.4) node [anchor=north west][inner sep=0.75pt]    {$p$};
\draw (116,74.4) node [anchor=north west][inner sep=0.75pt]    {$x$};
\draw (238,252.4) node [anchor=north west][inner sep=0.75pt]    {$B$};
\draw (99.5,261.9) node [anchor=north west][inner sep=0.75pt]    {$p( x)$};
\end{tikzpicture}
\caption{\label{gbundle} The picture shows a geometric representation of a principal $G$-bundle. $\gamma$ is a path in $V_b = U_i \cap U_j$ with $U_i$, $U_j$ elements of an open cover, while $\tilde{\gamma}$ is a horizontal lift on $p^{-1} \left( V_b \right)$. The straight dashed lines crossing $\tilde{\gamma}$ represent the $G$ structure of the section}
\end{figure}

A classifying space $BG$ of a topological group $G$ is then constructed from the quotient of a weakly contractible space $EG$\footnote{Weakly contractible means a topological space with all of its homotopy groups being trivial.} by a proper free action of $G$. Finally, a principal $G$-bundle $p: EG \rightarrow BG$ is called universal if it is numerable trivial and if for every numerable trivial principal $G$-bundle $q : E \rightarrow B$ there exist up to homotopy a unique bundle map from $q$ to $p$ \cite{may1999concise}.


\section{\label{app_CW}A brief on \textit{CW}-complex of the $K\left( G , n \right)$ type.}

Let write $D^n$ to be the $n$-dimensional closed unit ball and $S^{n-1}$ the $(n -1)$-dimensional unit sphere (or the boundary of $D^n$). A CW-complex is a topological space $X$ and a collection of continuous maps $\phi^n_{\alpha}: D^n \rightarrow  X$, called
characteristic maps, and $e^n_{\alpha} = \phi^n_{\alpha}\left( \text{Int}\left(D^n\right)\right)$ called $n$-cells of $X$ obeying the following  properties:
\begin{enumerate}[i)]
    \item $X= \bigcup_{n \geq 0 \, , \, \forall \alpha} e^n_{\alpha} \quad $ ,
    \item $e^n_{\alpha} \bigcap e^m_{\beta} = \empty$ unless $\left.\phi^n_{\alpha}\right|_{\text{Int} \left( D^n \right)}$ is a homeomorphism and $\alpha = \beta$ and $n=m \quad$,
    \item An $n$-skeleton $X^n = \bigsqcup_{0 \geq i \geq n \, , \, \forall \alpha} e^i_{\alpha} \quad $.
\end{enumerate}

This construction exhibits:
\begin{itemize}
    \item \textit{Closure finiteness (C)}: The closure of each cell of $X$ is contained in finitely many other cells; and
    \item \textit{Weak topology (W)}: $A \subset X$ is open or closed if and only if $A \cap X^n$ is open or closed for all $n$.
\end{itemize} 

Hence, the name \textit{CW}-complex. We refer to a subcomplex of a \textit{CW}-complex $X$ as a union $A$ of cells in $X$ such that the closure of each cell is also contained in $A$, in other words, $A$ is also a \textit{CW}-complex. A $CW$-pair $(X, A)$ is \textit{CW}-complex $X$ and subcomplex $A$.

\begin{defn}[$K\left(G ,n \right)$ \textit{CW}-complex] For a group $G$, we say a topological space $X$ is a $K \left( G , n \right)$, or \textit{Eilenberg-MacLane} space, if $\pi_i\left( X \right)$ is isomorphic to $G$ for $i = n$ and trivial otherwise.
\end{defn}

\begin{defn}[$n$-connectedness]
A space $X$ is said to be $n$-connected if $\pi_i \left( X \right) = 0$ for all $i \leq n$. Likewise, a pair $\left( X, A \right)$ is $n$-connected if $\pi_i \left( X, A \right) = 0$ for all $i \leq n$.
\end{defn}

The following are important theorems that will be stated without a proof for the sake of brevity. For a general reference we refer the reader to \cite{Hatcher}.

\begin{thm}[van Kampen]
Let $X$ be a union of path-connected open sets $A_{\alpha}$, each containing the basepoint $x_0$ with the intersections $A_{\alpha} \cap A_{\beta}$ and $A_{\alpha} \cap A_{\beta} \cap A_{\gamma}$ path connected. Let $i_{\alpha,\beta} : \pi_1\left( A_{\alpha} \cap A_{\beta} \right) \rightarrow \pi_1 \left( A_{\alpha} \right)$ be the homomorphism induced by the inclusion map of $A_{\alpha} \cap A_{\beta}$ in $A_{\alpha}$ and let $N$ be the normal subgroup generated by elements of the form $i_{\alpha,\beta} \left( \omega \right) i^{-1}_{\beta,\alpha} \left( \omega \right) \, .$ Then, $\pi_1 \left( X \right)$ is isomorphic to $\star_{\alpha} \pi_1 \left( A_{\alpha} \right) / N \,$, where $\star$ is the free product.
\end{thm}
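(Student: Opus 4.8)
The plan is to present $\pi_1(X, x_0)$ as a quotient of the free product $\star_\alpha \pi_1(A_\alpha, x_0)$. First I would introduce the homomorphism $\Phi : \star_\alpha \pi_1(A_\alpha, x_0) \to \pi_1(X, x_0)$ which on each free factor is the map induced by the inclusion $A_\alpha \hookrightarrow X$; by the universal property of the free product this specification determines $\Phi$ uniquely. For any $\omega \in \pi_1(A_\alpha \cap A_\beta, x_0)$ the elements $i_{\alpha,\beta}(\omega)$ and $i_{\beta,\alpha}(\omega)$ both push forward in $X$ to the class of the same loop, so each generator $i_{\alpha,\beta}(\omega)\, i_{\beta,\alpha}^{-1}(\omega)$ of $N$ lies in $\ker \Phi$. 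Hence $\Phi$ factors as $\bar\Phi : \star_\alpha \pi_1(A_\alpha, x_0)/N \to \pi_1(X, x_0)$, and it remains to show that $\bar\Phi$ is an isomorphism.

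For surjectivity I would take an arbitrary loop $f : [0,1] \to X$ based at $x_0$. Applying the Lebesgue number lemma to the open cover $\{f^{-1}(A_\alpha)\}$ of the compact interval yields a partition $0 = t_0 < t_1 < \cdots < t_k = 1$ with each subpath $f|_{[t_{i-1}, t_i]}$ contained in a single $A_{\alpha_i}$. Each breakpoint $f(t_i)$ lies in the path-connected intersection $A_{\alpha_i} \cap A_{\alpha_{i+1}}$, so I may choose a path $g_i$ in that intersection from $f(t_i)$ back to $x_0$; inserting these connecting paths rewrites $[f]$ as a product of loops each based at $x_0$ and lying in a single $A_{\alpha_i}$. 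This displays $[f]$ in the image of $\Phi$, so $\bar\Phi$ is onto.

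The substantive part is showing $\ker \Phi = N$, equivalently the injectivity of $\bar\Phi$. Here I would suppose a word $[f_1]\cdots[f_m]$, with each $f_j$ a loop in some $A_{\alpha_j}$, satisfies $\Phi\big([f_1]\cdots[f_m]\big) = 1$, so the concatenation $f_1 \cdots f_m$ admits a nullhomotopy $F : [0,1]^2 \to X$. Feeding the cover $\{F^{-1}(A_\alpha)\}$ of the square into the Lebesgue number lemma, I would subdivide $[0,1]^2$ into a grid of subrectangles, each mapped by $F$ into a single $A_\alpha$ and with the bottom-edge partition refining the breakpoints of $f_1 \cdots f_m$. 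Sweeping the homotopy upward one rectangle at a time, each elementary move either replaces a subloop by a homotopic one inside a fixed factor $\pi_1(A_\alpha)$, or transports a loop supported in an overlap $A_\alpha \cap A_\beta$ from the $\alpha$-factor to the $\beta$-factor, which is precisely a relation of the generating type $i_{\alpha,\beta}(\omega)\, i_{\beta,\alpha}^{-1}(\omega)$ defining $N$. The delicate point, and the step I expect to be the main obstacle, is the bookkeeping at grid vertices where three regions $A_\alpha, A_\beta, A_\gamma$ meet: to route the auxiliary connecting paths coherently through a common basepoint one needs exactly the hypothesis that the triple intersections $A_\alpha \cap A_\beta \cap A_\gamma$ are path-connected. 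Granting this, every move lies in $N$, so the word collapses to the identity in the quotient, proving $\ker \Phi \subseteq N$ and hence $\ker \Phi = N$. Together with surjectivity this gives the isomorphism $\pi_1(X, x_0) \cong \star_\alpha \pi_1(A_\alpha, x_0)/N$.
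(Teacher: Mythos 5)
The paper never proves this theorem: it appears in Appendix~\ref{app_CW} among results ``stated without a proof for the sake of brevity,'' with a citation to Hatcher, so there is no internal proof to compare against. Your proposal correctly reproduces the standard argument from that cited reference --- the homomorphism $\Phi$ from the free product factoring through the quotient by $N$, surjectivity via the Lebesgue number lemma together with path-connectedness of the double intersections $A_{\alpha} \cap A_{\beta}$, and injectivity via a grid subdivision of a nullhomotopy --- and you correctly identify the one genuinely delicate point, namely that path-connectedness of the triple intersections $A_{\alpha} \cap A_{\beta} \cap A_{\gamma}$ is exactly what is needed to route the auxiliary basepoint paths coherently at grid vertices.
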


\begin{thm}[Cellular approximation]
Given two \textit{CW}-complexes $X$ and $Y\,$,
every map $f : X \rightarrow Y$ is homotopic to a map $g : X \rightarrow Y$ with the property that $g \left( X^n \right) \subset Y^n$ for all $n$. A map $g$ with this property is called a cellular map. Moreover, $g$ may be taken to equal $f$ on any subcomplex for which $f$ is already cellular.
\end{thm}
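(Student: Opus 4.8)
The plan is to prove the statement by induction on the skeleta of $X$, deforming $f$ one skeleton at a time so that, after the $n$-th stage, the image of $X^n$ lies in $Y^n$. I would first dispose of the base case: since every nonempty CW-complex has a $0$-cell in each of its path components, each vertex of $X^0$ can be joined by a path in $Y$ to a point of $Y^0$, and because $X^0$ is discrete these paths assemble into a map $X^0 \times I \to Y$, i.e.\ a homotopy moving $f|_{X^0}$ into $Y^0$. This homotopy then extends to all of $X$ by the homotopy extension property for the CW-pair $(X, X^0)$.

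For the inductive step, assume $f$ has already been deformed so that $f(X^{n-1}) \subset Y^{n-1}$, and consider an $n$-cell of $X$ with characteristic map $\phi : D^n \to X$. The composite $f\phi : D^n \to Y$ sends $S^{n-1}$ into $f(X^{n-1}) \subset Y^{n-1}$, and since $D^n$ is compact its image meets only finitely many cells of $Y$. If that image meets a cell $e^k$ of $Y$ with $k > n$, the goal is to homotope $f\phi$, rel $S^{n-1}$, so as to push the image off $e^k$ and into $Y^{k-1}$. Performing this for the finitely many high-dimensional cells in turn, and for every $n$-cell of $X$ at once (the deformations are supported in the disjoint open cells $\phi(\operatorname{Int} D^n)$ and are stationary on their boundaries, so they glue and extend across $X$ via the homotopy extension property), yields $f(X^n) \subset Y^n$. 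The weak topology on $X$ guarantees that the concatenation of these skeletal homotopies, each constant on the previously treated skeleton, defines a continuous homotopy on all of $X$, since every point of $X$ lies in some finite skeleton and is therefore eventually fixed.

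The main obstacle, and the only genuinely nontrivial ingredient, is the \emph{deformation lemma}: a map $f\phi : D^n \to Y$ whose image meets the interior of a cell $e^k$ with $k > n$ can be homotoped rel $S^{n-1}$ to miss an interior point $p$ of $e^k$. I would establish this by the standard measure-theoretic argument. On the open set $U = (f\phi)^{-1}(\operatorname{Int} e^k) \subset \operatorname{Int} D^n$ the map takes values in $\operatorname{Int} e^k \cong \mathbb{R}^k$; after a small homotopy supported in $U$ one may assume $f\phi$ is smooth (equivalently simplicial, after a subdivision) on the compact preimage of a small closed ball $\overline{B} \subset \operatorname{Int} e^k$. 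Because the domain is $n$-dimensional with $n < k$, Sard's theorem forces the image of this smooth piece to have measure zero in $\overline{B}$, so some $p \in \operatorname{Int} \overline{B}$ lies outside it. Once $p$ is missed, the radial deformation retraction of $e^k \setminus \{p\}$ onto the frontier of $e^k$ (which lies in $Y^{k-1}$), composed with the attaching map, pushes the image off $e^k$ while fixing $S^{n-1}$, which maps into $Y^{n-1} \subset Y^{k-1}$ and hence avoids $\operatorname{Int} e^k$ entirely.

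Finally, the relative clause is handled by carrying out the whole induction rel the given subcomplex $A$: on any cell already mapped cellularly no deformation is performed, and since each homotopy above is built cell-by-cell and is stationary wherever $f$ is already cellular, the homotopy extension property for the pair $(X, A)$ lets us keep $f|_A$ fixed throughout. The resulting map $g$ is then cellular and agrees with $f$ on $A$, as required.
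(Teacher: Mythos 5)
The paper does not actually prove this statement: it appears in Appendix~\ref{app_CW} in a list of theorems explicitly ``stated without a proof for the sake of brevity,'' with the reader referred to Hatcher. So there is no in-paper argument to compare against; the relevant benchmark is the standard textbook proof that the paper defers to, and your proposal is essentially that proof. The skeleton-by-skeleton induction, the gluing of cell-wise homotopies rel boundary followed by extension via the homotopy extension property, the infinite concatenation made continuous by the weak topology, and the relative clause handled by performing the induction rel the subcomplex are all exactly the standard architecture (Hatcher's Theorem 4.8). Your key deformation lemma is the smooth/Sard variant (as in Bredon), whereas Hatcher uses a piecewise-linear approximation (his Lemma 4.10); you note the equivalence yourself, and either route is fine since a smooth (or PL) map $D^n \to \mathbb{R}^k$ with $n < k$ cannot be surjective onto a ball.

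One remark on where your sketch compresses genuine work: the sentence ``after a small homotopy supported in $U$ one may assume $f\phi$ is smooth \ldots on the compact preimage of a small closed ball'' is precisely the technical heart of the theorem, and making it rigorous requires care --- the approximation must agree with $f\phi$ outside a compact subset of $U$, the straight-line homotopy to it must stay inside $\operatorname{Int} e^k$, and one must ensure that after the homotopy the points landing in the small ball $\overline{B}$ lie in the region where the map has been smoothed (this is why Hatcher's Lemma 4.10 is stated in the somewhat elaborate polyhedral form). As a proof sketch this is acceptable since you correctly isolate it as ``the only genuinely nontrivial ingredient,'' but a referee of a full write-up would ask you to expand exactly that step. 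The remaining details (each path component of a CW-complex contains a $0$-cell; pushing off the finitely many cells of dimension $>n$ terminates because each push-off replaces a cell by cells of strictly lower dimension) are handled correctly.
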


\begin{thm}[Excision] 
Let $X$ be a \textit{CW}-complex which can be decomposed as a union of subcomplexes $A$ and $B$ with their intersection $C$ nonempty and connected. Then, if $\left( A, C \right)$ is $m$-connected and $\left( B, C \right)$ is $n$-connected, the map $\pi_i \left( A, C \right) \rightarrow \pi_i \left( X, B \right)$ induced by inclusion is an isomorphism for $i < n + m$.
\end{thm}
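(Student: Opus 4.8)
The plan is to recognise this as the homotopy excision (Blakers--Massey) theorem and to reduce the isomorphism assertion to a single geometric compression lemma. First I would use the connectivity hypotheses to normalise the cell structure: since $(A,C)$ is $m$-connected and $(B,C)$ is $n$-connected, the cellular approximation theorem stated above lets me replace $A$ by a $CW$ model over $C$ with relative cells only in dimensions $\ge m+1$, and $B$ by one with relative cells only in dimensions $\ge n+1$, without altering the pairs up to homotopy equivalence. Then $X=A\cup_C B$ has its cells of $A\setminus C$ in dimension $\ge m+1$ and its cells of $B\setminus C$ in dimension $\ge n+1$, and the object of study is the inclusion-induced homomorphism $j_*\colon \pi_i(A,C)\to\pi_i(X,B)$.

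Next I would split the claim into surjectivity and injectivity and reduce each to compressibility of a single map of pairs. For surjectivity at level $i$, a class in $\pi_i(X,B)$ is represented by $f\colon(D^i,S^{i-1})\to(X,B)$, and I must homotope $f$ through maps of pairs until its image lies in $(A,C)$, i.e. until it is pushed off the open cells of $B\setminus C$. For injectivity at level $i$, a homotopy in $(X,B)$ between two maps already valued in $(A,C)$ is a map of the pair $(D^i\times I,\,S^{i-1}\times I)$, so it too is a compression problem, now for a domain of dimension $i+1$. Everything therefore follows once I know the compression lemma: a map of a $k$-dimensional complex into $X$ can be deformed, rel the part of it already in $A$, off the interiors of the cells of $B\setminus C$ whenever $k\le m+n$. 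Surjectivity uses $k=i\le m+n$ and injectivity uses $k=i+1\le m+n$, which together give precisely the isomorphism range $i<m+n$, with surjectivity persisting at the boundary $i=m+n$.

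The main obstacle is exactly this compression lemma, and in particular securing the connectivity bound $m+n$ rather than the naive Freudenthal-type bound one obtains from pushing off the $B$-cells alone. The subtlety is that for maps of $CW$ complexes there is no smooth transversality to invoke, so the deformation must be assembled through the delicate cell-by-cell homotopies of the original Blakers--Massey construction: one finely subdivides the domain, exploits the product structure $\mathrm{Int}(D^q)\times\{*\}$ of each attached $q$-cell of $B$, and crucially absorbs the excess dimension into the $(\ge m{+}1)$-connected pair $(A,C)$, which is where the second connectivity index enters and upgrades the bound from $n$ to $m+n$. To package the bookkeeping cleanly I would instead use the cubical reformulation: viewing the square with corners $C\to A$, $C\to B$, $A\to X$, $B\to X$ as a homotopy pushout, the theorem becomes the statement that the comparison map into the homotopy pullback of $A\to X\leftarrow B$ is $(m+n)$-connected, an estimate provable by induction over an $(m+n)$-cube of spaces. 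In either formulation the reduction of the stated isomorphism to the connectivity estimate is formal, while the estimate itself is the genuine content. Finally I would flag the standard caveat that the $\pi_1$-action requires extra care when $m=0$ or $n=0$, so I would restrict to $m,n\ge 1$, which is the range relevant to the Eilenberg--MacLane constructions of this appendix.
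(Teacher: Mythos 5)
First, note what you are being compared against: the paper does not prove this statement at all. It appears in Appendix~\ref{app_CW} under the explicit disclaimer that the theorems there are ``stated without a proof for the sake of brevity,'' with the reader deferred to Hatcher. So the only meaningful comparison is against the standard proof in that reference, which is the Blakers--Massey homotopy excision argument. Your outline is structurally faithful to it: the normalisation of $(A,C)$ and $(B,C)$ to relative CW pairs with cells only in dimensions $\ge m+1$ and $\ge n+1$, the reduction of both surjectivity and injectivity of $\pi_i(A,C)\to\pi_i(X,B)$ to a compression problem (with the dimension shift $i$ versus $i+1$ correctly accounting for the range $i<m+n$ with surjectivity at $i=m+n$), and the identification of the cell-by-cell general-position argument, or equivalently the cubical homotopy pushout-to-pullback formulation, as the engine of the proof. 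This is the right roadmap.

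The gap is that the roadmap is all you give. The compression lemma with the bound $m+n$ --- deforming a map of a $k$-complex, $k\le m+n$, off the interiors of the cells of $B\setminus C$ rel the part already in $A$ --- \emph{is} the Blakers--Massey theorem; everything else in your write-up is, as you yourself say, formal. Describing it as ``finely subdivide, exploit the product structure of each cell, absorb the excess dimension into $(A,C)$'' names the ingredients of Hatcher's proof of his Theorem~4.23 but does not carry out the genuinely delicate step: the simultaneous general-position argument inside a single pair of cells $e^{p+1}\subset A$, $e^{q+1}\subset B$ (Hatcher's Lemma~4.24-style argument), where one finds points of the two cells whose preimages can be separated by a map to $I$ and where the inequality $p+q+2 > (m+n)+1$ is actually used. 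Without that, the proposal proves nothing that is not already assumed. A second, smaller point: your restriction to $m,n\ge 1$ is unnecessary and weakens the statement as given; the standard argument handles $m,n\ge 0$ (the $\pi_1$-action issues you worry about are dealt with by working with the relative compression criterion rather than with group structures), and the $m=0$ or $n=0$ cases are legitimately part of the theorem as the paper states it.
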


\begin{thm}[Whitehead] 
Let $X$ and $Y$ be \textit{CW}-complexes. If a map $f : X \rightarrow Y$ induces isomorphisms in each homotopy group, then it is a homotopy equivalence.
\end{thm}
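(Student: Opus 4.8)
The plan is to follow the classical argument that reduces Whitehead's theorem to a statement about deformation retractions of $CW$ pairs, which can then be settled one cell at a time. First I would replace the map $f : X \rightarrow Y$ by the inclusion of $X$ into its mapping cylinder $M_f = \left( X \times [0,1] \sqcup Y \right) / \left( (x,1) \sim f(x) \right)$. Since $M_f$ deformation retracts onto $Y$, the map $f$ is a homotopy equivalence if and only if the inclusion $j : X \hookrightarrow M_f$ is; moreover, after applying the Cellular approximation theorem to $f$ we may assume $f$ is cellular, so that $\left( M_f , X \right)$ is a genuine $CW$ pair. Thus it suffices to prove the following: if $\left( Z , X \right)$ is a $CW$ pair and the inclusion $X \hookrightarrow Z$ induces isomorphisms on all homotopy groups, then $X$ is a deformation retract of $Z$.

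Next I would translate the hypothesis into a statement about relative homotopy groups. The long exact sequence of the pair $\left( Z , X \right)$,
\begin{align*}
\cdots \rightarrow \pi_n\left( X \right) \rightarrow \pi_n\left( Z \right) \rightarrow \pi_n\left( Z , X \right) \rightarrow \pi_{n-1}\left( X \right) \rightarrow \cdots \quad ,
\end{align*}
together with the assumption that $\pi_n\left( X \right) \rightarrow \pi_n\left( Z \right)$ is an isomorphism for every $n$ (and every basepoint in every component), forces $\pi_n\left( Z , X \right) = 0$ for all $n$ by exactness, where in the non-simply-connected case one invokes the five lemma and the usual care with the action of $\pi_1$ on the higher relative groups. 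Hence the pair $\left( Z , X \right)$ is $\infty$-connected.

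The heart of the argument is then the \emph{compression lemma}: for a $CW$ pair $\left( Z , X \right)$ with $\pi_n\left( Z , X \right) = 0$ in every dimension $n$ for which $Z \setminus X$ contains an $n$-cell, any map $\left( D^n , S^{n-1} \right) \rightarrow \left( Z , X \right)$ is homotopic rel $S^{n-1}$ to a map into $X$. Applying this lemma to the identity map $\mathrm{id} : \left( Z , X \right) \rightarrow \left( Z , X \right)$ itself produces a homotopy rel $X$ from $\mathrm{id}_Z$ to a map $r : Z \rightarrow X$, that is, a deformation retraction of $Z$ onto $X$. The retraction is built skeleton by skeleton: assuming it already constructed on $X \cup Z^{n-1}$, for each $n$-cell of $Z \setminus X$ the vanishing of $\pi_n\left( Z , X \right)$ lets us compress the characteristic map into $X$ rel its boundary, and the Homotopy Extension Property (which holds for $CW$ pairs) glues these cellwise compressions into a single global homotopy.

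I expect the main obstacle to lie precisely in this inductive step: the careful bookkeeping needed to ensure that the partial homotopies, produced independently on each cell and in each dimension, fit together continuously across all skeleta simultaneously, so that the $HEP$-driven extension converges to a well-defined deformation retraction of the whole of $Z$. This is the point where both the cellular structure and the connectivity hypothesis are genuinely used. Once the deformation retraction is in hand, the inclusion $j : X \hookrightarrow Z$ is a homotopy equivalence, and unwinding the mapping-cylinder reduction shows that $f$ itself is one, which completes the proof.
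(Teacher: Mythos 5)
Your proposal is correct and is precisely the canonical argument (mapping cylinder reduction, vanishing of relative homotopy groups via the long exact sequence, compression lemma applied to the identity map, with the skeleton-by-skeleton homotopies glued by the homotopy extension property) found in Hatcher, which is exactly the reference the paper defers to: the paper states this theorem without proof, explicitly ``for the sake of brevity,'' and cites \cite{Hatcher}. So there is nothing to reconcile -- your argument is the proof the paper has in mind.
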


The previous theorems ensure the construction of a $K \left( G , n \right)$ \textit{CW}-complex when the base space is path-connected, which is the case for $\mathcal{H}_0$.

\section{\label{app_GNS} Some basic elements of the GNS-construction}

The Gelfand-Naimark-Segal (GNS) construction states that, every $C^*$-algebra is isometrically $*$-isomorphic to a $*$-subalgebra of bounded linear operators on some Hilbert space $\mathcal{H}$. The latter uses a positive linear functional $\varphi$ and the left regular representation $A \rightarrow \text{End}_{\mathbb{C}} \left( A \right)$ to produce a cyclic (irreducible if $\varphi$ is a \textit{pure state}) representation $\varphi : A \rightarrow \mathcal{B} \left( \mathcal{H} \right)$. 

\begin{defn}[$C^*$-algebra]
A $C^*$-algebra $A$ is a unital Banach algebra. This is,  a complete normed linear space over $\mathbb{C}$ with continuous associative multiplication $A \times A \rightarrow A$ and $\left\| 1 \right\| = 1$, along with a map $* : A \rightarrow A$ satisfying (for all $a,b \in A$):
\begin{align*}
& \left( a^* \right)^* = a \quad \quad, & \left( \alpha a + b \right)^* = \bar{\alpha} a^* + b^* \quad \alpha \in \mathbb{C} \quad, \\
& \left( a b \right)^* = b^* a^* \quad \quad , & \left\| a^* a \right\| = \left\| a \right\|^2 \quad \quad .
\end{align*}
\end{defn}

If $\mathcal{H}$ is a Hilbert space, which is our case, then $\mathcal{B} \left( \mathcal{H} \right)$ is known to be a $C^*$-algebra with $*$ beig the usual adjoint defined by $\left\langle x | T y \right\rangle = \left\langle T^* x | y \right\rangle $.

\begin{defn}[$C^*$-algebra representation]
A representation of a $C^*$-algebra is a $*$-homomorphism $\pi : A \rightarrow \mathcal{B} \left( \mathcal{H} \right)$. 
\end{defn}

Two representations $\pi_1 : A \rightarrow \mathcal{B}\left( \mathcal{H}_1 \right)$ , $\pi_2 : A \rightarrow \mathcal{B} \left( \mathcal{H}_2 \right)$, are said to be unitarily equivalent if there is a unitary operator $U : \mathcal{H}_1 \rightarrow \mathcal{H}_2$ such that $ U \pi_1 \left( a \right) = \pi_2 \left( a \right) U$ for all $a \in A$. A representation $\pi$ is said to be cyclic if there exists $e \in \mathcal{H}$ such that $\left\{ \pi \left( a \right) e : a \in A \right\}$ is dense in $\mathcal{H}$. Finally, 
a representation is said to be topologically irreducible if it has no proper, nontrivial closed invariant subspaces.

Quantum mechanically, given a bounded linear functional $\varphi : \mathcal{O} \rightarrow \mathbb{C}$ such that, for $O \in \mathcal{O}$ the set observables, we have $\varphi \left( O \right) \geq 0$ if $O \geq 0$ and $\varphi \left( \mathbb{1} \right) = 1$. The set of all \textit{gns}-states is denoted $\mathcal{O}^*_{+,1}$. Extremal points of $\mathcal{O}^*_{+,1}$ are called pure \textit{gns}-states in this paper. A \textit{gns}-ground state set space is then given by:
\begin{align}\label{gnsGSS}
K = \left\{ \varphi \in \mathcal{O}^*_{+,1} \;\; | \;\; \forall \; O \in \mathcal{O}: \varphi \left( O^{*} \delta \left( O \right)  \right) \geq 0 \right\} \quad ,
\end{align}
where we have used the time evolution (\ref{timeevolution}). The above set is known to be compact and closed in the $\text{weak}^*$ topology. The \textit{gns}-states and the states in the rest of the paper are related in the ground state space by the linear functional:
\begin{align}\label{def:func}
\varphi \left( O \right) := \text{Tr} \bra{e} \pi_{\varphi} \left( O \right) \ket{e} \quad \text{for} \quad O \in \mathcal{O} \quad \text{and} \quad e \;\; \text{a generator of} \;\; \mathcal{H}_{\varphi} \quad ,
\end{align} 
where $\pi_{\varphi}$ is an irreducible representation of the algebra $\mathcal{O}$ (trivial in the configuration basis or its representation analog in our case)\footnote{At finite temperature T, equilibrium states are defined by the KMS-condition \cite{2021arXiv210300901B}} and $\text{Tr}$ is the total trace operator. 

\begin{defn}[Unitarily Equivalent]\label{unitaryGNS}
If $\pi : A \rightarrow \mathcal{H}$ is another cyclic representation with generator $e'$
such that $\varphi \left( a \right) = \text{Tr}_e \bra{e'} \pi \left( a \right) \ket{e'}$, then $\pi$ and $\pi_\varphi$ are unitarily equivalent.
\end{defn}

\begin{lem}
It follows immediately that the bases $\left\{ \ket{\underline{\omega}_{\hat{0}}}\right\}_{\underline{\omega} \in H^0}$ and $\left\{ \ket{\underline{\hat{\alpha}}} \right\}_{\underline{\hat{\alpha}} \in H_0}$ of section \ref{subsec:gaugeeq} are unitarily equivalent in the sense of Def. \ref{unitaryGNS}.
\end{lem}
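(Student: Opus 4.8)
The plan is to deduce this lemma directly from the unitary equivalence already established in Prop.~\ref{Uequivalence}, recast in the GNS language of Def.~\ref{unitaryGNS}. First I would identify each basis with a cyclic representation of the observable algebra $\mathcal{O}$. By (\ref{groundstatesconf}) every configuration ground state is obtained from the seed state as $\ket{\underline{\omega}_{\hat{0}}} = P^{\underline{\omega}} \ket{\underline{0}_{\hat{0}}}$, so the span of $\{\ket{\underline{\omega}_{\hat{0}}}\}_{\underline{\omega} \in H^0}$ carries a cyclic representation $\pi_{\varphi}$ with generator $e = \ket{\underline{0}_{\hat{0}}}$. Dually, by (\ref{groundstatesrep}) we have $\ket{\underline{\hat{\alpha}}^0} = Q_{\underline{\hat{\alpha}}} \ket{\underline{\hat{0}}^0}$, so the representation basis carries a cyclic representation $\pi$ with generator $e' = \ket{\underline{\hat{0}}^0}$. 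Both generators are frustration-free ground states, $\varphi(\Pi^0_{\hat{0}}) = 1$, hence they define the same \emph{gns}-state functional $\varphi$ on $\mathcal{O}$.

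Second, I would exhibit the intertwining unitary required by Def.~\ref{unitaryGNS}. This is precisely the matrix $[U]^{\underline{\hat{\alpha}}}_{\underline{\omega}} = \bar{\chi}_{\underline{\hat{\alpha}}}(\omega)/|H^0|^{1/2}$ of (\ref{transfunc}), which Prop.~\ref{Uequivalence} already proved unitary by Schur orthogonality on $\mathcal{H}_0$; it carries the configuration basis onto the representation basis and, in particular, sends $e$ to $e'$. The clock/shift duality (\ref{dualPQ}) shows that $U$ conjugates the action of $P^{\underline{\omega}}$ into that of $Q_{\underline{\hat{\nu}}}$, yielding the intertwining relation $U\,\pi_{\varphi}(O) = \pi(O)\,U$ for all $O \in \mathcal{O}$ restricted to $\mathcal{H}_0$.

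Third, I would verify the defining hypothesis of Def.~\ref{unitaryGNS}, namely $\varphi(a) = \mathrm{Tr}_e \bra{e'} \pi(a) \ket{e'}$. This is essentially the content of the mixed inner product (\ref{mixedGSS}): the matrix element implementing the change of cyclic vector is exactly $\chi_{\underline{\hat{\nu}}}(\underline{\omega})/|H^0|^{1/2}$, so the trace functional computed against $e'$ reproduces the same $\varphi$ obtained from $e$. The main obstacle is purely bookkeeping: one must confirm that the restriction of the observable algebra to $\mathcal{H}_0$ acts through the finite-dimensional unitary $[U]$ and that no component of $\pi_{\varphi}$ lying outside $\mathcal{H}_0$ contributes, which follows from $[\Pi^0_{\hat{0}}, P^{\underline{\omega}}] = [\Pi^0_{\hat{0}}, Q_{\underline{\hat{\nu}}}] = 0$ in (\ref{gencomm1}). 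Once (\ref{mixedGSS}) and the unitarity of $[U]$ are in hand, the equivalence is immediate.
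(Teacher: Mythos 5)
The paper never actually writes a proof of this lemma --- it is asserted to ``follow immediately'' from the preceding machinery --- and your reconstruction assembles exactly the ingredients it points to: cyclicity of the seed states via (\ref{groundstatesconf}) and (\ref{groundstatesrep}), the unitarity statement of Prop.~\ref{Uequivalence}, the mixed inner product (\ref{mixedGSS}), the commutation relations (\ref{gencomm1}), and Def.~\ref{unitaryGNS}. In that sense your architecture is the paper's.

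However, your middle step misidentifies the intertwiner, and this makes two of your claims false as stated. The matrix $\left[ U \right]^{\underline{\hat{\alpha}}}_{\underline{\omega}} = \bar{\chi}_{\underline{\hat{\alpha}}} \left( \omega \right) / \left| H^0 \right|^{1/2}$ of (\ref{transfunc}) is a passive change of basis: since $\chi_{\underline{\hat{\alpha}}} \left( 0 \right) = 1$ for every $\underline{\hat{\alpha}}$, equation (\ref{transfunc}) gives $\ket{\underline{0}_{\hat{0}}} = \left| H^0 \right|^{-1/2} \sum_{\underline{\hat{\alpha}} \in H_0} \ket{\underline{\hat{\alpha}}^0}$, the uniform superposition --- so $\left[ U \right]$ does \emph{not} send $e = \ket{\underline{0}_{\hat{0}}}$ to $e' = \ket{\underline{\hat{0}}^0}$; and conjugating by it does not turn $P^{\underline{\omega}}$ into $Q_{\underline{\hat{\nu}}}$, it merely re-expresses one and the same operator in the other basis, which is the content of (\ref{dualPQ}). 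The unitary that Def.~\ref{unitaryGNS} actually produces is the index bijection $W : \ket{\underline{\omega}_{\hat{0}}} \mapsto \ket{\underline{\hat{\omega}}^0}$ furnished by the isomorphism $H^0 \cong H_0$ of (\ref{dual1}): this $W$ does carry $e$ to $e'$, and it satisfies $W P^{\underline{\alpha}} W^{\dagger} = Q_{\underline{\hat{\alpha}}}$, which is the intertwining relation once both bases are viewed as carrying the regular representation of the same abstract group algebra $\mathbb{C} \left[ H^0 \right] \cong \mathbb{C} \left[ H_0 \right]$, with $\underline{\alpha}$ acting as $P^{\underline{\alpha}}$ on the first and as $Q_{\underline{\hat{\alpha}}}$ on the second. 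Under that identification the two cyclic states genuinely coincide, $\bra{e} P^{\underline{\alpha}} \ket{e} = \delta \left( \underline{\alpha} , \underline{0} \right) = \bra{e'} Q_{\underline{\hat{\alpha}}} \ket{e'}$, and this computation is what your ``both seeds are frustration-free, hence they define the same gns-state'' elides: the condition $\varphi \left( \Pi^0_{\hat{0}} \right) = 1$ alone is satisfied by \emph{every} element of the set $K$ in (\ref{gnsGSS}) and does not single out one functional. With these two repairs --- replace $\left[ U \right]$ by $W$ as the intertwiner, and justify equality of the two states either by the displayed computation or by the basis-independence of the trace in (\ref{def:func}) --- your argument becomes exactly the one the paper leaves implicit.
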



\bibliographystyle{apsrev4-1}
\bibliography{sample.bib}
\end{document}